	\tikzstyle{frame} = [draw, -latex]
	\tikzstyle{lineUD} = [draw]
	\tikzstyle{line} = [draw, -latex']
	\tikzstyle{line2} = [draw, -latex', dashdotted]
	\tikzstyle{line3} = [draw, -latex', dashed]
	\tikzstyle{line3UD} = [draw, dashed]
	\tikzstyle{place} = [circle, draw=black, fill=white, thick, inner sep=2pt, minimum size=1mm]
	\tikzstyle{placeRed} = [circle, draw=red, fill=red, thick, inner sep=2pt, minimum size=1mm]
	\tikzstyle{vertex} = [circle, draw=black, fill=black, thick, inner sep=2pt, minimum size=1mm]
\newtheorem{definition}{Definition}[section]
\newtheorem{problem}{Problem}[section]
\newtheorem{remark}{Remark}[section]
\newtheorem{lemma}{Lemma}[section]
\newtheorem{theorem}{Theorem}[section]
\newtheorem{corollary}{Corollary}[section]
\newtheorem{proposition}{Proposition}
\newcommand{\mbf}{\mathbf}
\newcommand{\myemph}{\emph}
\newcommand{\myfig}{Fig.~}
\DeclareMathOperator{\rank}{{rank}}
\DeclareMathOperator{\vspan}{{span}}
\DeclareMathOperator{\vnull}{{Null}}
\DeclarePairedDelimiter \card{\lvert}{\rvert}
\DeclarePairedDelimiter \norm{\|}{\|}
\providecommand\given{}
\newcommand\SetSymbol[1][]{\nonscript\;#1\vert\nonscript\;
\mathopen{}\allowbreak}
\DeclarePairedDelimiterX\set[1]\{\}{%
\renewcommand\given{\SetSymbol[\delimsize]}
#1
}
\title{\LARGE \bf
Hybrid distance-angle rigidity theory with signed constraints \\ and its applications to formation shape control
}
\author{Seong-Ho Kwon, Zhiyong Sun,
        Brian D. O. Anderson, and Hyo-Sung Ahn
\thanks{S.-H. Kwon and H.-S. Ahn are with the School of Mechanical Engineering, Gwangju Institute of Science and Technology, Gwangju 61005, Korea (e-mail: $\{$seongho; hyosung$\}$@gist.ac.kr).}
\thanks{Z. Sun is with Department of Automatic Control, Lund University, Sweden (e-mail:  sun.zhiyong.cn@gmail.com,
zhiyong.sun@control.lth.se).}
\thanks{B. D. O. Anderson is with Research School of Engineering, The Australian National University, Canberra, ACT 2601, Australia (e-mail: Brian.Anderson@anu.edu.au).}
}
\begin{document}
\pagestyle{plain}
\maketitle

\begin{abstract}
In this paper, we develop a hybrid distance-angle rigidity theory that involves  heterogeneous distances (or unsigned angles) and signed constraints for a framework in the 2-D and 3-D space. The new rigidity theory determines a (locally) unique formation shape up to a translation and a rotation by a set of distance and signed constraints, or up to a translation, a rotation and additionally a scaling factor by a set of unsigned angle and signed constraints. Under this new rigidity theory,  we have a clue to resolve the flip (or reflection) and flex ambiguity for a target formation with hybrid distance-angle constraints. In particular, we can completely eliminate the ambiguity issues if formations are under a specific construction which is called \myemph{signed Henneberg construction} in this paper. We then apply the rigidity theory to formation shape control and develop a gradient-based control system that guarantees an exponential convergence close to a desired formation by inter-neighbor measurements. Several numerical simulations on formation shape control with hybrid distance-angle constraints are provided  to validate the theoretical results. 
\end{abstract}

\begin{IEEEkeywords}
Hybrid distance-angle rigidity theory; signed constraint; shape ambiguity; formation shape control.
\end{IEEEkeywords}

\section{Introduction}
\label{Sec:intro}
There has been active research on multi-agent formation control and network localization via graph rigidity theories in the literature\cite{sun2016exponential,zhao2016bearing,jing2018weak, kwon2019generalized}, where the rigidity theories have particularly been a significant part in characterizing unique formation shapes and formation controller design. Briefly speaking, we can classify the rigidity theories according to different types of constraints: the rigidity theory based on distance constraints is traditionally well known and is termed \myemph{(distance) rigidity theory} \cite{asimow1979rigidity,roth1981rigid,C:Hendrickson:SIAM1992}; the rigidity theory with bearing constraints is called  \myemph{bearing rigidity theory} \cite{zhao2016bearing,zhao2017translational}; the rigidity theory with a set of distance and angle constraints has been studied in \cite{jing2018weak, kwon2018infinitesimal_inp.,kwon2018generalized,bishop2015distributed} and is termed \myemph{weak rigidity theory}; the authors in \cite{jing2019angle,kwon2018generalized,chen2019angle} introduce the rigidity theory with only angle constraints, where the works in \cite{kwon2018generalized,jing2019angle} are based on relative position measurements and the work \cite{chen2019angle} introduces \myemph{angle rigidity theory} that considers formation control under only angle measurements.
In recent years, generalized rigidity theories have been studied in \cite{stacey2017role,kwon2019generalized}, where the authors in \cite{stacey2017role} proposed a generalized framework of rigidity theory with heterogeneous states of all agents on non-Euclidean spaces and general relative state constraints, and the authors in \cite{kwon2019generalized} presented the generalized rigidity with heterogeneous constraints. Furthermore,  bearing-ratio-of-distance rigidity theory  is introduced in \cite{cao2019bearing} by using bearing and ratio-of-distance constraints.
 Lastly, clique rigidity theory has been studied in \cite{sakurama2018multi}, where a formation is decomposed into maximal cliques to check whether the formation is unique or not.

As is well known, formation control based on the (distance) rigidity theory has widely been studied in the literature \cite{krick2009stabilisation,cortes2009global,sun2014finite,sun2016exponential}. 
However, although the (distance) rigidity theory has been a powerful tool in the field of formation control, there have always been challenging tasks of resolving shape ambiguities such as \myemph{flip ambiguity} and \myemph{flex ambiguity}\footnote{The notion of flex is used in a different sense from that of flex/flexible of a framework. That is, a framework which can cause a flex ambiguity does not mean that it is not rigid.} \cite{anderson2008rigid}. 
The main reason why these issues occur is that the (distance) rigidity theory cannot distinguish formation shapes under flip and flex transformations with only distance constraints. These ambiguity phenomena cannot guarantee a convergence of formations to a desired shape specified by only inter-agent distances.
The ambiguity issues also remain in  formation controls based on the weak rigidity and angle rigidity theories \cite{jing2018weak, kwon2018infinitesimal_inp.,kwon2018generalized,bishop2015distributed,jing2019angle}. 

To solve the flip and flex ambiguity in formation control, there have been several recent efforts in the literature. 
In terms of the bearing rigidity theory, the flip and flex ambiguity is not relevant due to the vector bearing information under a common coordinate system, but the bearing-based formation control demands additional implementation requirements such as orientation alignment in local coordinate frames, relative orientation measurement, etc.
In the work \cite{chen2019angle}, the authors introduce an angle-based formation control with signed angles; however, they only provide the 2-dimensional special results confined to specific formation shapes.
The authors in \cite{sun2017distributed} consider not only  formation control with distance constraints but also orientation control, but the proposed controllers also require a global coordinate system or orientation alignments of some agents. 
Another recent paper \cite{anderson2017formation} utilizes signed area constraints in dealing with flip and flex ambiguity, but the work only considers special cases of 3- and 4-agent formation control in the 2-dimensional space.

In this paper, we will focus on the problem of  analyzing general formations without a global coordinate system and orientation alignments in a distributed way. Moreover, we use heterogeneous scalar constraints involving geometric functions such as distance, angle and signed constraints. Heterogeneous constraints are ubiquitous in the field of multi-agent coordination, partly motivated by the application of   different sensing information and capabilities among individual agents, and distinct geometric variables in characterizing a coordination task  \cite{santos2018coverage, liu2019further}. To address the formation shape control problem with heterogeneous distance and angle constraints,  we will develop a new theory on graph rigidity. The new theory can determine a locally unique formation shape\footnote{A locally unique formation shape means that the shape cannot be deformed by an infinitesimal smooth motion of the formation except a translation and a rotation of entire formation by a set of distance and signed constraints (or a translation, a rotation and a scaling of entire formation by a set of unsigned angle and signed constraints).} and provide a clue to partially resolve  the flip and flex ambiguity in formation control. 
In particular, if formations are under a specific construction, which is termed \myemph{signed Henneberg construction} in this paper, then we can completely eliminate the ambiguity issues. The proposed framework and the new rigidity theory are applicable to a wide area of multi-agent system tasks, e.g., network localization and formation control. As an application example, we explore formation control design with heterogeneous scalar constraints based on the proposed hybrid distance-angle rigidity theory and then show a locally exponential convergence of target formations without leading to ambiguity phenomena.

We summarize the main results of our work as follows. We firstly propose a \myemph{hybrid distance-angle rigidity theory} with signed constraints to address the flip and flex ambiguity; the hybrid distance-angle rigidity includes two novel concepts, namely, \myemph{infinitesimal rigidity with distance and signed constraints (IRDS)} and \myemph{infinitesimal rigidity with angle and signed constraints (IRAS)}. The new rigidity theory is used to characterize a (locally) unique formation up to a translation and a rotation for the IRDS, and up to a translation, a rotation and a scaling factor for the IRAS. 
To completely eliminate the flip and flex ambiguity, we consider and develop a sequential technique termed \myemph{signed Henneberg construction}.
Then, by employing the proposed hybrid rigidity theory, we design formation control systems with a gradient flow law in the 2- and 3-dimension space, 
where the systems guarantee a local convergence to a target formation. Moreover, the controllers do not require a global coordinate system, that is, all individual agents only use local coordinate systems and inter-neighbor relative measurements in implementing the distributed formation control law. 

The remaining parts of the paper are organized as follows. Section \ref{Sec:Pre_Not} provides notations and preliminaries. Then, Section \ref{Sec:inf_rigid} proposes new concepts to characterize a unique formation with a set of distance (or angle) and signed constraints. With the new concepts, we propose control laws in Section \ref{Sec:controller}. Finally, we provide numerical simulations and conclusions in Section \ref{Sec:simul}  and Section \ref{Sec:conclusions}, respectively. 


\section{Preliminaries and motivation}
\label{Sec:Pre_Not}
This section briefly reviews formation control laws using \myemph{infinitesimal (distance) rigidity} \cite{krick2009stabilisation,cortes2009global,sun2014finite,sun2016exponential} and  \myemph{infinitesimal weak rigidity} \cite{kwon2018infinitesimal_inp.,kwon2018generalized}. 
We then explain  the flip and flex ambiguity   and discuss how we eliminate these formation ambiguities to motivate the new rigidity concepts.

\subsection{Formation graph and notations} \label{Inf_rigid_pre}
We firstly introduce preliminaries and notations  to characterize formations with graphs.
The symbols $\norm{v}$ and $\card{\mathcal{S}}$ denote the Euclidean norm of a vector $v \in \mathbb{R}^{d}$ and the cardinality of a set $\mathcal{S}$, respectively. The symbols $\vnull(\cdot)$ and $\rank (\cdot)$ denote the null space and rank of a matrix, respectively.
Then, undirected graph $\mathcal{G}$ is denoted by $\mathcal{G} = (\mathcal{V},\mathcal{E}, \mathcal{A})$, where the symbols $\mathcal{V}$, $\mathcal{E}$ and $\mathcal{A}$ denote a vertex set, an edge set and an angle set, respectively. In the graph $\mathcal{G}$, it is defined that $\mathcal{V}=\set{1,2,...,n}$, $\mathcal{E} \subseteq \mathcal{V} \times \mathcal{V}$ with $m_d=\card{\mathcal{E}}$ and 
{\color{black} $\mathcal{A} = \set{(i,j,k) \in \mathcal{V}^3 \given i,j,k \in \mathcal{V} \text{  to characterize } \theta_{jk}^{i} \in [0,\pi]}$ with $m_a=\card{\mathcal{A}}$}
where $\theta_{jk}^{i}\in[0,\pi]$ denotes a subtended angle by a pair of two edges $(i,j)$ and $(i,k)$. 
 Since we only consider undirected graphs, it is assumed that $(i,j) = (j,i)$ for all $i,j \in \mathcal{V}$. The oriented incidence matrix\footnote{The oriented incidence matrix of an undirected graph is the incidence matrix of arbitrarily chosen orientation of the graph without violating the undirected graph assumption.} $H'$ is defined as $H'=[h'_{ij}]$, where $h'_{ij}=-1$ if $i$-th edge  leaves vertex $j$, $h'_{ij}=1$ if $i$-th edge sinks at vertex $j$, and $h'_{ij}=0$ otherwise.
A realization $p$ is then defined as $p = [p_{1}^\top,...,p_{n}^\top]^\top \in \mathbb{R}^{dn}$ where $p_i \in \mathbb{R}^{d}$ denotes a position vector for each $i \in \mathcal{V}$. Moreover, we can define a \myemph{framework} as $(\mathcal{G},p)$ associated with an undirected graph $\mathcal{G}$ and a realization $p$. In this paper, a framework is equivalently called a formation, and we assume $d=2,3$.

We here present some definitions and notations on graphs and formations which will frequently be used  in this paper. 
We define the relative position vector $z_{ij}$ as $z_{ij} \triangleq p_{i} - p_{j}$, and order the relative position vectors such that $z_g \triangleq z_{ij}$ for $(i,j) \in \mathcal{E}, \forall g\in \{ 1,..., m_d\}$. Similarly, we define the ordered cosines as $A_h \triangleq \cos{\theta_{jk}^{i}}$ for $(i,j,k) \in \mathcal{A},\forall h\in \{ 1,..., m_a\}$. To denote distance constraints, we use the notation $d_{ij}$ defined as $d_{ij} \triangleq \norm{z_{ij}}$ for $(i,j)\in \mathcal{E}, i \neq j$. We denote the orthogonal projection matrix as $P_x=I_d - \frac{x x^\top}{\norm{x}^2}$ for nonzero vector $x\in\mathbb{R}^{d}$.

\subsection{Gradient formation control based on infinitesimal (distance) rigidity and  infinitesimal weak rigidity} \label{Inf_weak_rigid_pre}
In this subsection, we review formation control systems with a gradient flow law \cite{sakurama2015distributed,krick2009stabilisation,bishop2015distributed,park2014stability} based on the concepts of infinitesimal (distance) rigidity and  infinitesimal weak rigidity. 

Let us first consider the following distance and angle constraints. 
\begin{align}
\bold{d}(p) &= \left[\ldots, \frac{1}{2}\norm{z_g}^2, \ldots \right]^\top, \quad g\in \{ 1,..., m_d\}, \\
\bold{a}(p) &= \left[\ldots, A_h, \ldots \right]^\top, \quad h\in \{ 1,..., m_a\}.
\end{align}
Also, the formation errors are defined as
\begin{align}
e_d(p)&=\left[\bold{d}(p)^\top \, - \bold{d}^{*\top} \right]^\top,\\
e_w(p)&=\left[2\bold{d}(p)^\top, \bold{a}(p)^\top \right]^\top - \left[2\bold{d}^{*\top}, \bold{a}^{*\top} \right]^\top.
\end{align}
where $\bold{d}^*$ and $\bold{a}^*$ denote two vectors constituting desired distance constraints $\frac{1}{2}\norm{z^*_g}^2$ of $\frac{1}{2}\norm{z_g}^2$ and desired angle constraints $A^*_h$ of $A_h$, respectively, as follows
\begin{align} 
\bold{d}^* &= \left[\ldots, \frac{1}{2}\norm{z^*_g}^2, \ldots \right]^\top,\\
\bold{a}^* &= \left[\ldots, A^*_h, \ldots \right]^\top.
\end{align}
Then, the formation control systems are represented by
\begin{align}
\dot{p}=u &\triangleq -\left(\nabla \frac{1}{2}\norm{{e_d}(p)}^2 \right)^\top=-R_D^\top(p) e_d(p) \quad \label{control_law_origin01}
\end{align}
or
\begin{align}
\dot{p}=u &\triangleq -\left(\nabla \frac{1}{2}\norm{{e_w}(p)}^2 \right)^\top= -R_W^\top(p) e_w(p).\label{control_law_origin02}
\end{align}
where $R_D$ and $R_W$ denote the \myemph{rigidity matrix} (see e.g., \cite{asimow1979rigidity,roth1981rigid,C:Hendrickson:SIAM1992}) and \myemph{weak rigidity matrix} (e.g., \cite{kwon2018infinitesimal_inp.,kwon2018generalized}), respectively. Note that in this paper we assume that the dynamics of each agent $i$ for $i\in \mathcal{V}$ is modeled by a single integrator. 
The equation \eqref{control_law_origin01} has been a typical control law using the (distance) rigidity theory with distance constraints. Here, the rigidity matrix plays an important role in determining whether or not a formation shape is unique up to a rigid-body translation and a rigid-body rotation of an entire formation.
The following theorem shows the necessary and sufficient condition for the infinitesimal (distance) rigidity determined by the rank of the associated rigidity matrix.
\begin{theorem}[\cite{asimow1979rigidity,C:Hendrickson:SIAM1992}]
A framework $(\mathcal{G},p)$ with $n \ge d$ is infinitesimally (distance) rigid in $\mathbb{R}^{d}$ if and only if  $\rank(R_D(p))=dn-d(d + 1)/2$.  
\end{theorem}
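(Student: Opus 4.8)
The plan is to identify the null space of $R_D(p)$ with the space of infinitesimal motions of the framework, and then to compare its dimension with that of the trivial (rigid-body) motions. First I would note that, from the definition of $\bold{d}(p)$, the row of $R_D(p)$ associated with an edge $g=(i,j)\in\mathcal{E}$ acts on a velocity $\delta p=[\delta p_1^\top,\ldots,\delta p_n^\top]^\top\in\mathbb{R}^{dn}$ as $z_{ij}^\top(\delta p_i-\delta p_j)$. Hence $\delta p\in\vnull(R_D(p))$ if and only if $z_{ij}^\top(\delta p_i-\delta p_j)=0$ for every edge, i.e. $\delta p$ preserves all prescribed squared distances to first order. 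This is exactly the definition of an infinitesimal motion, so $\vnull(R_D(p))$ is precisely the space of infinitesimal flexes.

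Next I would characterize the trivial motions, those induced by rigid-body transformations. Differentiating a one-parameter family $p_i(t)=R(t)p_i+b(t)$ with $R(0)=I_d$ and $R(t)\in SO(d)$ at $t=0$ gives $\delta p_i=Sp_i+b$, where $S=\dot R(0)$ is skew-symmetric and $b=\dot b(0)\in\mathbb{R}^d$. Such a motion always lies in $\vnull(R_D(p))$, since $z_{ij}^\top(\delta p_i-\delta p_j)=z_{ij}^\top S z_{ij}=0$ for any skew-symmetric $S$. Thus the space $\mathcal{T}$ of trivial motions is a subspace of $\vnull(R_D(p))$ for every framework, which immediately yields $\rank(R_D(p))=dn-\dim\vnull(R_D(p))\le dn-\dim\mathcal{T}$.

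The crux of the argument is to show that $\dim\mathcal{T}=d(d+1)/2$. I would view $\mathcal{T}$ as the image of the linear map $(S,b)\mapsto(Sp_i+b)_{i=1}^n$ from skew-symmetric matrices and translations into $\mathbb{R}^{dn}$, and prove this map is injective under the hypothesis. Its kernel consists of pairs with $Sp_i+b=0$ for all $i$; subtracting two such equations gives $Sz_{ij}=0$ for all $i,j$, so $S$ annihilates $\vspan\set{z_{ij}}$. Here the nondegeneracy of the configuration enters: when the $n\ge d$ points affinely span $\mathbb{R}^d$ (the relevant non-degeneracy, and itself a necessary condition for infinitesimal rigidity), the relative vectors span $\mathbb{R}^d$, forcing $S=0$ and then $b=0$. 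Since the skew-symmetric $d\times d$ matrices have dimension $d(d-1)/2$ and the translations contribute $d$, injectivity gives $\dim\mathcal{T}=d(d-1)/2+d=d(d+1)/2$. I expect this dimension count, together with pinning down that the spanning condition is the operative reading of $n\ge d$, to be the main obstacle, since for degenerate sub-dimensional configurations the count drops and the stated rank value becomes unattainable.

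Finally I would assemble the equivalence. By definition the framework is infinitesimally rigid exactly when every infinitesimal motion is trivial, i.e. $\vnull(R_D(p))=\mathcal{T}$. Because $\mathcal{T}\subseteq\vnull(R_D(p))$ always holds, this equality is equivalent to $\dim\vnull(R_D(p))=\dim\mathcal{T}=d(d+1)/2$. Applying the rank--nullity relation $\dim\vnull(R_D(p))=dn-\rank(R_D(p))$ then converts this into $\rank(R_D(p))=dn-d(d+1)/2$, which is the asserted necessary and sufficient condition.
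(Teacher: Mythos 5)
The paper does not actually prove this statement---it is imported verbatim from the cited references---so your proposal can only be compared with the standard Asimow--Roth argument, and that is precisely the route you take: identify $\vnull(R_D(p))$ with the space of infinitesimal flexes, observe that the rigid-body motions $\mathcal{T}$ always lie inside it, compute $\dim\mathcal{T}$, and finish by rank--nullity. The first two steps and the final assembly are correct as written.

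The genuine gap sits exactly where you located the crux: the dimension count for $\mathcal{T}$ and your reading of the hypothesis $n\ge d$. You take the operative nondegeneracy to be ``the points affinely span $\mathbb{R}^d$'' and assert that this is necessary for infinitesimal rigidity. Both claims fail in the boundary case $n=d$, which the theorem explicitly covers: $d$ points can never affinely span $\mathbb{R}^d$ (that requires $d+1$ points), yet a single edge in $\mathbb{R}^2$ ($n=d=2$) and a triangle in $\mathbb{R}^3$ ($n=d=3$) are infinitesimally rigid and satisfy $\rank(R_D(p))=dn-d(d+1)/2$ (namely $1=4-3$, resp.\ $3=9-6$). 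The correct condition under which $\dim\mathcal{T}=d(d+1)/2$ is that the affine hull of the points has dimension at least $d-1$; the injectivity argument then needs the stronger linear-algebra fact that a skew-symmetric $S$ vanishing on a $(d-1)$-dimensional subspace $V$ must vanish identically (for $u\in V^{\perp}$ and $v\in V$ one has $v^\top S u=-u^\top S v=0$ and $u^\top S u=0$, hence $Su=0$), not merely that $S$ vanishing on all of $\mathbb{R}^d$ is zero. Furthermore, for genuinely degenerate configurations---affine hull of dimension $k\le d-2$---your closing remark that the stated rank value ``becomes unattainable'' is doing real logical work and is unproven: there $\dim\mathcal{T}<d(d+1)/2$, so rank--nullity alone no longer ties the rank condition to rigidity, and the equivalence must be verified vacuously by showing both sides fail, i.e.\ that nontrivial flexes always exist and that $\rank(R_D(p))\le nk-k(k+1)/2<dn-d(d+1)/2$ whenever $k\le d-2$ and $n\ge d$. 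Without these repairs the equivalence is not established for all frameworks allowed by the hypothesis.
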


On the other hand, several works in \cite{kwon2018infinitesimal_inp.,kwon2018generalized} recently introduced another rigidity concept termed infinitesimal weak rigidity by using a set of distance constraints and additional  angle constraints as follows.
\begin{theorem}[\cite{kwon2018infinitesimal_inp.,kwon2018generalized}]
A framework $(\mathcal{G},p)$ with $n \ge 3$ is infinitesimally weakly rigid in $\mathbb{R}^{d}$ if and only if  $\rank(R_W(p))=dn-d(d + 1)/2$ (resp. $\rank(R_W(p))=dn-(d^2+d + 2)/2$) in the case of $\mathcal{E} \neq \emptyset$ (resp. $\mathcal{E} = \emptyset$).   
\end{theorem}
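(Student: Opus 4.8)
The plan is to characterize infinitesimal weak rigidity through the null space of the weak rigidity matrix $R_W(p)$ and then invoke the rank-nullity theorem. Recall that an infinitesimal motion of the framework is a vector $\delta p \in \mathbb{R}^{dn}$ satisfying $R_W(p)\delta p = 0$, i.e., $\delta p \in \vnull(R_W(p))$; since the rows of $R_W(p)$ are the gradients of the constraint functions $\frac{1}{2}\norm{z_g}^2$ and $A_h$, such a $\delta p$ is precisely a first-order variation of the positions that preserves all distance and angle constraints to first order. By definition, the framework is infinitesimally weakly rigid exactly when every infinitesimal motion is \myemph{trivial}, so the first task is to identify the space of trivial motions and compute its dimension.

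First I would collect the candidate trivial motions. Translations $\delta p_i = \tau$ (common $\tau \in \mathbb{R}^{d}$) and infinitesimal rotations $\delta p_i = S p_i$ with $S = -S^\top$ skew-symmetric leave every relative vector $z_{ij}$ either unchanged (translations) or rotated (rotations), hence preserve both $\norm{z_g}$ and every cosine $A_h$ to first order; a direct computation then shows these vectors lie in $\vnull(R_W(p))$ for both constraint types. The crucial distinction is the uniform scaling motion $\delta p_i = c\, p_i$, $c \in \mathbb{R}$: since $A_h = z_{ij}^\top z_{ik}/(\norm{z_{ij}}\norm{z_{ik}})$ is invariant under scaling, this motion annihilates every angle row, but it sends $\frac{1}{2}\norm{z_g}^2$ to its derivative $\norm{z_g}^2 \neq 0$, so it annihilates a distance row only when no such row exists. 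This is exactly why scaling is a trivial motion when $\mathcal{E} = \emptyset$ but not when $\mathcal{E} \neq \emptyset$, and it is what produces the two cases in the statement.

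Next I would count dimensions. The translations contribute $d$ independent directions and the infinitesimal rotations contribute $\dim\{S : S = -S^\top\} = \binom{d}{2} = d(d-1)/2$ directions, and I would verify these are linearly independent using $n \ge 3$ together with the standing non-degeneracy of $p$ (the realization not being confined to a lower-dimensional affine subspace), giving a trivial-motion space of dimension $d + d(d-1)/2 = d(d+1)/2$ when $\mathcal{E} \neq \emptyset$, and one additional dimension from scaling, i.e. $d(d+1)/2 + 1 = (d^2 + d + 2)/2$, when $\mathcal{E} = \emptyset$. Infinitesimal weak rigidity then holds if and only if $\vnull(R_W(p))$ equals this trivial-motion space, equivalently $\dim \vnull(R_W(p)) = d(d+1)/2$ (resp. $(d^2+d+2)/2$); since $R_W(p)$ has $dn$ columns, the rank-nullity theorem converts this into the two stated rank conditions.

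The main obstacle I anticipate is the linear independence of the trivial motions together with the proof that they \emph{exhaust} $\vnull(R_W(p))$ under the rank condition. The independence of translations and rotations can fail for degenerate configurations (for instance collinear points in the plane), so the argument must invoke the non-degeneracy hypothesis carried by the framework; and establishing the converse inclusion—that no nontrivial infinitesimal motion can satisfy $R_W(p)\delta p = 0$ once the rank is maximal—requires relating the distance rows of $R_W(p)$ to the ordinary rigidity matrix $R_D$ and controlling the angle gradients simultaneously, which is the technically delicate part of the equivalence.
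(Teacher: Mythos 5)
Your proposal is correct and follows essentially the same route as the paper and the works it cites: identify the trivial motions (translations and rotations, plus scaling exactly when $\mathcal{E}=\emptyset$, since scaling kills the angle rows but not the distance rows), show they span a subspace of $\vnull(R_W(p))$ of dimension $d(d+1)/2$ (resp. $(d^2+d+2)/2$) under non-degeneracy, and convert the equality $\vnull(R_W(p)) = \text{trivial motions}$ into the stated rank conditions via rank--nullity. This is precisely the lemma-plus-rank-count structure the paper itself uses for its own IRDS and IRAS theorems in Sections III and IV, including the same implicit reliance on a non-degenerate realization for the dimension count of the trivial-motion space.
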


Compared to the concept of the infinitesimal (distance) rigidity, infinitesimal weak rigidity displays a particular characteristic in the case of $\mathcal{E} = \emptyset$ where the equality $\mathcal{E} = \emptyset$ means that a formation has no   distance constraints, but only has angle constraints. When $\mathcal{E} = \emptyset$, a formation based on the infinitesimal weak rigidity could be uniquely determined up to a rigid-body translation, a rigid-body rotation and additionally a scaling factor of an entire formation. That is,  the entire formation has one more degree of freedom in the case of $\mathcal{E} = \emptyset$. 

The two rigidity theories reviewed above however cannot address the flip and flex ambiguity of a target formation shape, which motivates us to define a new rigidity theory under hybrid distance and angle constraints. 
 In the next subsection, we briefly present the flip and flex ambiguity with examples.
 
\subsection{Flip and flex ambiguity} \label{Subsec:motivation}
\begin{figure}[]
\centering
\subfigure[]{ \label{ex01_a}
\includegraphics[height = 3.8cm]{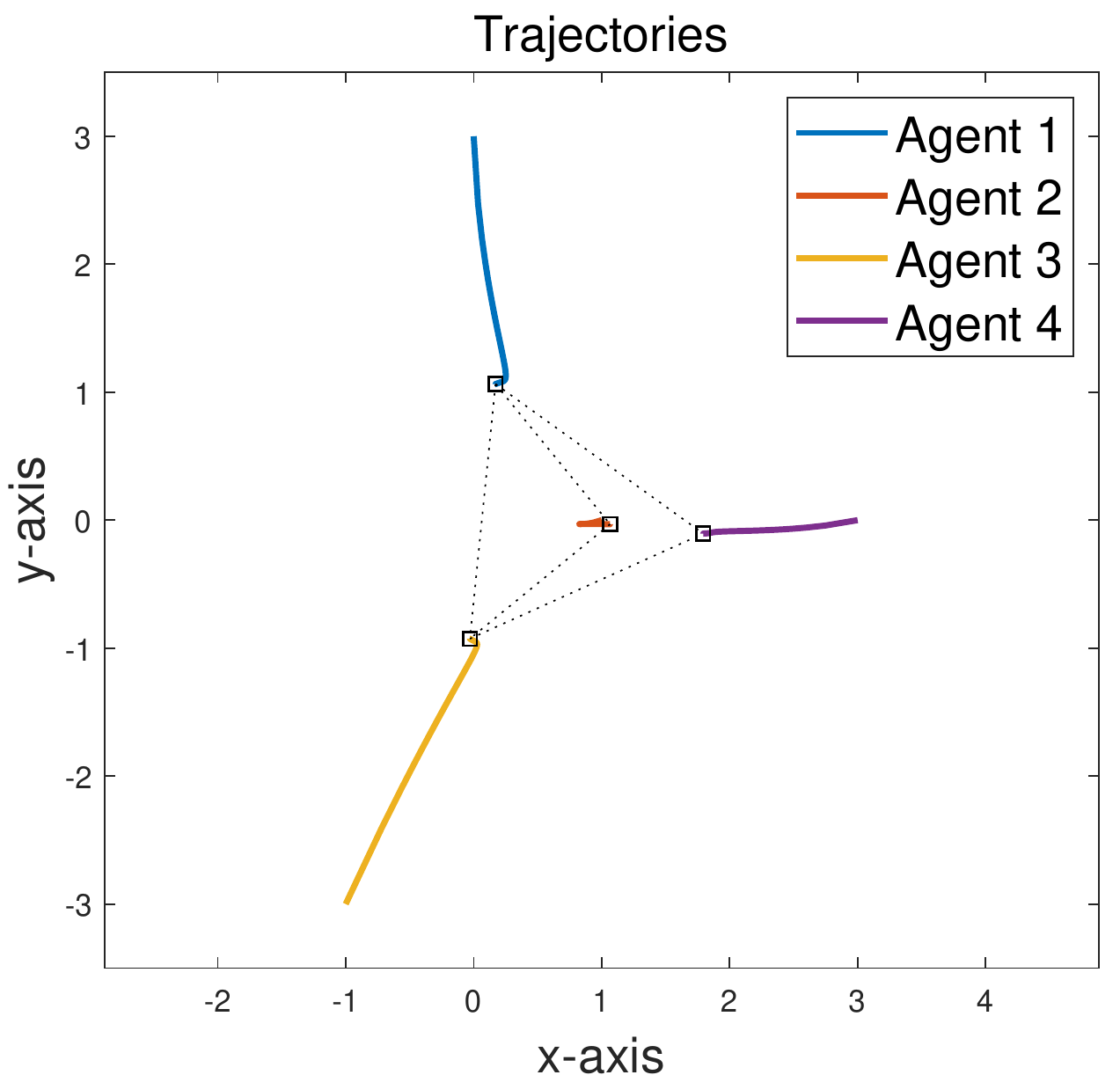}  
}~
\subfigure[]{ \label{ex01_b}
\includegraphics[height = 3.8cm]{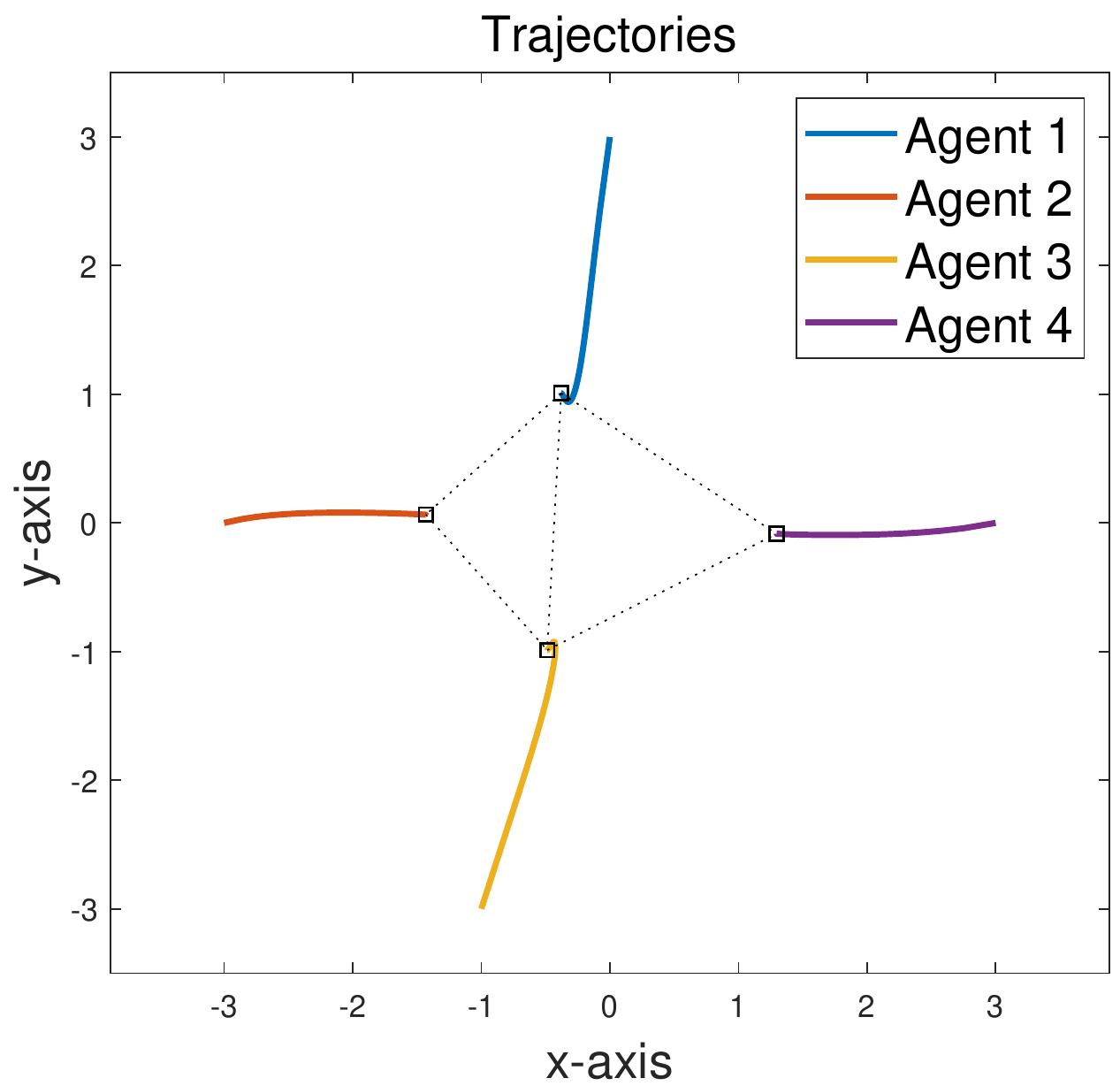}
}
\caption{Simulations on the flip ambiguity under the controller \eqref{control_law_origin01}, where the initial formations for \myfig\ref{ex01_a} and \myfig\ref{ex01_b} are different but the desired distance constraints of both formations are the same. The symbol  $\Box$ denotes the final position of each agent.} \label{example_flip_ambiguity}
\end{figure}

\begin{figure}[]
\centering
\subfigure[]{ \label{ex01_c}
\includegraphics[height = 3.8cm]{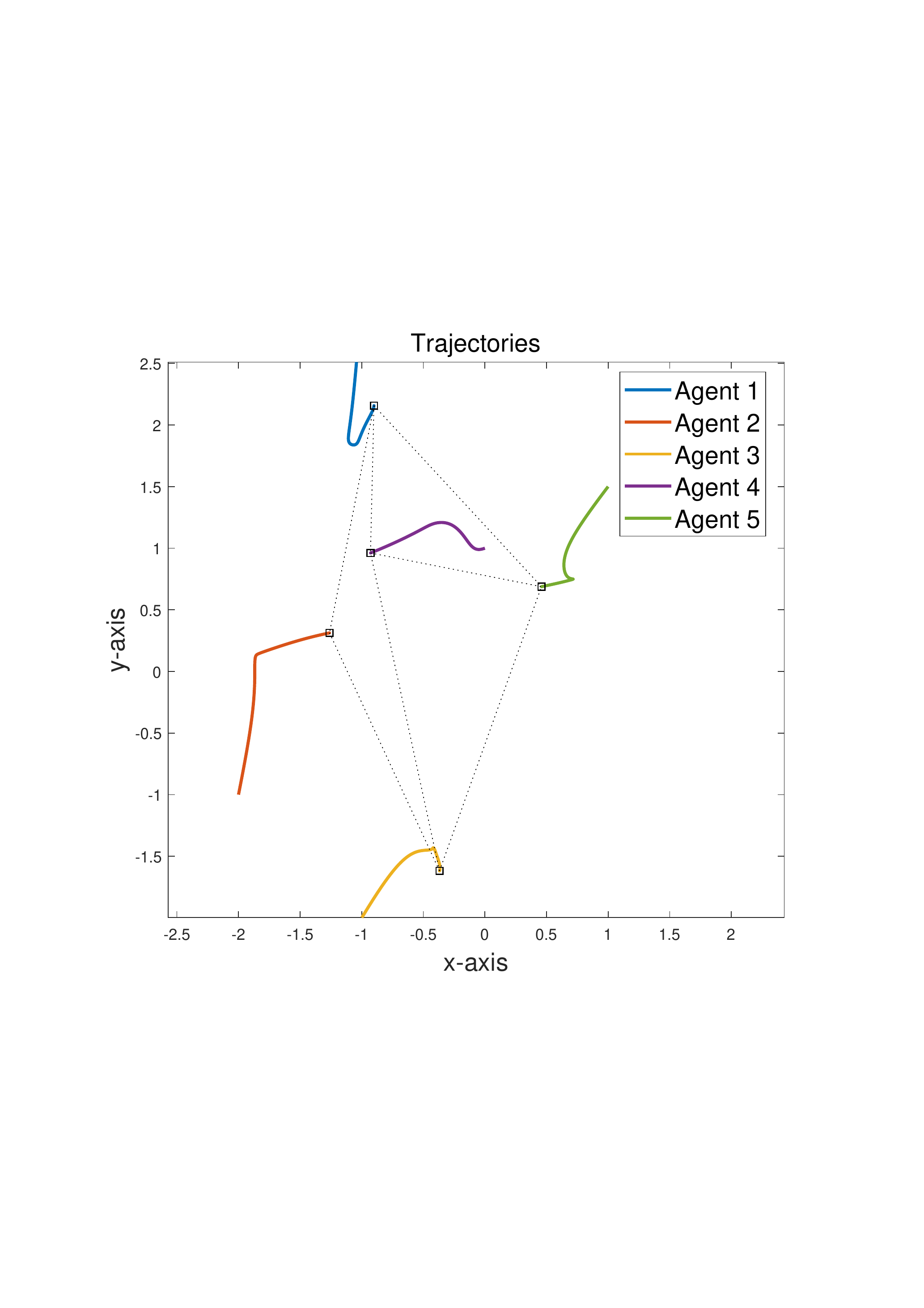}
}~
\subfigure[]{ \label{ex01_d}
\includegraphics[height = 3.8cm]{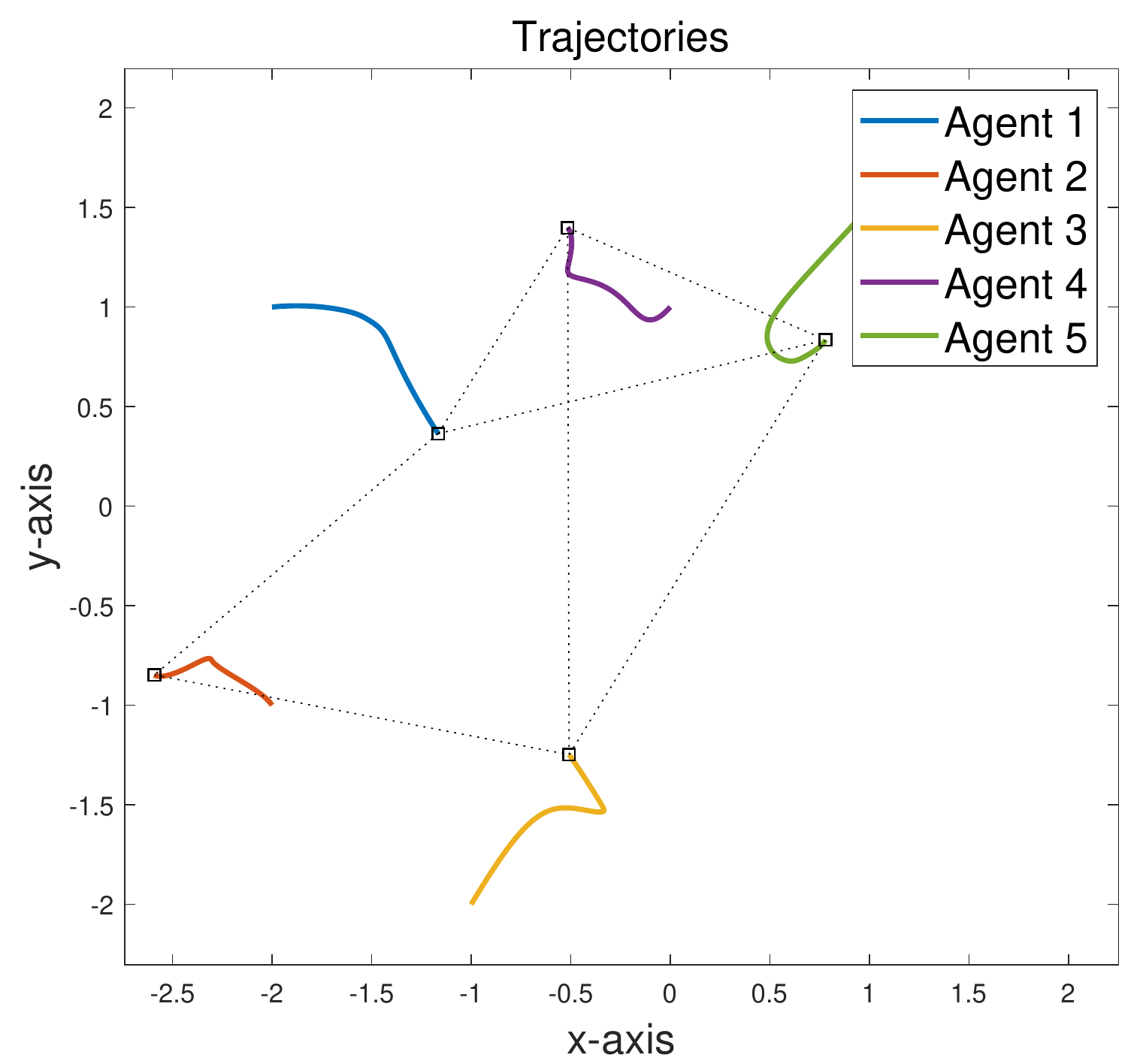}
}
\caption{Simulations on the flex ambiguity under the controller \eqref{control_law_origin01}, where the initial formations for \myfig\ref{ex01_c} and \myfig\ref{ex01_d} are different but the desired distance constraints of both formations are the same. The symbol  $\Box$ denotes the final position of each agent.} \label{example_flex_ambiguity}
\end{figure}
Let us consider a formation control system based on the infinitesimal (distance) rigidity under the control law \eqref{control_law_origin01} as shown in \myfig\ref{example_flip_ambiguity}.
The final formation in \myfig\ref{ex01_a} can be transformed to the final formation in \myfig\ref{ex01_b} by a reflection while the two formations are infinitesimally (distance) rigid and all distance constraints are satisfied. In other words, agent 2 in \myfig\ref{ex01_a} can be flipped over edge (1,3) while the two formations share the same distance constraints.  This kind of ambiguity is called the flip ambiguity.
Note that we also call a reflection of an entire framework in $\mathbb{R}^{2}$ a flip ambiguity since such reflection still causes an another issue called \myemph{ordering issue}.

As a matter of fact, the ordering issue cannot be avoided based on the traditional (distance) rigidity and weak rigidity theories even if each agent can measure relative information of all agents in a system. Let us again consider the simulations in \myfig\ref{example_flip_ambiguity}. Based on the traditional (distance) rigidity theory, we need to add one more distance constraint to the edge (2, 4) to eliminate the flip issue. However, this approach also causes another issue, i.e., the ordering issue of agents; for example, the counterclockwise order of agents in \myfig\ref{Fig:ex_solution_02} is 1, 4, 3 and 2 while it is 1, 2, 3 and 4 in \myfig\ref{Fig:ex_solution_01}.

Next, we consider formation flex ambiguity under distance constraints. An example of flex ambiguity is shown in  \myfig\ref{example_flex_ambiguity}, where the desired distance constraints in \myfig\ref{ex01_c} and \myfig\ref{ex01_d} are the same and the initial positions of agents in both figures are also the same except that of agent 1.
However, the final formations in \myfig\ref{ex01_c} and \myfig\ref{ex01_d} are different while the final formation shape of the triangles composed of agents 3, 4 and 5 in both figures is the same.
In other words, if the distance constraint between agents 1 and 4 in \myfig\ref{ex01_c} is removed then the formation in \myfig\ref{ex01_c} can flex. Also,   agent 1 can move on a circle centered at   agent 5 while other constraints are satisfied. Then, the formation in \myfig\ref{ex01_c} can be the same as the formation in \myfig\ref{ex01_d}. This ambiguity is called the flex ambiguity.

The flip and flex ambiguity shows that we may not achieve the same target formation even if an infinitesimally (distance) rigid or infinitesimally weakly rigid formation satisfies all of the desired constraints.
Such an ambiguity usually occurs when we try to achieve a unique formation shape  with the minimal number of constraints; however, not every infinitesimally (distance) rigid or infinitesimally weakly rigid formation with the minimal number of constraints causes the flip and flex ambiguity. 
\subsection{{\color{black}Idea} on how to avoid the flip and flex ambiguity} \label{Subsec:motivation02}
In this subsection, 
we briefly discuss how we avoid the ambiguities of formations, where the key idea is to include signed constraints such as signed angle constraints   and signed volume constraints.
We suggest to add signed constraints to deal with the flip and flex issue. For example, in \myfig\ref{Fig:ex_solution_03} the formation shape will be unique and does not cause the flip ambiguity including the ordering issue by adding two extra constraints involving signed angles. Similarly, signed constraints can also deal with the flex ambiguity shown in \myfig\ref{example_flex_ambiguity}. We will present a detailed analysis on avoiding flip/flex ambiguity in $\mathbb{R}^{2}$ by signed angle constraints in Section \ref{Sec:inf_rigid} and will propose a new hybrid distance-angle rigidity theory. 

In the case of the 3-dimensional space, we are not sure whether there only exits flip and flex ambiguity in characterizing formations. Thus, in $\mathbb{R}^{3}$, we only address the flip and flex ambiguity by using signed-volume constraints, which will be introduced in Section \ref{Sec:inf_rigid_3D}.
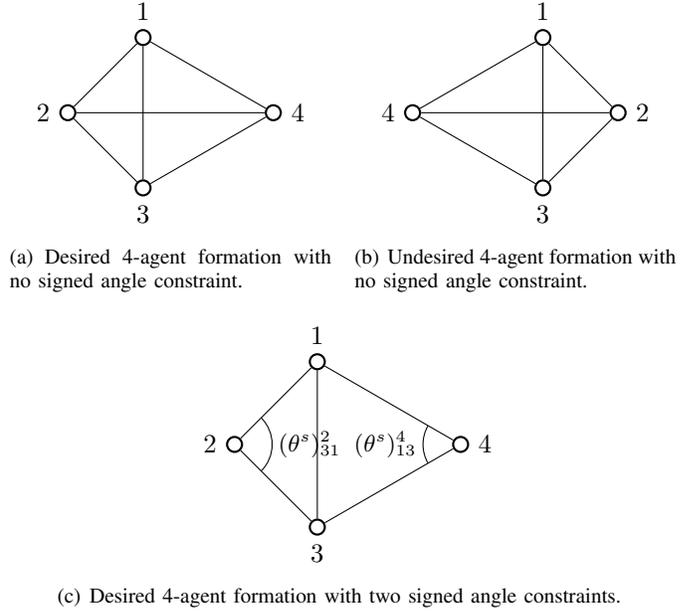
\begin{figure}[]
\centering
\subfigure[Desired 4-agent formation with no signed angle constraint.]{ \label{Fig:ex_solution_01}
\,\,\begin{tikzpicture}[scale=1.0]
\node[place] (node1) at (-1,0) [label=left:$2$] {};
\node[place] (node2) at (0,-1) [label=below:$3$] {};
\node[place] (node3) at (1.732,0) [label=right:$4$] {};
\node[place] (node4) at (0,1) [label=above:$1$] {};

\draw[lineUD] (node1)  --  (node2);
\draw[lineUD] (node2)  -- (node3);
\draw[lineUD] (node2)  -- (node4);
\draw[lineUD] (node1)  -- (node4);
\draw[lineUD] (node3)  -- (node4);
\draw[lineUD] (node1)  -- (node3);
\end{tikzpicture}\,\,
}\,\,~
\subfigure[Undesired 4-agent formation with no signed angle constraint.]{ \label{Fig:ex_solution_02}
\,\,\begin{tikzpicture}[scale=1.0]
\node[place] (node1) at (1,0) [label=right:$2$] {};
\node[place] (node2) at (0,-1) [label=below:$3$] {};
\node[place] (node3) at (-1.732,0) [label=left:$4$] {};
\node[place] (node4) at (0,1) [label=above:$1$] {};

\draw[lineUD] (node1)  --  (node2);
\draw[lineUD] (node2)  -- (node3);
\draw[lineUD] (node2)  -- (node4);
\draw[lineUD] (node1)  -- (node4);
\draw[lineUD] (node3)  -- (node4);
\draw[lineUD] (node1)  -- (node3);
\end{tikzpicture}\,\,
}
\subfigure[Desired 4-agent formation with two signed angle constraints. 
]{ \label{Fig:ex_solution_03}
\qquad\qquad\qquad\begin{tikzpicture}[scale=1.10]
\node[place] (node1) at (-1,0) [label=left:$2$] {};
\node[place] (node2) at (0,-1) [label=below:$3$] {};
\node[place] (node3) at (1.732,0) [label=right:$4$] {};
\node[place] (node4) at (0,1) [label=above:$1$] {};

\draw[lineUD] (node1)  --  (node2);
\draw[lineUD] (node2)  -- (node3);
\draw[lineUD] (node2)  -- (node4);
\draw[lineUD] (node1)  -- (node4);
\draw[lineUD] (node3)  -- (node4);

\pic [draw, -, "\small$(\theta^s)^2_{31}$", angle eccentricity=2.0] {angle = node2--node1--node4};
\pic [draw, -, "\small$(\theta^s)^4_{13}$", angle eccentricity=2.0] {angle = node4--node3--node2};
\end{tikzpicture}\qquad\qquad\qquad
}
\caption{Example of formation constraints  to avoid the flip ambiguity in $\mathbb{R}^2$ where the solid edges denote distance constraints, and $(\theta^s)^i_{jk}$ denotes a signed angle constraint, i.e., $(\theta^s)^i_{jk}=\angle ijk$ from edge $(i,j)$ to edge $(i,k)$ and $(\theta^s)^i_{jk}\neq (\theta^s)^i_{kj}$.}\label{Fig:ex_solution}
\end{figure} 

\section{Hybrid distance-angle rigidity theory in $\mathbb{R}^{2}$: infinitesimal rigidity with distance (or angle) and signed-angle constraints} \label{Sec:inf_rigid}
In this section, we introduce two modified concepts from the infinitesimal (distance) rigidity and  infinitesimal weak rigidity in $\mathbb{R}^{2}$. One is an infinitesimal rigidity theory with distance and signed constraints; the other one is a concept of infinitesimal rigidity with angle and signed constraints. These two new rigidity theories are related to defining and achieving a unique formation up to rigid-body translations and rotations (and additionally scaling in the case of infinitesimal rigidity with angle and signed constraints) of a formation, with the aim to resolve flip and flex formation ambiguities. 

Firstly, we define a signed constraint defined by bearing vectors as follows
\begin{align}  \label{eq:main_signed_area}
\bar{S}_{ijk}= \text{det} \begin{bmatrix} \frac{p_j-p_i}{\norm{p_j-p_i}} & \frac{p_k-p_i}{\norm{p_k-p_i}} \end{bmatrix}, \quad (i,j),(i,k) \in \mathcal{E}.
\end{align}
The concept of signed constraints is motivated by \cite{anderson2017formation} even though the authors in \cite{anderson2017formation}
utilized a signed area constraint widely known as 
$
S_{ijk}= \text{det} \begin{bmatrix} p_j-p_i & p_k-p_i \end{bmatrix}, (i,j),(i,k) \in \mathcal{E}.
$
The signed constraint $\bar{S}_{ijk}$ defined in \eqref{eq:main_signed_area} is equivalently expressed as
\begin{align} \label{eq_signed_02}
\bar{S}_{ijk}&= \text{det} \begin{bmatrix} \frac{p_j-p_i}{\norm{p_j-p_i}} & \frac{p_k-p_i}{\norm{p_k-p_i}} \end{bmatrix} \nonumber \\
&=\left\|\frac{p_j-p_i}{\norm{p_j-p_i}}\right\| \left\|\frac{p_k-p_i}{\norm{p_k-p_i}}\right\|  \sin((\theta^s)^i_{jk}) \nonumber\\ 
&= \sin((\theta^s)^i_{jk}) , \quad (i,j),(i,k) \in \mathcal{E},
\end{align}
where $(\theta^s)_{jk}^{i} \in [0,2\pi)$ denotes a signed angle.
We call the constraint in \eqref{eq:main_signed_area} a \myemph{signed angle constraint} due to the fact that  $\sin(\theta^i_{jk})=- \sin(\pi+\theta^i_{jk}), \forall \theta_{jk}^{i} \in [0,\pi]$.
\begin{remark}
The rigidity concepts studied in \cite{jing2018weak,kwon2018generalized} only include unsigned angle constraints. Specifically, the authors in \cite{jing2018weak} use inner products; the rigidity concept in  \cite{kwon2018generalized} involves cosine functions as the unsigned angle constraints. Thus, signed angle constraints differ from unsigned angle constraints.
\end{remark}

Moreover, with  the \myemph{perpendicular matrix} $J$ defined as $J \triangleq \begin{bmatrix} 0 & 1 \\ -1 & 0\end{bmatrix}$, the determinant function $\text{det} \begin{bmatrix} \frac{p_j-p_i}{\norm{p_j-p_i}} & \frac{p_k-p_i}{\norm{p_k-p_i}} \end{bmatrix}$ can be represented by $\left(\frac{{z}_{ji}^\top}{\norm{z_{ji}}}J\frac{z_{ki}}{\norm{z_{ki}}}\right)$.
Thus,  it holds that $\left(\frac{{z}_{ji}^\top}{\norm{z_{ji}}}J\frac{z_{ki}}{\norm{z_{ki}}}\right)=\sin((\theta^s)^i_{jk})$ for all pairs of incident  edges  $(i,j),(i,k) \in \mathcal{E}$.
Then, we define a set $\mathcal{S}$ for signed angle constraints as 
$\mathcal{S} = \set{(i,j,k)  \in \mathcal{V}^3 \given i,j,k \in \mathcal{V}\text{  to characterize } \bar{S}_{ijk}}$
with $m_s=\card{\mathcal{S}}$.
 Moreover, we occasionally use an ordered set $\bar{S}_{ijk}$ such that $\bar{S}_w \triangleq \bar{S}_{ijk},\forall w\in \{ 1,..., m_s\}$. 
The signed angle constraint is also used as a constraint in a framework together with distance or angle constraints, and is denoted by $(\theta^s)^i_{jk}, (i,j,k) \in \mathcal{S}$.
 For a clear understanding, we provide  an example by using a framework with $\mathcal{S}$ in \myfig\ref{Fig:ex_set}. Then, we are ready to state the main concepts of this paper in the next two subsections.

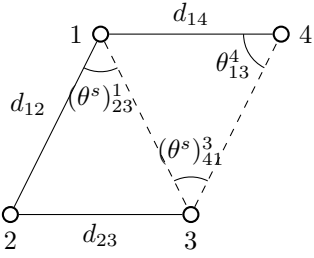
\begin{figure}[]
\centering
\begin{tikzpicture}[scale=1.2]
\node[place] (node1) at (0,2) [label=left:$1$] {};
\node[place] (node2) at (-1,0) [label=below:$2$] {};
\node[place] (node3) at (1,0) [label=below:$3$] {};
\node[place] (node4) at (2,2) [label=right:$4$] {};

\draw[lineUD] (node1)  -- node [above left] {$d_{12}$} (node2);
\draw[lineUD] (node2)  -- node [below] {$d_{23}$} (node3);
\draw[dashed] (node1)  -- (node3);
\draw[lineUD] (node1)  -- node [above] {$d_{14}$} (node4);
\draw[dashed] (node3)  -- (node4);

\pic [draw, -, "$\theta^4_{13}$", angle eccentricity=1.5] {angle = node1--node4--node3};
\pic [draw, -, "$(\theta^s)^1_{23}$", angle eccentricity=1.7] {angle = node2--node1--node3};
\pic [draw, -, "$(\theta^s)^3_{41}$", angle eccentricity=1.7] {angle = node4--node3--node1};
\end{tikzpicture}
\caption{Example of a framework $(\mathcal{G},p)$ with $\mathcal{S}$: 4-agent formation with $\mathcal{E}=\set{(1,2),(1,4),(2,3)}$, $\mathcal{A}=\set{(4,1,3)}$ and $\mathcal{S}=\set{(1,2,3),(3,4,1)}$, which means that the formation has
3 distance, 1 angle and 2 signed angle constraints. The dashed lines denote virtual edges; that is, they are not distance constraints.}\label{Fig:ex_set}
\end{figure} 
\subsection{Infinitesimal rigidity with distance and signed constraints} \label{Inf_rigid_DS}
In this subsection, we consider a framework with hybrid distance and signed angle constraints to develop a new rigidity theory. 
We first define the following \myemph{distance-sign rigidity function} $F^s_D: \chi \rightarrow \mathbb{R}^{(m_d+m_s)}$ for some properly defined $\chi\subset\mathbb{R}^{2n}$:
\begin{align}\label{Eq:sign_rigidity_fn}
F^s_{D}(p) \triangleq& \left[\frac{1}{2} \norm{z_{1}}^2, ... ,\frac{1}{2}\norm{z_{m_d}}^2, \bar{S}_1, ... ,\bar{S}_{m_s} \right]^\top.
\end{align}
Let us consider the following distance and signed angle constraints.
\begin{align} 
\frac{1}{2}\norm{p_{i}-p_{j}}^2&=constant, \quad\forall (i,j) \in \mathcal{E},  \label{eq:inf_const01}\\
\bar{S}_{ijk}&= constant, \quad \forall (i,j,k) \in \mathcal{S}, \label{eq:inf_const02}
\end{align}
where the constraints should be consistent and physically realizable. Then, the time derivative of \eqref{eq:inf_const01} is given by
\begin{align} 
\left(p_i-p_j \right)^\top \left(v_i-v_j \right) =0, \quad\forall (i,j) \in \mathcal{E}, \label{eq:inf_motions01}
\end{align}
and, with utilizing the perpendicular matrix $J$, the time derivative of \eqref{eq:inf_const02}  is given as
\begin{align} 
&\frac{(z_{ji})^\top }{\norm{z_{ji}}}J\frac{P_{z_{ki}}}{\norm{z_{ki}}} (v_k-v_i)
+\frac{(z_{ki})^\top }{\norm{z_{ki}}} J^\top \frac{P_{z_{ji}}}{\norm{z_{ji}}} (v_j-v_i)\nonumber \\
&=0, \quad \forall (i,j,k) \in \mathcal{S}, \label{eq:inf_motions02}
\end{align}
where the symbol $v_i$, $v_j$ and $v_k$ denote infinitesimal motions of $p_i$, $p_j$ and $p_k$, respectively.
Moreover, the equations \eqref{eq:inf_motions01} and \eqref{eq:inf_motions02} can be written by a matrix form as follows
\begin{align} \label{eq:matrix_infini}
\dot{F}^s_D=\frac{\partial F^s_{D}(p)}{\partial p}\dot{p}=\begin{bmatrix}
     \frac{\partial \mbf{D}}{\partial p}\\ \\
	\frac{\partial \mbf{S}}{\partial p}
  \end{bmatrix}\dot{p}=0,
\end{align}  
where $\mbf{D} = \left[\frac{1}{2}\norm{z_{1}}^2, ... ,\frac{1}{2}\norm{z_{m_d}}^2\right]^\top \in \mathbb{R}^{m_d}$ and $\mbf{S} = \left[\bar{S}_1,...,\bar{S}_{m_s}\right]^\top \in \mathbb{R}^{m_s}$.
We here define a \myemph{distance-sign rigidity matrix} as the Jacobian of the distance-sign rigidity function as below
\begin{equation}\label{dist_sign_rigidity_matrix}
R^s_{D}(p) \triangleq \frac{\partial F^s_{D}(p)}{\partial p} = 
\begin{bmatrix}
     \frac{\partial \mbf{D}}{\partial p}\\ \\
	\frac{\partial \mbf{S}}{\partial p}
  \end{bmatrix}
  =\begin{bmatrix}
     R_D\\ \\
	R_S
  \end{bmatrix} \in 
\mathbb{R}^{(m_d+m_s) \times 2n}.
\end{equation}

We now state a fundamental concept with the result \eqref{eq:matrix_infini}.
If infinitesimal motions of an entire framework, i.e. $\dot{p}$, correspond to only rigid-body translations and rigid-body rotations of the given framework, then we call the motions \myemph{trivial}. Furthermore, we define a concept of \myemph{infinitesimal rigidity with distance and signed constraints (IRDS)} if all infinitesimal motions of the framework are trivial.
To characterize  the concept of the IRDS in an algebraic manner, we prove the following lemma and theorem.
\begin{lemma}\label{Lem:null_rank condition_dist}
It holds that $\vspan\set{\mathds{1}_n\otimes I_2,(I_n\otimes J)p} \subseteq \vnull(R^s_D(p))$ and $\rank\left(R^s_D(p)\right) \leq 2n-3$ for a framework $(\mathcal{G},p)$ with $\mathcal{S}$ in $\mathbb{R}^2$. 
\end{lemma}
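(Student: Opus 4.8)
The plan is to establish the two assertions separately, exploiting the block structure $R^s_D(p) = \begin{bmatrix} R_D \\ R_S \end{bmatrix}$ recorded in \eqref{dist_sign_rigidity_matrix}. Since stacking rows means $\vnull(R^s_D(p)) = \vnull(R_D) \cap \vnull(R_S)$, it suffices to show that each of the three candidate trivial-motion vectors — the two columns of $\mathds{1}_n\otimes I_2$ (translations) and the vector $(I_n\otimes J)p$ (infinitesimal rotation) — is annihilated by \emph{both} blocks. Because $R_D v=0$ and $R_S v=0$ are exactly the row-wise conditions \eqref{eq:inf_motions01} and \eqref{eq:inf_motions02} with $v=\dot p$, the verification reduces to plugging each candidate into those identities; note this is a per-row argument, so it holds for an arbitrary constraint set and does not use $\card{\mathcal{E}}$ or $\card{\mathcal{S}}$. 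The rank bound will then follow from rank-nullity once I confirm the three vectors are linearly independent.

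The translation case is immediate: taking $v_i=c$ for a common $c\in\mathbb{R}^2$ makes every difference $v_i-v_j$, $v_k-v_i$, $v_j-v_i$ vanish, so both \eqref{eq:inf_motions01} and \eqref{eq:inf_motions02} hold trivially. For the distance block $R_D$ under the rotation $v_i=Jp_i$, I substitute $v_i-v_j=J(p_i-p_j)=Jz_{ij}$ into \eqref{eq:inf_motions01} and invoke the skew-symmetry of $J$, giving $z_{ij}^\top J z_{ij}=0$; this is the classical argument placing trivial motions in $\vnull(R_D)$.

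The substantive step is showing that the rotation $v_i=Jp_i$ annihilates the sign block $R_S$, i.e. that \eqref{eq:inf_motions02} holds for this choice. I would substitute $v_k-v_i=Jz_{ki}$ and $v_j-v_i=Jz_{ji}$, and then use the projection identity $P_{z}(Jz)=Jz$, valid in $\mathbb{R}^2$ because $Jz\perp z$; this lets me discard the orthogonal projection matrices $P_{z_{ki}}$ and $P_{z_{ji}}$. The first term then collapses via $J^2=-I$ to $-z_{ji}^\top z_{ki}/(\norm{z_{ji}}\norm{z_{ki}})$, while the second collapses via $J^\top J=I$ to $+z_{ki}^\top z_{ji}/(\norm{z_{ki}}\norm{z_{ji}})$; these are equal in value and opposite in sign, so their sum is zero. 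This yields $(I_n\otimes J)p\in\vnull(R_S)$ and completes the inclusion $\vspan\set{\mathds{1}_n\otimes I_2,(I_n\otimes J)p}\subseteq\vnull(R^s_D(p))$.

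For the rank bound I argue that the three spanning vectors are linearly independent whenever the realization is nondegenerate (the $p_i$ are not all coincident). The two translation columns are manifestly independent, and $(I_n\otimes J)p$ can equal a linear combination of them only if $Jp_i$ is the \emph{same} constant vector for every $i$, which forces all $p_i$ to coincide. Hence $\dim\vnull(R^s_D(p))\geq 3$, and rank-nullity gives $\rank(R^s_D(p))\leq 2n-3$. I expect the main obstacle to be the algebraic bookkeeping in the rotation-versus-$R_S$ computation — specifically recognizing that $P_{z}(Jz)=Jz$ and combining it with $J^2=-I$ and $J^\top J=I$ so that the two terms of \eqref{eq:inf_motions02} exactly cancel; the translation case, the distance block, and the independence check are all routine by comparison.
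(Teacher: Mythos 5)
Your proof is correct, and it rests on the same two pillars as the paper's: the decomposition $\vnull(R^s_D)=\vnull(R_D)\cap\vnull(R_S)$, and the cancellation $-z_{ji}^\top z_{ki}+z_{ki}^\top z_{ji}=0$ that annihilates the signed-constraint rows under the rotation field. The execution, however, is genuinely more elementary than the paper's. The paper factors $R_S=\frac{\partial \mathbf{S}}{\partial z'}(H'\otimes I_2)$ through an induced edge set $\mathcal{E}'=\mathcal{E}\cup\mathcal{E}_s$ and its incidence matrix $H'$, obtains the translations from $\vnull(H'\otimes I_2)$, handles the rotation via the Kronecker identity $(H'\otimes I_2)(I_n\otimes J)p=\big[(Jz'_1)^\top,\ldots,(Jz'_{m_t})^\top\big]^\top$ combined with the appendix Lemma \ref{Lem:Jaco_sine}, and imports the statement $\vspan\set{\mathds{1}_n\otimes I_2,(I_n\otimes J)p}\subseteq\vnull(R_D)$ from a cited reference. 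You bypass all of that bookkeeping by substituting the candidate motions directly into the row-wise identities \eqref{eq:inf_motions01}--\eqref{eq:inf_motions02} (which the paper itself certifies as equivalent to $R^s_D v=0$ via \eqref{eq:matrix_infini}), using only $P_z(Jz)=Jz$, $J^\top J=I$ and $J^2=-I$, and you prove the $R_D$ part yourself by skew-symmetry rather than by citation; the algebra checks out. What the paper's heavier factorization buys is reusability: the identical incidence-matrix/chain-rule template is repeated almost verbatim in the 3-D Lemmas \ref{Lem:null_rank_3d_dist} and \ref{Lem:null_rank_3d_angle}, where the direct per-row substitution would be messier. One place where you are more careful than the paper: you observe that passing from the inclusion to $\rank(R^s_D(p))\le 2n-3$ requires the three null vectors to be linearly independent, which holds precisely when the $p_i$ are not all coincident — a nondegeneracy hypothesis the paper leaves implicit in the choice of the domain $\chi$ of the rigidity function.
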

\begin{proof}
First, we define an edge set as
\begin{align}
\mathcal{E}'=\mathcal{E} \cup \mathcal{E}_s
\end{align}
with $m_t=\card{\mathcal{E}'}$  where $\mathcal{E}_s=\set{(i,j),(i,k) \given (i,j,k) \in \mathcal{S}}$.
Also, we define an induced graph $\mathcal{G}'=(\mathcal{V},\mathcal{E}')$. Moreover, the newly ordered relative position is defined as $z'_s \triangleq z_{ij}$ for $(i,j) \in \mathcal{E}'$ and $s \in \{ 1,..., m_t\}$.
Then, we remark that $R_S=\frac{\partial \mbf{S}}{\partial p}=\frac{\partial \mbf{S}}{\partial z'} \frac{\partial z'}{\partial p}=\frac{\partial \mbf{S}}{\partial z'}\bar{H}'$ where $\bar{H}'=(H'\otimes I_2)$ and $z' = \big[{z'}_{1}^\top,...,{z'}_{m_t}^\top \big]^\top = (H'\otimes I_2)p \in \mathbb{R}^{2m_d}$. 
Since $H'$ is an incidence matrix, it is obvious that $\vspan\set{\mathds{1}_n\otimes \begin{bmatrix} 1 & 0  \end{bmatrix}^\top} \subseteq \vnull(H'\otimes I_2)$ and $\vspan\set{\mathds{1}_n\otimes \begin{bmatrix} 0 & 1  \end{bmatrix}^\top} \subseteq \vnull(H'\otimes I_2)$ \cite{mesbahi2010graph}. Thus, it holds that $\vspan\set{\mathds{1}_n\otimes I_2} \subseteq \vnull(R_S(p))$.

Next, let us consider a $w$-th element in $\mbf{S}$, i.e., $\bar{S}_w, w\in \{ 1,..., m_s\}$. 
Then, we check the following equation with the fact that there exist $i$ and $j$ such that $\bar{S}_w= \left(\frac{{z'}_i^\top}{\norm{z'_i}}J\frac{z'_j}{\norm{z'_j}}\right)$ and $i,j \in \{ 1,..., m_t\}$:
\begin{flalign}
\frac{\partial \bar{S}_w}{\partial p} (I_n\otimes J)p
&=\frac{\partial \bar{S}_w}{\partial z'} \frac{\partial z'}{\partial p} (I_n\otimes J)p \nonumber \\
&=\frac{\partial \bar{S}_w}{\partial z'} \bar{H'}  (I_n\otimes J)p \nonumber \\
&=\frac{\partial}{\partial z'}\left( \frac{{z'}_i^\top}{\norm{z'_i}}J\frac{z'_j}{\norm{z'_j}}\right) \bar{H}'  (I_n\otimes J)p. \label{temp_eq01}
\end{flalign} 
We remark that $\bar{H}'(I_n\otimes J)p=\left[(J z'_1)^\top, \cdots, (J z'_{m_t})^\top \right]^\top$ since the following holds true
\begin{flalign}\label{eq:ro_z}
\bar{H'}(I_n\otimes J)p &= (H'\otimes I_3)(I_n\otimes J)p  =(H'\otimes J)p \nonumber \\
&= (I_{m_t} H'\otimes J I_3)p \nonumber \\
&= (I_{m_t} \otimes J)(H'\otimes I_3)p \nonumber \\
&=(I_{m_t} \otimes J)z'= \begin{bmatrix}
J z'_1 \\
\vdots\\
J z'_{m_t}
\end{bmatrix}.
\end{flalign}
Therefore, the equation \eqref{temp_eq01} can be calculated as
\begin{flalign}
&\frac{\partial}{\partial z'}\left( \frac{{z'}_i^\top}{\norm{z'_i}}J\frac{z'_j}{\norm{z'_j}}\right)\begin{bmatrix}
J z'_1 \\
\vdots\\
J z'_{m_t}
\end{bmatrix} & \nonumber \\
&=\frac{\partial}{\partial z'_i} \left(\frac{{z'}_i^\top}{\norm{z'_i}}J\frac{z'_j}{\norm{z'_j}}\right) J z'_i +  \frac{\partial}{\partial z'_j}\left(\frac{{z'}_i^\top}{\norm{z'_i}}J\frac{z'_j}{\norm{z'_j}}\right)J z'_j & \nonumber \\
&=\frac{1}{{\norm{z'_i}}\norm{z'_j}}\left({z'}_j^\top z'_i - {z'}_i^\top {z'}_j \right) = 0, &
\end{flalign} 
where we have used the result of Lemma \ref{Lem:Jaco_sine} in Appendix and the fact that ${z'}_i^\top J z'_i={z'}_j^\top J z'_j=0$.
Thus, for all $w\in \{ 1,..., m_s\}$, it holds that $\frac{\partial \mbf{S}}{\partial p} (I_n\otimes J)p=R_S(I_n\otimes J)p =0$ and this implies that $\vspan\set{(I_n\otimes J)p }  \subseteq \vnull(R_S(p))$. 

From \cite[Lemma 1]{sun2017distributed}, it is always true that $\vspan\set{\mathds{1}_n\otimes I_2,(I_n\otimes J)p} \subseteq \vnull(R_D(p))$. Consequently, it always holds that  $\vspan\set{\mathds{1}_n\otimes I_2,(I_n\otimes J)p} \subseteq\vnull\left( \begin{bmatrix} R_D\\R_S  \end{bmatrix}\right) = \vnull(R^s_D(p))$, which implies that $\rank\left(R^s_D(p)\right) \leq 2n-3$.
\end{proof}

Note that  $\vspan\set{\mathds{1}_n\otimes I_2}$ and $\vspan\set{(I_n\otimes J)p}$ represent rigid-body translations and rotations for an entire framework $(\mathcal{G},p)$, respectively. Then, we reach the following one of two main results in this section.

\begin{theorem}
\label{Thm:Inf_Rank_dist} 
A framework $(\mathcal{G},p)$ with $\mathcal{S}$ is IRDS in $\mathbb{R}^{2}$ if and only if  $\rank(R^s_D(p))=2n-3$. 
\end{theorem}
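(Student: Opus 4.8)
The plan is to translate the geometric definition of IRDS into a dimension count on the null space of $R^s_D(p)$, using Lemma \ref{Lem:null_rank condition_dist} as the workhorse. By definition, the framework is IRDS exactly when every infinitesimal motion $\dot{p}\in\vnull(R^s_D(p))$ is trivial, i.e.\ lies in the subspace $\vspan\set{\mathds{1}_n\otimes I_2,(I_n\otimes J)p}$ of rigid-body translations and rotations. Hence the whole theorem reduces to the claim that $\vnull(R^s_D(p))$ coincides with this trivial-motion subspace precisely when its dimension drops to $3$, equivalently when $\rank(R^s_D(p))$ rises to $2n-3$. Lemma \ref{Lem:null_rank condition_dist} already supplies one inclusion, $\vspan\set{\mathds{1}_n\otimes I_2,(I_n\otimes J)p}\subseteq\vnull(R^s_D(p))$, together with the a priori bound $\rank(R^s_D(p))\leq 2n-3$; the task is to pair this with a dimension argument in both directions.

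First I would establish that the trivial-motion subspace $\vspan\set{\mathds{1}_n\otimes I_2,(I_n\otimes J)p}$ has dimension exactly $3$. The two columns of $\mathds{1}_n\otimes I_2$ (the pure translations) are plainly independent, and the rotation vector $(I_n\otimes J)p$ has $i$-th block $Jp_i$; since a framework has at least two distinct agent positions, the blocks $Jp_i$ are not constant across $i$, so the rotation vector is not a translation and the three generators are linearly independent. This pins the trivial subspace at dimension $3$, matching $2n-(2n-3)$.

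With that in hand the equivalence follows by a containment-plus-dimension argument. For necessity, assume the framework is IRDS; then $\vnull(R^s_D(p))\subseteq\vspan\set{\mathds{1}_n\otimes I_2,(I_n\otimes J)p}$, and combining this with the reverse inclusion from Lemma \ref{Lem:null_rank condition_dist} shows the two subspaces are equal, so $\dim\vnull(R^s_D(p))=3$ and hence $\rank(R^s_D(p))=2n-3$. For sufficiency, assume $\rank(R^s_D(p))=2n-3$, so that $\dim\vnull(R^s_D(p))=3$; since the $3$-dimensional trivial subspace is contained in $\vnull(R^s_D(p))$ and both have dimension $3$, they must coincide, which says every infinitesimal motion is trivial, i.e.\ the framework is IRDS.

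The routine parts are the two dimension counts; the only genuine subtlety is confirming that the trivial subspace is truly $3$-dimensional rather than collapsing, which rests on the (standard, implicit) non-degeneracy assumption that the agents do not all sit at a single point. I would make this hypothesis explicit so that the target rank $2n-3$ is correct, paralleling the $dn-d(d+1)/2$ count used in the earlier rank characterization of infinitesimal (distance) rigidity.
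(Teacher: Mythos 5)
Your proof is correct and takes essentially the same route as the paper: Lemma \ref{Lem:null_rank condition_dist} supplies the inclusion of the trivial motions in $\vnull(R^s_D(p))$ together with the bound $\rank(R^s_D(p))\leq 2n-3$, and the equivalence is then a containment-plus-dimension argument. The paper's own proof is simply a terser version of this, leaving implicit the two details you spell out --- that the trivial subspace $\vspan\set{\mathds{1}_n\otimes I_2,(I_n\otimes J)p}$ is exactly $3$-dimensional, which requires the (tacit) non-degeneracy assumption that the agents do not all coincide.
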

\begin{proof}
It is true from Lemma \ref{Lem:null_rank condition_dist} that $\vspan\set{\mathds{1}_n\otimes I_2,(I_n\otimes J)p} \subseteq \vnull(R^s_D(p))$ and $\rank\left(R^s_D(p)\right) \leq 2n-3$. Moreover, $\mathds{1}_n\otimes I_2$ and $(I_n\otimes J)p$ denote a rigid-body translation and a rigid-body rotation of the framework. Together with Lemma \ref{Lem:null_rank condition_dist}, these facts imply that $\rank(R^s_D(p))=2n-3$ if and only if all infinitesimal motions satisfying \eqref{eq:matrix_infini} are trivial. 
\end{proof}

Example: we provide one example as shown in \myfig\ref{Fig:ex_rank01} to illustrate the application of Theorem \ref{Thm:Inf_Rank_dist} in determining IRDS for a given framework. The formations in \myfig\ref{Fig:rank_01} and \myfig\ref{Fig:rank_02} satisfy $\rank(R^s_D(p))=5$ and $\rank(R^s_D(p))=4$, respectively, at the same position, which means the framework in \myfig\ref{Fig:rank_01} is IRDS and the framework in \myfig\ref{Fig:rank_02} is not IRDS. In fact, consistent with the constraints as shown in \myfig\ref{Fig:rank_02}, agent 4 can move on an arc of a circle passing through 1,3, and 4.

\begin{figure}[]
\centering
\subfigure[$\rank(R^s_D(p))=5$]{ \label{Fig:rank_01}
\,\,\begin{tikzpicture}[scale=1]
\node[place] (node1) at (0,2) [label=left:$1$] {};
\node[place] (node2) at (-1,0) [label=below:$2$] {};
\node[place] (node3) at (1,0) [label=below:$3$] {};
\node[place] (node4) at (2,2) [label=right:$4$] {};

\draw[lineUD] (node1)  -- node [above left] {$d_{12}$} (node2);
\draw[lineUD] (node2)  -- node [below] {$d_{23}$} (node3);
\draw[dashed] (node1)  -- (node3);
\draw[lineUD] (node1)  -- node [above] {$d_{14}$} (node4);
\draw[dashed] (node3)  -- (node4);

\pic [draw, -, "$(\theta^s)^1_{23}$", angle eccentricity=1.7] {angle = node2--node1--node3};
\pic [draw, -, "$(\theta^s)^4_{13}$", angle eccentricity=1.9] {angle = node1--node4--node3};
\end{tikzpicture}\,\,%
}~
\subfigure[$\rank(R^s_D(p))=4$]{ \label{Fig:rank_02}
\,\,\begin{tikzpicture}[scale=1]
\node[place] (node1) at (0,2) [label=left:$1$] {};
\node[place] (node2) at (-1,0) [label=below:$2$] {};
\node[place] (node3) at (1,0) [label=below:$3$] {};
\node[place] (node4) at (2,2) [label=right:$4$] {};

\draw[lineUD] (node1)  -- node [above left] {$d_{12}$} (node2);
\draw[lineUD] (node2)  -- node [below] {$d_{23}$} (node3);
\draw[dashed] (node1)  -- (node3);
\draw[dashed] (node1)  -- (node4);
\draw[dashed] (node3)  -- (node4);

\pic [draw, -, "$(\theta^s)^1_{23}$", angle eccentricity=1.7] {angle = node2--node1--node3};
\pic [draw, -, "$(\theta^s)^4_{13}$", angle eccentricity=1.9] {angle = node1--node4--node3};
\end{tikzpicture}\,\,%
}
\caption{Example of IRDS and not IRDS formations characterized by different constraints, where the positions are chosen as $p_1=[0\quad 3]^\top, p_2=[-2\quad 0]^\top, p_3=[2\quad 0]^\top$ and $p_4=[4\quad 3]^\top$.}\label{Fig:ex_rank01}
\end{figure}
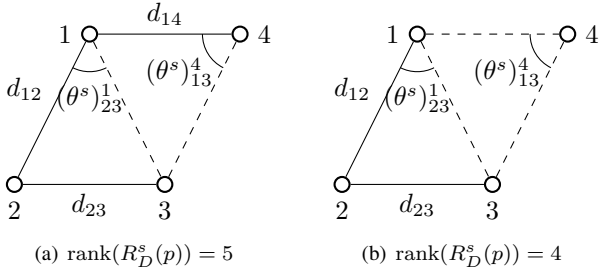 
\subsection{Infinitesimal rigidity with angle and signed constraints}  \label{Inf_rigid_AS}
This subsection provides a similar result to that discussed in Subsection \ref{Inf_rigid_DS} using hybrid angle and signed angle constraints.
First, the \myemph{angle-sign rigidity function} $F^s_A: \chi \rightarrow \mathbb{R}^{(m_a+m_s)}$  for some properly defined $\chi\subset\mathbb{R}^{2n}$
is defined as follows
\begin{align}
F^s_{A}(p) \triangleq& \left[ A_1, ... ,A_{m_a}, \bar{S}_1, ... ,\bar{S}_{m_s} \right]^\top.
\end{align}
We next consider the following time derivative of the constraint $\cos{\theta_{jk}^{i}}=\frac{z^\top_{ji}}{\norm{z_{ji}}}\frac{z_{ki}}{\norm{z_{ki}}}= constant, \, \forall (i,j,k) \in \mathcal{A}$:
\begin{align} 
&\frac{z_{ki}^\top}{\norm{z_{ki}}} \frac{P_{z_{ji}}}{\norm{z_{ji}}} (v_j-v_i)
+ \frac{z^\top_{ji}}{\norm{z_{ji}}}  \frac{P_{z_{ki}}}{\norm{z_{ki}}} \left(v_k-v_i \right) = 0,  \nonumber \\
& \forall (i,j,k) \in \mathcal{A} \label{eq:inf_motions03}
\end{align}
where $v_i$ is also an infinitesimal motion of $p_i$. Note that the cosine function is used as an unsigned constraint while the sine function \eqref{eq_signed_02} is considered as a signed constraint. The time derivative of the signed angle constraint is the same as \eqref{eq:inf_motions02}.
Then, the equations \eqref{eq:inf_motions02} and \eqref{eq:inf_motions03} can also be written by a matrix form as follows
\begin{align}\label{eq:matrix_infini_ang}
\dot{F}^s_A=\frac{\partial F^s_{A}(p)}{\partial p}\dot{p}=\begin{bmatrix}
     \frac{\partial \mbf{A}}{\partial p}\\ \\
	\frac{\partial \mbf{S}}{\partial p}
  \end{bmatrix}\dot{p}=\begin{bmatrix}
     R_W\\ \\
	R_S
  \end{bmatrix}\dot{p}=R^s_{A}(p)\dot{p}=0,
\end{align} 
where $\mbf{A} = \left[A_1, ... ,A_{m_a}\right]^\top \in \mathbb{R}^{m_a}$ and $R^s_{A}(p)\in 
\mathbb{R}^{(m_a+m_s) \times 2n}$ denotes the \myemph{angle-sign rigidity matrix}.

From the same viewpoint as Subsection \ref{Inf_rigid_DS}, if infinitesimal motions of an entire framework, $\dot{p}$, correspond to only rigid-body translations, rotations and additionally scaling motions of the framework, then we also call the motions \myemph{trivial}. Moreover, we can also define a concept of \myemph{infinitesimal rigidity with angle and signed constraints (IRAS)} if all infinitesimal motions of the framework are trivial. In the same spirit as in  Subsection \ref{Inf_rigid_DS}, we prove the following lemma and theorem.

\begin{lemma}\label{Lem:null_rank condition_ang}
It holds that $\vspan\set{\mathds{1}_n\otimes I_2,(I_n\otimes J)p,p} \subseteq \vnull(R^s_A(p))$ and $\rank\left(R^s_A(p)\right) \leq 2n-4$ for a framework $(\mathcal{G},p)$ with $\mathcal{S}$ in $\mathbb{R}^2$. 
\end{lemma}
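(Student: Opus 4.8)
The plan is to follow closely the template of the proof of Lemma~\ref{Lem:null_rank condition_dist}, verifying that each of the three generators $\mathds{1}_n\otimes I_2$, $(I_n\otimes J)p$ and $p$ is annihilated \emph{separately} by the angle block $R_W=\frac{\partial \mbf{A}}{\partial p}$ and by the signed block $R_S=\frac{\partial \mbf{S}}{\partial p}$. Since $R^s_A(p)=\begin{bmatrix} R_W \\ R_S\end{bmatrix}$, a vector lies in $\vnull(R^s_A(p))$ precisely when it lies in $\vnull(R_W)\cap\vnull(R_S)$, so it suffices to treat the two blocks independently. The only genuinely new generator compared with Lemma~\ref{Lem:null_rank condition_dist} is the scaling direction $p$; the translation and rotation generators are handled for $R_S$ by exactly the arguments already given there, and for $R_W$ by the translation/rotation invariance of the cosine constraints as in the weak-rigidity references \cite{kwon2018infinitesimal_inp.,kwon2018generalized}.

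For the signed block, I would reuse directly the computation of Lemma~\ref{Lem:null_rank condition_dist} to obtain $\vspan\set{\mathds{1}_n\otimes I_2,(I_n\otimes J)p}\subseteq\vnull(R_S(p))$. The new point is the scaling direction: each $\bar{S}_w$ depends on the relative positions only through the normalized vectors $z'_i/\norm{z'_i}$, hence is homogeneous of degree zero in $z'$. Using $\bar{H}'p=z'$ and the chain rule $R_S p=\frac{\partial \mbf{S}}{\partial z'}\bar{H}'p=\frac{\partial \mbf{S}}{\partial z'}z'$, Euler's identity for a degree-zero homogeneous map gives $\frac{\partial \bar{S}_w}{\partial z'}z'=0$ for every $w$, so $R_S p=0$. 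To match the style of Lemma~\ref{Lem:null_rank condition_dist} one can instead verify the cancellation $\frac{\partial \bar{S}_w}{\partial z'_i}z'_i+\frac{\partial \bar{S}_w}{\partial z'_j}z'_j=0$ directly, which holds because $\frac{\partial}{\partial z'_i}\!\left(z'_i/\norm{z'_i}\right)z'_i=0$.

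For the angle block $R_W$, the cosines $A_h=\frac{z^\top_{ji}}{\norm{z_{ji}}}\frac{z_{ki}}{\norm{z_{ki}}}$ are likewise homogeneous of degree zero in the relative positions, so the identical Euler argument yields $R_W p=0$. Translation invariance gives $R_W(\mathds{1}_n\otimes I_2)=0$ since $\bar{H}'(\mathds{1}_n\otimes I_2)=0$, and rotation invariance gives $R_W(I_n\otimes J)p=\frac{\partial \mbf{A}}{\partial z'}(I_{m_t}\otimes J)z'=0$ by the identity \eqref{eq:ro_z} together with $J^\top=-J$ and $z'^\top_i J z'_i=0$ (the same cancellation already used for $R_S$). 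Intersecting the two kernels then yields $\vspan\set{\mathds{1}_n\otimes I_2,(I_n\otimes J)p,p}\subseteq\vnull(R^s_A(p))$.

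It remains to turn this into the bound $\rank(R^s_A(p))\le 2n-4$, for which the four generators must be linearly independent. A relation $a_1\big(\mathds{1}_n\otimes\begin{bmatrix}1\\0\end{bmatrix}\big)+a_2\big(\mathds{1}_n\otimes\begin{bmatrix}0\\1\end{bmatrix}\big)+\beta(I_n\otimes J)p+\gamma p=0$ reads block-wise as $\begin{bmatrix}a_1\\a_2\end{bmatrix}+(\gamma I_2+\beta J)p_i=0$ for all $i$, so $(\gamma I_2+\beta J)(p_i-p_j)=0$ for all $i,j$; since $\det(\gamma I_2+\beta J)=\gamma^2+\beta^2$, a configuration whose points affinely span $\mathbb{R}^2$ forces $\gamma=\beta=0$ and then $a_1=a_2=0$. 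Hence $\dim\vnull(R^s_A(p))\ge 4$ and $\rank(R^s_A(p))\le 2n-4$. I expect the main (if modest) obstacle to be the scaling generator $p$: recognizing the degree-zero homogeneity that simultaneously forces $R_W p=0$ and $R_S p=0$, and then confirming that $p$ is genuinely independent of the rigid-motion generators — which is exactly where the non-degeneracy (affine spanning) of the configuration enters.
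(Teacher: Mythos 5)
Your proof is correct, and for the signed block $R_S$ it follows the paper's own route exactly (chain rule through $z'=\bar H'p$, then $P_{z'_i}z'_i=0$ for scaling and the skew-symmetry cancellation for rotation). Where you diverge is in two places, both to your advantage. First, for the angle block $R_W$ the paper does not compute anything: it simply invokes Lemma 3.3 of \cite{kwon2018infinitesimal_inp.} to get $\vspan\set{\mathds{1}_n\otimes I_2,(I_n\otimes J)p,p}\subseteq\vnull(R_W(p))$ in the case $\mathcal{E}=\emptyset$, whereas you prove this directly — translation from $H'\mathds{1}_n=0$, rotation from $P_{z'_i}Jz'_i=Jz'_i$ together with $J^\top=-J$, and scaling from degree-zero homogeneity of the cosines. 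Your Euler-identity observation is also a nice unification: it explains in one stroke why both $R_Wp=0$ and $R_Sp=0$, where the paper verifies the latter term by term. Second, you supply a step the paper silently skips: passing from the kernel containment to $\rank(R^s_A(p))\le 2n-4$ requires the four generators to be linearly independent, and your block-wise argument with $\det(\gamma I_2+\beta J)=\gamma^2+\beta^2$ establishes this. One small refinement: you do not need the points to affinely span $\mathbb{R}^2$; the identity $(\gamma I_2+\beta J)(p_i-p_j)=0$ forces $\gamma=\beta=0$ as soon as two points are distinct, and this non-degeneracy is automatic on the domain where the normalized constraints in $F^s_A$ are defined. So your proposal is self-contained and slightly more complete than the paper's proof, at the cost of redoing a computation the paper outsources to a reference.
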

\begin{proof}
This proof is similar to that of Lemma \ref{Lem:null_rank condition_dist}.
We firstly define a new edge set $\mathcal{E}'$ as 
\begin{align}
\mathcal{E}'=\{(i,j),(i,k) \mid (i,j,k) \in \mathcal{A} \lor (i,j,k) \in \mathcal{S}\} 
\end{align}
with $m_t=\card{\mathcal{E}'}$.
Also, we define an induced graph $\mathcal{G}'=(\mathcal{V},\mathcal{E}')$.
Moreover, the newly ordered relative position vector consistent with the incidence matrix is defined as $z'_s=z_{ij}$  for $(i,j) \in \mathcal{E}', \forall s\in \{ 1,..., m_t\}$. Then, we have
 $z' = \big[{z'}_{1}^{\top},...,{z'}_{m_t}^{\top} \big]^\top = (H'\otimes I_2)p \in \mathbb{R}^{2m_t}$.

Let us consider the $w$-th element in $\mbf{S}$, i.e., $\bar{S}_w, w\in \{ 1,..., m_s\}$.
With the fact that there exist $i$ and $j$ such that $\bar{S}_w= \left(\frac{{z'}_i^\top}{\norm{z'_i}}J\frac{z'_j}{\norm{z'_j}}\right)$ and $i,j \in \{ 1,..., m_t\}$,
we can derive
\begin{flalign}
\frac{\partial S_w}{\partial p} p
&=\frac{\partial S_w}{\partial z'} \frac{\partial z'}{\partial p} p
=\frac{\partial S_w}{\partial z'} (H'\otimes I_2) p \nonumber \\
&=\frac{\partial}{\partial z'}\left( \frac{{z'}_i^\top}{\norm{z'_i}}J\frac{z'_j}{\norm{z'_j}}\right)\begin{bmatrix}
z'_1 \\
\vdots\\
z'_{m_t}
\end{bmatrix} &\nonumber \\
&=\frac{\partial}{\partial z'_i} \left(\frac{{z'}_i^\top}{\norm{z'_i}}J\frac{z'_j}{\norm{z'_j}}\right)z'_i + \frac{\partial}{\partial z'_j}\left(\frac{{z'}_i^\top}{\norm{z'_i}}J\frac{z'_j}{\norm{z'_j}}\right) z'_j &\nonumber \\
&=0+0=0,&
\end{flalign} 
where we have used the result of Lemma \ref{Lem:Jaco_sine} in Appendix and it holds that $P_{z'_i} z'_i=P_{z'_j} z'_j=0$.
Thus, for all $w\in \{ 1,..., m_s\}$, one concludes that $R_S(p) p =0$ which implies that $\vspan\set{p}  \subseteq \vnull(R_S(p))$. Moreover,  it also holds that $\vspan\set{\mathds{1}_n\otimes I_2,(I_n\otimes J)p} \subseteq \vnull(R_S(p))$ by the same proof as in Lemma \ref{Lem:null_rank condition_dist}. Therefore, we have $\vspan\set{\mathds{1}_n\otimes I_2,(I_n\otimes J)p,p} \subseteq \vnull(R_S(p))$.

From  \cite[Lemma 3.3]{kwon2018infinitesimal_inp.}, one obtains that $\vspan\set{\mathds{1}_n\otimes I_2,(I_n\otimes J)p,p} \subseteq \vnull(R_W(p))$ when $\mathcal{E} = \emptyset$. Consequently, it always holds that  $\vspan\set{\mathds{1}_n\otimes I_2,(I_n\otimes J)p,p} \subseteq \vnull(R^s_A(p))$, which implies that $\rank\left(R^s_A(p)\right) \leq 2n-4$.
\end{proof}

Note that $\vspan\set{p}$ denotes scaling motions for an entire framework $(\mathcal{G},p)$. Then, we state the second main theorem in this section as follows. 

\begin{theorem}
\label{Thm:Inf_Rank_ang} 
A framework $(\mathcal{G},p)$ with $\mathcal{S}$ is IRAS in $\mathbb{R}^{2}$ if and only if  $\rank(R^s_A(p))=2n-4$. 
\end{theorem}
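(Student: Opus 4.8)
The plan is to mirror the structure of the proof of Theorem \ref{Thm:Inf_Rank_dist}, replacing the three-dimensional trivial motion space by a four-dimensional one. By Lemma \ref{Lem:null_rank condition_ang} we already know that $\vspan\set{\mathds{1}_n\otimes I_2,(I_n\otimes J)p,p} \subseteq \vnull(R^s_A(p))$ and hence $\rank(R^s_A(p)) \leq 2n-4$. The content of the theorem is therefore the equivalence between IRAS and the rank attaining this upper bound, which reduces to showing that the three displayed generators of the trivial motions are linearly independent for a valid (non-degenerate) framework, so that they span a genuinely four-dimensional kernel.

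First I would argue the ``only if'' direction. If the framework is IRAS, then by definition every infinitesimal motion $\dot p$ satisfying $R^s_A(p)\dot p = 0$ is trivial, i.e. lies in $\vspan\set{\mathds{1}_n\otimes I_2,(I_n\otimes J)p,p}$; thus $\vnull(R^s_A(p))$ is contained in this span, and combined with the reverse inclusion from Lemma \ref{Lem:null_rank condition_ang} we get equality. It then remains to verify that these four vectors (two translations, one rotation, one scaling) are linearly independent, which holds whenever the agents are not all collinear and $p$ is not centered so as to make the rotation and scaling generators dependent; under the standing non-degeneracy assumption the kernel has dimension exactly $4$, giving $\rank(R^s_A(p)) = 2n - 4$. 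For the ``if'' direction, assume $\rank(R^s_A(p)) = 2n-4$; then $\dim\vnull(R^s_A(p)) = 4$, and since Lemma \ref{Lem:null_rank condition_ang} already exhibits a four-dimensional subspace of trivial motions inside the kernel, the kernel must coincide with that subspace. Hence every infinitesimal motion is trivial, which is precisely the definition of IRAS.

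The main obstacle I anticipate is the linear independence of the four generators $\mathds{1}_n\otimes \begin{bmatrix}1 & 0\end{bmatrix}^\top$, $\mathds{1}_n\otimes \begin{bmatrix}0 & 1\end{bmatrix}^\top$, $(I_n\otimes J)p$, and $p$, since this is exactly where the ``additional scaling'' degree of freedom enters and distinguishes IRAS from IRDS. One must ensure the scaling mode $p$ is not expressible as a combination of translations and the rotation mode; a suppressed but necessary hypothesis is that the realization is non-degenerate (agents not all coincident or collinear, and the centroid not positioned pathologically). I would state this independence explicitly, perhaps noting that it follows from the analogous fact already used in the weak rigidity reference \cite{kwon2018infinitesimal_inp.} for the case $\mathcal{E}=\emptyset$, since the angle-sign rigidity matrix shares the weak rigidity matrix $R_W$ as its upper block. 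With independence in hand, the equivalence between IRAS and the rank condition is then purely a dimension count, entirely parallel to the proof of Theorem \ref{Thm:Inf_Rank_dist}.
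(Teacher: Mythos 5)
Your proposal is correct and takes essentially the same approach as the paper: both rest on Lemma \ref{Lem:null_rank condition_ang} for the inclusion $\vspan\set{\mathds{1}_n\otimes I_2,(I_n\otimes J)p,p}\subseteq\vnull(R^s_A(p))$ and then identify attainment of the rank bound $2n-4$ with the kernel being exactly the trivial motions (translation, rotation, scaling). The linear-independence point you flag is real and is left implicit in the paper, but the non-degeneracy required is weaker than you state: the four generators can be linearly dependent only if all agents coincide at a single point (since $(cJ+dI_2)p_i = -\begin{bmatrix}a & b\end{bmatrix}^\top$ for all $i$ with $(c,d)\neq(0,0)$ forces all $p_i$ equal), which is automatically excluded whenever the bearing-based constraints defining $R^s_A(p)$ are well defined, so collinearity and the centroid's location play no role.
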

\begin{proof}
We have that $\vspan\set{\mathds{1}_n\otimes I_2,(I_n\otimes J)p,p} \subseteq \vnull(R^s_A(p))$ and $\rank\left(R^s_A(p)\right) \leq 2n-4$ from Lemma \ref{Lem:null_rank condition_ang}. Also note that $\mathds{1}_n\otimes I_2$, $(I_n\otimes J)p$ and $p$ correspond to  a rigid-body translation, a rigid-body rotation and a scaling of an entire framework, respectively, which means that $\rank(R^s_A(p))=2n-4$ if and only if all infinitesimal motions satisfying \eqref{eq:matrix_infini_ang} are trivial. Therefore, we conclude the statement.
\end{proof}

\subsection{Discussion on minimal rigidity with hybrid and signed constraints in $\mathbb{R}^{2}$} \label{Subsec:number_const_2D}
In this subsection, we discuss how to determine the minimal number of constraints to locally determine a unique formation shape  up to translations and rotations (and additionally scaling factors in the case of  IRAS formations) of an entire formation. Moreover, we observe another ambiguous phenomenon distinct from the flip and flex ambiguity.

We first state the minimal and non-minimal number of constraints for achieving locally unique formations.
In the case of formations with distance and signed angle constraints, if a framework $(\mathcal{G},p)$ satisfies $\rank(R^s_D(p))=2n-3$ in $\mathbb{R}^2$ and $\rank(R^s_D(p))$ is exactly equal to the number of distance and signed constraints, i.e. $\rank(R^s_D)=2n-3=m_d+m_s$, then the number $2n-3$ is the minimal number to have a fixed formation shape (up to translation and rotation).
Similarly, in the case of formations with angle and signed angle constraints, if a framework $(\mathcal{G},p)$ satisfies $\rank(R^s_A)=2n-4=m_a+m_s$ in $\mathbb{R}^2$, then the number of constraints, $m_a+m_s = 2n-4$, becomes the minimal number of constraints. Moreover, an IRDS (resp. IRAS) formation with the minimal number of constraints is called minimally IRDS (resp. minimally IRAS) or, in short, minimally rigid.
If the number of constraints is greater than the minimal number of constraints while  a framework $(\mathcal{G},p)$ satisfies $\rank(R^s_D(p))=2n-3$ or $\rank(R^s_A)=2n-4$  in $\mathbb{R}^2$, then it is a non-minimal IRDS (resp. non-minimal IRAS) in $\mathbb{R}^2$. 

We next observe another ambiguous phenomenon caused by the signed constraints, which is distinct from the flip and flex ambiguity discussed in Section~\ref{Subsec:motivation}, and we call it \myemph{sine ambiguity}. This ambiguity stems from the function ambiguity in the signed constraint involving sine functions, i.e., the fact that $\text{sin}(\alpha) = \text{sin}(\pi - \alpha)$, $\forall \alpha \in [0, \pi)$.  
\myfig\ref{Fig:1} shows an example for the sine ambiguity. Although we can make a framework satisfy the Theorem \ref{Thm:Inf_Rank_dist} or Theorem \ref{Thm:Inf_Rank_ang}, we may not be sure that the framework has a unique formation shape as shown in \myfig\ref{Fig:1}.
This problem can be resolved by well chosen constraints or imposing additional constraints. For example, if a formation as shown in \myfig\ref{Fig:well_chosen01} has the signed angle constraint $\bar{S}_{213}=\sin\left(\frac{1}{2}\pi \right)$ then the triangular formation is uniquely determined up to translations and rotations. Another example is illustrated in \myfig\ref{Fig:well_chosen02} where the formation is (distance) rigid and also IRDS (by checking the rank condition of the associated rigidity matrices). 
Removing such an ambiguity may require additional number of  distance constraints, or a set of well-chosen constraints of  distance and angle variables involving both sine and cosine functions. We here suggest combining the concept of the hybrid distance-angle rigidity with the concepts of the (distance) rigidity and the weak rigidity, which can completely eliminate the sine ambiguity although cannot make a formation minimally rigid.
\begin{proposition}\label{Proposition_1}
An infinitesimally (distance) rigid framework $(\mathcal{G},p)$ with $n \ge 2$ and $\mathcal{S}$ is IRDS in $\mathbb{R}^{2}$ if and only if  $\rank(R^s_D(p))=2n-3$.  
\end{proposition}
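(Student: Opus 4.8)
The plan is to obtain the proposition as an immediate specialization of Theorem~\ref{Thm:Inf_Rank_dist}, with the extra hypothesis of infinitesimal (distance) rigidity serving only to force the rank condition to hold automatically. The structural fact I would lean on is the block partition $R^s_D(p) = \begin{bmatrix} R_D(p) \\ R_S(p) \end{bmatrix}$ recorded in \eqref{dist_sign_rigidity_matrix}, which exhibits the ordinary rigidity matrix $R_D(p)$ as a block of rows of $R^s_D(p)$.

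First I would apply the rank criterion for infinitesimal (distance) rigidity: since $(\mathcal{G},p)$ is infinitesimally (distance) rigid in $\mathbb{R}^2$, taking $d=2$ in $dn-d(d+1)/2$ gives $\rank(R_D(p)) = 2n-3$. Because adjoining rows cannot decrease the rank of a matrix, the block partition yields $\rank(R^s_D(p)) \geq \rank(R_D(p)) = 2n-3$. Lemma~\ref{Lem:null_rank condition_dist} supplies the matching upper bound $\rank(R^s_D(p)) \leq 2n-3$ for any framework with $\mathcal{S}$ in $\mathbb{R}^2$, so the two inequalities pin the rank at exactly $\rank(R^s_D(p)) = 2n-3$.

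With the rank value in hand, I would finish by quoting Theorem~\ref{Thm:Inf_Rank_dist}: a framework with $\mathcal{S}$ is IRDS in $\mathbb{R}^2$ if and only if $\rank(R^s_D(p)) = 2n-3$. Both directions of the claimed equivalence then follow at once; in fact the distance-rigidity assumption already guarantees the right-hand side, so such a framework is invariably IRDS. I do not expect a genuine obstacle, as the statement is a corollary of the already-established Theorem~\ref{Thm:Inf_Rank_dist} combined with rank monotonicity under row augmentation. The only points deserving care are the degenerate case $n=2$, where $\mathcal{A}$ and $\mathcal{S}$ are forced to be empty, $R^s_D = R_D$, and the assertion collapses to the definition of distance rigidity with $2n-3=1$; and the bookkeeping confirming that the trivial null-space directions $\mathds{1}_n\otimes I_2$ and $(I_n\otimes J)p$ from Lemma~\ref{Lem:null_rank condition_dist} coincide with the rigid-body translations and rotations, so that attaining $\rank(R^s_D(p))=2n-3$ is genuinely equivalent to IRDS and not a mere numerical coincidence.
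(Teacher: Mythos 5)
Your proof is correct and is essentially the paper's own (implicit) justification: the paper states Proposition~\ref{Proposition_1} without proof, treating it as an immediate consequence of Theorem~\ref{Thm:Inf_Rank_dist}, and your chain of reasoning --- $\rank(R^s_D(p)) \geq \rank(R_D(p)) = 2n-3$ by rank monotonicity under row augmentation, the matching upper bound from Lemma~\ref{Lem:null_rank condition_dist}, and then the equivalence of Theorem~\ref{Thm:Inf_Rank_dist} --- is exactly the argument being left to the reader. Your added observation that the hypothesis forces both sides of the biconditional to hold, so that an infinitesimally (distance) rigid framework equipped with any consistent $\mathcal{S}$ is automatically IRDS, is precisely the substantive point the paper intends this proposition to convey.
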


\begin{proposition}\label{Proposition_2}
An infinitesimally weakly rigid framework $(\mathcal{G},p)$ with $\mathcal{S}$ is IRAS in $\mathbb{R}^{2}$ if and only if  $\rank(R^s_A(p))=2n-4$.  
\end{proposition}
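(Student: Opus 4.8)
The plan is to obtain Proposition~\ref{Proposition_2} as a refinement of Theorem~\ref{Thm:Inf_Rank_ang} rather than re-deriving the rank equivalence from scratch. The equivalence ``IRAS $\iff \rank(R^s_A(p)) = 2n-4$'' has already been established in Theorem~\ref{Thm:Inf_Rank_ang} for \emph{every} framework carrying a signed-constraint set $\mathcal{S}$, so both directions of the stated ``if and only if'' are inherited immediately once we restrict to the subclass of infinitesimally weakly rigid frameworks. The genuine content worth recording is that the extra hypothesis forces the rank condition to hold automatically, so that such a framework is always IRAS and, as the surrounding discussion indicates, free of the sine ambiguity.

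First I would invoke Theorem~\ref{Thm:Inf_Rank_ang} to note that, for the given framework with $\mathcal{S}$, being IRAS is equivalent to $\rank(R^s_A(p)) = 2n-4$; this settles the biconditional itself. Next I would identify the weak-rigidity block: since $F^s_A$ involves only angle and signed-angle variables, the matrix $R_W(p) = \partial \mathbf{A}/\partial p$ appearing in $R^s_A(p) = [R_W^\top,\, R_S^\top]^\top$ is precisely the angle-only ($\mathcal{E} = \emptyset$) weak rigidity matrix. Applying the infinitesimal weak rigidity characterization in the $\mathcal{E} = \emptyset$ case (the value $dn - (d^2+d+2)/2 = 2n-4$ for $d=2$) then gives $\rank(R_W(p)) = 2n-4$ under the weak-rigidity hypothesis.

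The third step is the stacking argument. Because $R_W(p)$ is a row-submatrix of $R^s_A(p)$, we have $\rank(R^s_A(p)) \ge \rank(R_W(p)) = 2n-4$, while Lemma~\ref{Lem:null_rank condition_ang} supplies the reverse bound $\rank(R^s_A(p)) \le 2n-4$. Sandwiching the two inequalities yields $\rank(R^s_A(p)) = 2n-4$, so by Theorem~\ref{Thm:Inf_Rank_ang} the framework is necessarily IRAS. I would then connect this to the motivation: the cosine constraints that deliver weak rigidity separate $\alpha$ from $\pi-\alpha$, so appending them to the sine (signed) block removes the sine ambiguity that a bare IRAS framework can exhibit, which is the practical payoff of the proposition.

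The step I expect to be the main obstacle is the bookkeeping in the second step: one must be explicit that the $R_W$ sitting inside $R^s_A$ is the angle-only weak rigidity matrix, so that the correct branch of the weak rigidity theorem (the $\mathcal{E}=\emptyset$ value $2n-4$, not the $\mathcal{E}\neq\emptyset$ value $2n-3$) is the one invoked, and that the hypothesis of infinitesimal weak rigidity is understood relative to this angle block rather than to any distance constraints the ambient graph may carry. Once that identification is pinned down, the rank sandwich and the appeal to Theorem~\ref{Thm:Inf_Rank_ang} are routine.
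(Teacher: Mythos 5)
Your proposal is correct, and its core coincides with what the paper intends: the paper states Proposition~\ref{Proposition_2} without any proof, precisely because---as your first step observes---the biconditional is inherited verbatim from Theorem~\ref{Thm:Inf_Rank_ang}, which holds for \emph{every} framework with $\mathcal{S}$ and hence in particular for infinitesimally weakly rigid ones. Your remaining steps go beyond the paper: the rank sandwich $2n-4=\rank(R_W(p))\le\rank(R^s_A(p))\le 2n-4$ (lower bound from the $\mathcal{E}=\emptyset$ branch of the infinitesimal weak rigidity characterization applied to the angle block, upper bound from Lemma~\ref{Lem:null_rank condition_ang}) shows that under the stated hypothesis \emph{both} sides of the equivalence hold automatically, i.e., every infinitesimally weakly rigid framework with $\mathcal{S}$ is IRAS. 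That strengthening is correct, and it is in fact the real substance behind the paper's surrounding claim that combining weak rigidity with signed constraints eliminates the sine ambiguity; the paper never records this argument explicitly. Your caution about reading the weak-rigidity hypothesis relative to the angle-only block is also exactly the right bookkeeping: in the paper's definition of $R^s_A$ the block $R_W$ denotes $\partial\mathbf{A}/\partial p$ (no distance rows), so the correct rank value to invoke is $2n-4$ rather than the $\mathcal{E}\neq\emptyset$ value $2n-3$.
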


Moreover, the sine ambiguity is also observed in $\mathbb{R}^{3}$ since the definition of the signed constraint in $\mathbb{R}^{3}$, which is defined in the following section, includes information of the sine function. A sine ambiguity in $\mathbb{R}^{3}$ can be dealt with in the same way as in $\mathbb{R}^{2}$, which is discussed in Subsection \ref{Subsec:number_const_3D.
}

\begin{figure}[]
\centering
\subfigure[$(\theta^s)^2_{13}=\sin\left(\frac{1}{3}\pi \right)$]{ \label{Fig:1_a}
\quad\,\,\begin{tikzpicture}[scale=1]
\node[place] (node2) at (0,2) [label=above:$2$] {};
\node[place] (node1) at (-1,0) [label=below:$1$] {};
\node[place] (node3) at (1,0) [label=below:$3$] {};

\draw[lineUD] (node1)  -- node [above left] {$d_{12}$} (node2);
\draw[lineUD] (node2)  -- node [above right] {$d_{23}$} (node3);
\draw[dashed] (node1)  -- (node3);

\pic [draw, -, "$(\theta^s)^2_{13}$", angle eccentricity=1.8] {angle = node1--node2--node3};
\end{tikzpicture}\quad\,\,%
}~
\subfigure[$(\theta^s)^2_{13}=\sin\left(\frac{2}{3}\pi \right)$]{ \label{Fig:1_b}
\quad\begin{tikzpicture}[scale=1]
\node[place] (node2) at (0,1) [label=above:$2$] {};
\node[place] (node1) at (-2,0) [label=below:$1$] {};
\node[place] (node3) at (2,0) [label=below:$3$] {};

\draw[lineUD] (node1)  -- node [above left] {$d_{12}$} (node2);
\draw[lineUD] (node2)  -- node [above right] {$d_{23}$} (node3);
\draw[dashed] (node1)  -- (node3);

\pic [draw, -, "$(\theta^s)^2_{13}$", angle eccentricity=1.5] {angle = node1--node2--node3};
\end{tikzpicture}\quad%
}
\caption{Triangular formations with the same set of distance and signed constraints, where $d_{12}=d_{23}$. 
Since it holds that $\sin \left(\frac{1}{3}\pi \right)=\sin \left(\frac{2}{3}\pi \right)$, the two formations have the same constraints; however, the formations are not congruent.} \label{Fig:1}
\end{figure}
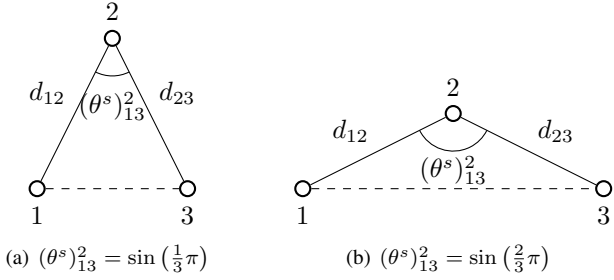

 \begin{figure}[]
\centering
\subfigure[IRDS formation with 2 distance and 1 signed constraints, where $(\theta^s)^2_{13}=\sin\left(\frac{1}{2}\pi \right)$.]{ \label{Fig:well_chosen01}
\quad\,\begin{tikzpicture}[scale=1.5]
\node[place] (node2) at (0,1) [label=above:$2$] {};
\node[place] (node1) at (-1,0) [label=below:$1$] {};
\node[place] (node3) at (1,0) [label=below:$3$] {};

\draw[lineUD] (node1)  -- node [above left] {$d_{12}$} (node2);
\draw[lineUD] (node2)  -- node [above right] {$d_{23}$} (node3);
\draw[dashed] (node1)  -- (node3);

\pic [draw, -, "$(\theta^s)^2_{13}$", angle eccentricity=1.5] {angle = node1--node2--node3};
\end{tikzpicture}\quad\,%
}\quad%
\subfigure[IRDS formation with 3 distance and 1 signed constraints.]{ \label{Fig:well_chosen02}
\quad\begin{tikzpicture}[scale=1.2]
\node[place] (node2) at (0,2) [label=above:$2$] {};
\node[place] (node1) at (-1,0) [label=below:$1$] {};
\node[place] (node3) at (1,0) [label=below:$3$] {};

\draw[lineUD] (node1)  -- node [above left] {$d_{12}$} (node2);
\draw[lineUD] (node2)  -- node [above right] {$d_{23}$} (node3);
\draw[lineUD] (node1)  -- node [below] {$d_{13}$} (node3);

\pic [draw, -, "$(\theta^s)^2_{13}$", angle eccentricity=1.8] {angle = node1--node2--node3};
\end{tikzpicture}\quad%
}%
\caption{Triangular formations with well chosen constraints. } \label{Fig:2}
\end{figure}
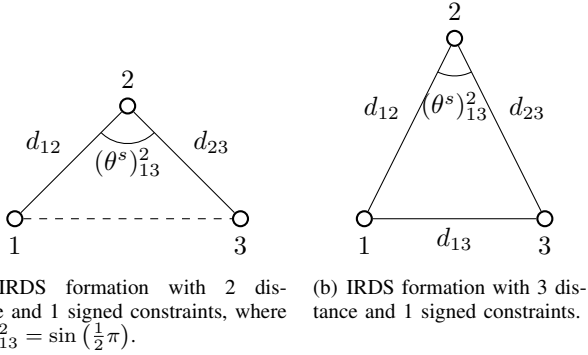

\section{Hybrid distance-angle rigidity theory in $\mathbb{R}^{3}$} \label{Sec:inf_rigid_3D}
In this section, we extend the rigidity theory in $\mathbb{R}^{2}$ developed in the last section to $\mathbb{R}^{3}$, where the signed angle constraints are substituted with signed volume constraints. That is,
we discuss the 3-dimensional case with distance (or angle) and signed-volume constraints.
The normalized signed volume is denoted by
$\bar{V}_{ijkl}=  \frac{{z_{ji}}^\top}{\norm{z_{ji}}} \left(\frac{z_{ki}}{\norm{z_{ki}}} \times \frac{z_{li}}{\norm{z_{li}}}\right)$ for edges $(i,j)$, $(i,k)$ and $(i,l)$, where the symbol $\times$ denotes a cross product.
Note that bearings are used in the definition of the signed volume.
We  define a $\mu$-th signed volume with the ordered relative position vectors as  
$\bar{V}_\mu=  \frac{{z_i}^\top}{\norm{z_i}} \left(\frac{z_j}{\norm{z_j}} \times \frac{z_k}{\norm{z_k}}\right), \mu\in \{ 1,..., m_v\}$ with the ordered relative position vectors.
We also define a set $\bar{\mathcal{S}}$ for signed volume constraints as
$\bar{\mathcal{S}} = \set{(i,j,k,l)  \in \mathcal{V}^4 \given i,j,k,l \in \mathcal{V} \text{  to characterize } \bar{V}_{ijk}}$
with $m_v=\card{\bar{\mathcal{S}}}$.
\subsection{Infinitesimal rigidity with distance and signed-volume constraints in $\mathbb{R}^{3}$}\label{Subsec:rigid_dist_3D}
We first introduce several functions to explore a concept of infinitesimal rigidity with distance and signed-volume constraints.
The \myemph{distance-volume rigidity function} $F^v_D: \chi \rightarrow \mathbb{R}^{(m_d+m_v)}$  for some properly defined $\chi\subset\mathbb{R}^{3n}$ is defined as
\begin{align}\label{eq:rigidity_fn_dist_3D}
F^v_{D}(p) \triangleq& \left[\frac{1}{2} \norm{z_{1}}^2, ... ,\frac{1}{2}\norm{z_{m_d}}^2, \bar{V}_1, ... ,\bar{V}_{m_v} \right]^\top.
\end{align}
We then consider the following time derivative of \eqref{eq:rigidity_fn_dist_3D}:
\begin{align}\label{eq:matrix_infini_dist_3D}
\dot{F}^v_D=R^v_{D}\dot{p}=0,
\end{align} 
where $R^v_{D}$ denotes the \myemph{distance-volume rigidity matrix} defined as
\begin{equation}\label{dist_sign3D_rigidity_matrix}
R^v_{D}(p) \triangleq \frac{\partial F^v_{D}(p)}{\partial p} =  \begin{bmatrix}
     \frac{\partial \mbf{D}}{\partial p}\\ \\
	\frac{\partial \mbf{V}}{\partial p}
  \end{bmatrix} = \begin{bmatrix}
     R_D\\ \\
	R_V
  \end{bmatrix} \in 
\mathbb{R}^{(m_d+m_v) \times 3n}
\end{equation}
where $\mbf{D} = \left[\frac{1}{2}\norm{z_{1}}^2, ... ,\frac{1}{2}\norm{z_{m_d}}^2\right]^\top \in \mathbb{R}^{m_d}$ and $\mbf{V} = \left[\bar{V}_1,...,\bar{V}_{m_v}\right]^\top \in \mathbb{R}^{m_v}$.
We finally reach the following results in a similar manner to Section \ref{Sec:inf_rigid}.

\begin{lemma}\label{Lem:null_rank_3d_dist}
It holds that  $\vspan\set{\mathds{1}_n\otimes I_3,(I_n\otimes {\color{black}\bar{J}_1})p,(I_n\otimes {\color{black}\bar{J}_2})p,(I_n\otimes {\color{black}\bar{J}_3})p} \subseteq \vnull(R^v_D(p))$ and $\rank\left(R^v_D(p)\right) \leq 3n-6$ for a framework $(\mathcal{G},p)$ with $\bar{\mathcal{S}}$ in $\mathbb{R}^3$, where the bases of rotational matrix  {\color{black}$\bar{J}_\sigma,\sigma=1,2,3$} are given as
{\color{black}
\begin{align} 
\bar{J}_1=\begin{bmatrix}
0 & 0 & 0 \\
0 & 0 &-1\\
0 & 1 &0
\end{bmatrix};
\bar{J}_2=\begin{bmatrix}
0 & 0 & 1 \\
0 & 0 &0\\
-1 & 0 &0
\end{bmatrix};
\bar{J}_3=\begin{bmatrix}
0 & -1 & 0 \\
1 & 0 &0\\
0 & 0 &0
\end{bmatrix}.
\end{align} 
}
\end{lemma}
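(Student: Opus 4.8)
The plan is to mirror the structure of the proof of Lemma~\ref{Lem:null_rank condition_dist}, treating the two blocks of $R^v_D=\begin{bmatrix}R_D\\R_V\end{bmatrix}$ separately and then intersecting their null spaces. First I would introduce the induced edge set $\mathcal{E}'=\mathcal{E}\cup\set{(i,j),(i,k),(i,l)\given (i,j,k,l)\in\bar{\mathcal{S}}}$ with $m_t=\card{\mathcal{E}'}$, the induced graph $\mathcal{G}'=(\mathcal{V},\mathcal{E}')$, and the stacked relative-position vector $z'=(H'\otimes I_3)p$. Writing $R_V=\frac{\partial\mbf{V}}{\partial p}=\frac{\partial\mbf{V}}{\partial z'}\bar{H}'$ with $\bar{H}'=H'\otimes I_3$, the translational invariance $\vspan\set{\mathds{1}_n\otimes I_3}\subseteq\vnull(R_V)$ follows immediately from $\vspan\set{\mathds{1}_n\otimes e_\ell}\subseteq\vnull(H'\otimes I_3)$ for each standard basis vector $e_\ell$ of $\mathbb{R}^3$, exactly as in the planar case.

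The substantive step is the rotational invariance, i.e.\ $R_V(I_n\otimes\bar{J}_\sigma)p=0$ for $\sigma=1,2,3$. I would first establish the $3$-D analogue of \eqref{eq:ro_z}, namely $\bar{H}'(I_n\otimes\bar{J}_\sigma)p=\left[(\bar{J}_\sigma z'_1)^\top,\dots,(\bar{J}_\sigma z'_{m_t})^\top\right]^\top$, using the commutation $(H'\otimes I_3)(I_n\otimes\bar{J}_\sigma)=(I_{m_t}\otimes\bar{J}_\sigma)(H'\otimes I_3)$ of Kronecker products. Thus $R_V(I_n\otimes\bar{J}_\sigma)p=0$ reduces to verifying that, for each volume constraint $\bar{V}_\mu=\det\!\left[\frac{z'_a}{\norm{z'_a}}\ \frac{z'_b}{\norm{z'_b}}\ \frac{z'_c}{\norm{z'_c}}\right]$ with $a,b,c\in\{1,\dots,m_t\}$, the directional derivative $\frac{\partial\bar{V}_\mu}{\partial z'}\big[(\bar{J}_\sigma z'_a)^\top,(\bar{J}_\sigma z'_b)^\top,(\bar{J}_\sigma z'_c)^\top\big]^\top$ vanishes.

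I expect this to be the main obstacle, but it is dispatched by the rotation invariance of the scalar triple product. The key observation is that each normalized bearing rotates rigidly along the field $\bar{J}_\sigma$: since $\bar{J}_\sigma$ is skew-symmetric we have $z^\top\bar{J}_\sigma z=0$, hence $\frac{P_z}{\norm{z}}\bar{J}_\sigma z=\frac{1}{\norm{z}}\bar{J}_\sigma z=\bar{J}_\sigma\frac{z}{\norm{z}}$, so the derivative of $\frac{z}{\norm{z}}$ in the direction $\bar{J}_\sigma z$ equals $\bar{J}_\sigma\frac{z}{\norm{z}}$. Consequently the sought directional derivative equals $\frac{d}{dt}\det\!\big[e^{\bar{J}_\sigma t}\hat{u}_a\ e^{\bar{J}_\sigma t}\hat{u}_b\ e^{\bar{J}_\sigma t}\hat{u}_c\big]\big|_{t=0}$ with $\hat{u}=z'/\norm{z'}$, which is zero because $e^{\bar{J}_\sigma t}\in SO(3)$ leaves the determinant unchanged (equivalently $\trace(\bar{J}_\sigma)=0$). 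This yields $\vspan\set{(I_n\otimes\bar{J}_\sigma)p}\subseteq\vnull(R_V)$ for $\sigma=1,2,3$; if a closed form of $\frac{\partial\bar{V}_\mu}{\partial z'}$ is preferred, I would formalize the triple-product derivative through an appendix analogue of Lemma~\ref{Lem:Jaco_sine}.

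Finally, I would invoke the known fact from distance rigidity in $\mathbb{R}^3$ that the six trivial motions $\mathds{1}_n\otimes I_3$ and $(I_n\otimes\bar{J}_\sigma)p$ lie in $\vnull(R_D)$. Combining the two blocks gives $\vspan\set{\mathds{1}_n\otimes I_3,(I_n\otimes\bar{J}_1)p,(I_n\otimes\bar{J}_2)p,(I_n\otimes\bar{J}_3)p}\subseteq\vnull(R^v_D(p))$. Since these six generators are precisely the independent infinitesimal rigid-body translations and rotations of a framework spanning $\mathbb{R}^3$, the null space has dimension at least $6$, and therefore $\rank(R^v_D(p))\le 3n-6$.
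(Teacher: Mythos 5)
Your proposal is correct, and its skeleton matches the paper's proof: the same induced edge set $\mathcal{E}'$, the factorization $R_V=\frac{\partial\mbf{V}}{\partial z'}(H'\otimes I_3)$, the incidence-matrix argument for translations, the Kronecker commutation giving $\bar{H}'(I_n\otimes\bar{J}_\sigma)p=[(\bar{J}_\sigma z'_1)^\top,\dots,(\bar{J}_\sigma z'_{m_t})^\top]^\top$, and the citation of the known null space of $R_D$ to combine the blocks. Where you genuinely depart from the paper is the key step, showing $R_V(I_n\otimes\bar{J}_\sigma)p=0$: the paper expands the directional derivative via its appendix lemma on Jacobians of cross products, writes each normalized bearing in coordinates $\hat{i},\hat{j},\hat{k}$, and verifies by brute-force component computation that the resulting sum of three terms vanishes for each $\sigma=1,2,3$. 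You instead observe that the direction $(I_{m_t}\otimes\bar{J}_\sigma)z'$ is tangent at $t=0$ to the rotation flow $z'\mapsto(I_{m_t}\otimes e^{\bar{J}_\sigma t})z'$, that orthogonality of $e^{\bar{J}_\sigma t}$ makes bearings transform as $\hat{u}\mapsto e^{\bar{J}_\sigma t}\hat{u}$, and that the normalized triple product $\det[e^{\bar{J}_\sigma t}\hat{u}_a\ e^{\bar{J}_\sigma t}\hat{u}_b\ e^{\bar{J}_\sigma t}\hat{u}_c]$ is constant in $t$ since $\det(e^{\bar{J}_\sigma t})=1$; your auxiliary identity $\frac{P_z}{\norm{z}}\bar{J}_\sigma z=\bar{J}_\sigma\frac{z}{\norm{z}}$ (from skew-symmetry) is the same simplification the paper invokes via $z^\top\bar{J}_\sigma z=0$, but you use it conceptually rather than as input to a coordinate expansion. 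Your route is shorter, explains \emph{why} the identity holds (rotational invariance of the signed volume), and extends verbatim to any skew-symmetric generator and any dimension, at the mild cost of invoking matrix exponentials and Jacobi's formula; the paper's computation is longer and opaque but entirely elementary and self-contained. One point both treatments leave implicit: concluding $\rank(R^v_D(p))\leq 3n-6$ requires the six trivial motions to be linearly independent, which holds when $p$ affinely spans $\mathbb{R}^3$ — you at least flag this with the phrase ``a framework spanning $\mathbb{R}^3$,'' so you are no worse off than the paper here.
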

\begin{proof}
First, we define a new edge set $\mathcal{E}'$ as
\begin{align}
\mathcal{E}'=\{(i,j),(i,k),(i,l) \mid (i,j)\in \mathcal{E} \lor (i,j,k,l) \in \bar{\mathcal{S}}\}
\end{align}
with $m_t=\card{\mathcal{E}'}$.
Also, the newly ordered relative position  is defined as $z'_s=z_{ij}$  for $(i,j) \in \mathcal{E}', \forall s\in \{ 1,..., m_t\}$.
Then, as discussed in the 2-dimensional case, it holds that $R_V=\frac{\partial \mbf{V}}{\partial p}=\frac{\partial \mbf{V}}{\partial z'} \frac{\partial z'}{\partial p}=\frac{\partial \mbf{V}}{\partial z'}\bar{H'}$ where $\bar{H'}=(H'\otimes I_3)$ and $z' = \big[{z'}_{1}^\top,...,{z'}_{m_t}^\top \big]^\top = (H'\otimes I_3)p \in \mathbb{R}^{3m_t}$. We also have
\begin{flalign}
\bar{H'}(I_n\otimes {\color{black}\bar{J}_\sigma})p &= (H'\otimes I_3)(I_n\otimes {\color{black}\bar{J}_\sigma})p & \nonumber \\
&=(H'\otimes {\color{black}\bar{J}_\sigma})p \nonumber \\
&= (I_{m_t} H'\otimes {\color{black}\bar{J}_\sigma} I_3)p \nonumber \\
&= (I_{m_t} \otimes {\color{black}\bar{J}_\sigma})(H'\otimes I_3)p \nonumber \\
&=(I_{m_t} \otimes {\color{black}\bar{J}_\sigma})z'= \begin{bmatrix}
{\color{black}\bar{J}_\sigma} z'_1 \\
\vdots\\
{\color{black}\bar{J}_\sigma} z'_{m_t}
\end{bmatrix}
\end{flalign}
for {\color{black}$\sigma=1,2,3$.}
Since $H'$ is an incidence matrix, it is obvious that $\vspan\set{\mathds{1}_n\otimes \begin{bmatrix} 1 & 0 & 0 \end{bmatrix}^\top} \subseteq \vnull(H'\otimes I_3)$, $\vspan\set{\mathds{1}_n\otimes \begin{bmatrix} 0 & 1 & 0  \end{bmatrix}^\top} \subseteq \vnull(H'\otimes I_3)$ and $\vspan\set{\mathds{1}_n\otimes \begin{bmatrix} 0 & 0 & 1  \end{bmatrix}^\top} \subseteq \vnull(H'\otimes I_3)$. Thus, it holds that $\vspan\set{\mathds{1}_n\otimes I_3} \subseteq \vnull(R_V(p))$.

With the fact that there exist $i$,$j$ and $k$ such that $\bar{V}_\mu=\frac{{z_i}^\top}{\norm{z_i}} \left(\frac{z_j}{\norm{z_j}} \times \frac{z_k}{\norm{z_k}}\right), \mu\in \{ 1,..., m_v\}$ and $i,j,k \in \{ 1,..., m_t\}$, we have the following derivative result
\begin{small}
\begin{align}
&\frac{\partial \bar{V}_\mu}{\partial p} (I_n\otimes {\color{black}\bar{J}_\sigma})p  =\frac{\partial \bar{V}_\mu}{\partial z'} \frac{\partial z'}{\partial p} (I_n\otimes {\color{black}\bar{J}_\sigma})p  =\frac{\partial \bar{V}_\mu}{\partial z'} \bar{H'}  (I_n\otimes {\color{black}\bar{J}_\sigma})p &\nonumber \\
&=\left(\frac{\partial}{\partial z'} \frac{{z'_i}^\top}{\norm{z'_i}} \left(\frac{z'_j}{\norm{z'_j}} \times \frac{z'_k}{\norm{z'_k}}\right)\right)\begin{bmatrix}
{\color{black}\bar{J}_\sigma} z'_1 \\
\vdots\\
{\color{black}\bar{J}_\sigma} z'_{m_t}
\end{bmatrix} &\nonumber \\
&= \left(\frac{z'_j}{\norm{z'_j}} \times \frac{z'_k}{\norm{z'_k}}\right)^\top  \left(\frac{\partial}{\partial z'_i}\frac{z'_i}{\norm{z'_i}}\right){\color{black}\bar{J}_\sigma} z'_i &\nonumber \\
&\quad- \frac{{z'_i}^\top}{\norm{z'_i}} \xi_{\frac{z'_k}{\norm{z'_k}}}\left(\frac{\partial}{\partial z'_j}\frac{z'_j}{\norm{z'_j}}\right){\color{black}\bar{J}_\sigma} z'_j &\nonumber \\
&\quad+ \frac{{z'_i}^\top}{\norm{z'_i}} \xi_{\frac{z'_j}{\norm{z'_j}}} \left(\frac{\partial}{\partial z'_k}\frac{z'_k}{\norm{z'_k}}\right){\color{black}\bar{J}_\sigma} z'_k  &\nonumber \\
&= \left(\frac{z'_j}{\norm{z'_j}} \times \frac{z'_k}{\norm{z'_k}}\right)^\top {\color{black}\bar{J}_\sigma} \frac{z'_i}{\norm{z'_i}}
- \frac{{z'_i}^\top}{\norm{z'_i}} \xi_{\frac{z'_k}{\norm{z'_k}}}{\color{black}\bar{J}_\sigma}\frac{z'_j}{\norm{z'_j}}&\nonumber \\
&\quad+ \frac{{z'_i}^\top}{\norm{z'_i}} \xi_{\frac{z'_j}{\norm{z'_j}}} {\color{black}\bar{J}_\sigma} \frac{z'_k}{\norm{z'_k}},& \label{temp_eq_3d_01}
\end{align} 
\end{small}where we have used the fact that ${z'}_i^\top {\color{black}\bar{J}_\sigma} z'_i={z'}_j^\top {\color{black}\bar{J}_\sigma} z'_j={z'}_k^\top {\color{black}\bar{J}_\sigma} z'_k=0, {\color{black}\sigma=1,2,3}$, and the Jacobian rule of cross product (see Lemma \ref{Lem:Jaco_cross} in Appendix).
Moreover, let us denote $\frac{z'_i}{\norm{z'_i}}= \begin{bmatrix}\hat{i}_1 & \hat{i}_2&\hat{i}_3\end{bmatrix}^\top\in\mathbb{R}^{3}$, $\frac{z'_j}{\norm{z'_j}}= \begin{bmatrix}\hat{j}_1 & \hat{j}_2&\hat{j}_3\end{bmatrix}^\top\in\mathbb{R}^{3}$ and $\frac{z'_k}{\norm{z'_k}}= \begin{bmatrix}\hat{k}_1 & \hat{k}_2&\hat{k}_3\end{bmatrix}^\top\in\mathbb{R}^{3}$. Then, we can calculate \eqref{temp_eq_3d_01} as follows
\begin{align}
&\left(\frac{z'_j}{\norm{z'_j}} \times \frac{z'_k}{\norm{z'_k}}\right)^\top {\color{black}\bar{J}_\sigma} \frac{z'_i}{\norm{z'_i}}
- \frac{{z'_i}^\top}{\norm{z'_i}} \xi_{\frac{z'_k}{\norm{z'_k}}}{\color{black}\bar{J}_\sigma}\frac{z'_j}{\norm{z'_j}}&\nonumber \\
&\quad+ \frac{{z'_i}^\top}{\norm{z'_i}} \xi_{\frac{z'_j}{\norm{z'_j}}} {\color{black}\bar{J}_\sigma} \frac{z'_k}{\norm{z'_k}}& \nonumber \\
&=\begin{bmatrix}\hat{j}_2 \hat{k}_3-\hat{j}_3 \hat{k}_2 & \hat{j}_3 \hat{k}_1-\hat{j}_1 \hat{k}_3&\hat{j}_1 \hat{k}_2-\hat{j}_2 \hat{k}_1\end{bmatrix} {\color{black}\bar{J}_\sigma} \begin{bmatrix}
\hat{i}_1 \\
\hat{i}_2 \\
\hat{i}_3
\end{bmatrix} &\nonumber \\
&\quad-\begin{bmatrix}\hat{i}_1 & \hat{i}_2&\hat{i}_3\end{bmatrix} 
	\begin{bmatrix}
     		0 & -\hat{k}_3 & \hat{k}_2 \\
     		\hat{k}_3 & 0 & -\hat{k}_1 \\
     		-\hat{k}_2 & \hat{k}_1 & 0
  	\end{bmatrix}
{\color{black}\bar{J}_\sigma} 
	\begin{bmatrix}
	\hat{j}_1 \\
	\hat{j}_2 \\
	\hat{j}_3
	\end{bmatrix} &\nonumber \\
&\quad+\begin{bmatrix}\hat{i}_1 & \hat{i}_2&\hat{i}_3\end{bmatrix} 
	\begin{bmatrix}
     		0 & -\hat{j}_3 & \hat{j}_2 \\
     		\hat{j}_3 & 0 & -\hat{j}_1 \\
     		-\hat{j}_2 &\hat{j}_1 & 0
  	\end{bmatrix}
{\color{black}\bar{J}_\sigma}
	\begin{bmatrix}
	\hat{k}_1 \\
	\hat{k}_2 \\
	\hat{k}_3
	\end{bmatrix} &\nonumber \\
&=0  &
\end{align} for {\color{black}$\sigma=1,2,3$.}
Thus, for all $v\in \{ 1,..., m_v\}$,  it holds that $R_V(p) p =0$, which implies that $\vspan\set{p}  \subseteq \vnull(R_V(p))$. 

Consequently, it holds that $\vspan\set{\mathds{1}_n\otimes I_3,{\color{black}(I_n\otimes \bar{J}_1)p,(I_n\otimes \bar{J}_2)p,(I_n\otimes \bar{J}_3)p}} \subseteq \vnull(R_V(p))$. 
From \cite[Lemma 1]{sun2017distributed}, one knows that  $\vspan\set{\mathds{1}_n\otimes I_3,{\color{black}(I_n\otimes \bar{J}_1)p,(I_n\otimes \bar{J}_2)p,(I_n\otimes \bar{J}_3)p}} \subseteq \vnull(R_D(p))$. Therefore, it also holds that  $\vspan\set{\mathds{1}_n\otimes I_3,{\color{black}(I_n\otimes \bar{J}_1)p,(I_n\otimes \bar{J}_2)p,(I_n\otimes \bar{J}_3)p}} \subseteq \vnull(R^v_D(p))$, which implies that $\rank\left(R^v_D(p)\right) \leq 3n-6$.
\end{proof}

\begin{theorem}
\label{Thm:Inf_Rank_dist_3D} 
A framework $(\mathcal{G},p)$ with $\bar{\mathcal{S}}$ is IRDS in $\mathbb{R}^{3}$ if and only if  $\rank(R^v_D(p))=3n-6$. 
\end{theorem}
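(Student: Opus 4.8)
The plan is to mirror the argument used for the two-dimensional counterpart, Theorem~\ref{Thm:Inf_Rank_dist}, now invoking the three-dimensional nullspace characterization established in Lemma~\ref{Lem:null_rank_3d_dist}. The entire proof reduces to comparing the dimension of $\vnull(R^v_D(p))$ with the dimension of the space of trivial infinitesimal motions in $\mathbb{R}^3$, and then translating that comparison into a rank condition via rank--nullity.

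First I would recall from Lemma~\ref{Lem:null_rank_3d_dist} that $\vspan\set{\mathds{1}_n\otimes I_3,(I_n\otimes \bar{J}_1)p,(I_n\otimes \bar{J}_2)p,(I_n\otimes \bar{J}_3)p} \subseteq \vnull(R^v_D(p))$ together with the bound $\rank(R^v_D(p)) \le 3n-6$. The three columns of $\mathds{1}_n\otimes I_3$ generate the rigid-body translations, and the three vectors $(I_n\otimes \bar{J}_\sigma)p$, $\sigma=1,2,3$, generate the rigid-body rotations about the three coordinate axes; together these six vectors span precisely the space of trivial infinitesimal motions of the framework in $\mathbb{R}^3$.

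Next I would invoke the definition of IRDS directly: the framework is IRDS if and only if every infinitesimal motion $\dot{p}$ satisfying $R^v_D(p)\dot{p}=0$ in \eqref{eq:matrix_infini_dist_3D} is trivial, that is, if and only if $\vnull(R^v_D(p))$ coincides with the span of the six trivial generators. Since Lemma~\ref{Lem:null_rank_3d_dist} already guarantees containment of that span in $\vnull(R^v_D(p))$, equality of the two subspaces is equivalent to the equality of their dimensions. By rank--nullity applied to the $(m_d+m_v)\times 3n$ matrix $R^v_D(p)$, one has $\dim\vnull(R^v_D(p))=3n-\rank(R^v_D(p))$, so IRDS is equivalent to $\dim\vnull(R^v_D(p))=6$, i.e. to $\rank(R^v_D(p))=3n-6$; this settles both implications at once.

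The step I expect to be the main obstacle is verifying that the six generators are genuinely linearly independent, so that the trivial-motion space has dimension exactly six rather than fewer. This is where a non-degeneracy hypothesis on $p$ is needed: if the agents were collinear in $\mathbb{R}^3$, a rotation about the axis through that line produces no motion and one of the $(I_n\otimes \bar{J}_\sigma)p$ degenerates, lowering the count below six and breaking the clean ``$3n-6$'' bookkeeping. I would therefore isolate the standard assumption that the points affinely span $\mathbb{R}^3$ (equivalently, are not all collinear), under which the three rotation generators are independent of one another and of the translations; the remainder of the proof is then purely dimensional bookkeeping, identical in spirit to the proof of Theorem~\ref{Thm:Inf_Rank_dist}.
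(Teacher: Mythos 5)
Your proposal is correct and follows essentially the same route as the paper's proof: invoke Lemma~\ref{Lem:null_rank_3d_dist} for the nullspace containment and the bound $\rank(R^v_D(p))\leq 3n-6$, identify $\mathds{1}_n\otimes I_3$ and $(I_n\otimes \bar{J}_\sigma)p$, $\sigma=1,2,3$, with the trivial motions, and conclude by dimensional (rank--nullity) bookkeeping. The non-degeneracy caveat you flag (the six generators being linearly independent, which fails e.g.\ for collinear configurations) is a genuine subtlety, but the paper's own proof passes over it silently, implicitly assuming the trivial-motion space is six-dimensional, so your version is, if anything, slightly more careful than the original.
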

\begin{proof}
From Lemma \ref{Lem:null_rank_3d_dist}, we have $\vspan\set{\mathds{1}_n\otimes I_3,{\color{black}(I_n\otimes \bar{J}_1)p,(I_n\otimes \bar{J}_2)p,(I_n\otimes \bar{J}_3)p}} \subseteq \vnull(R^v_D(p))$ and $\rank\left(R^v_D(p)\right) \leq 3n-6$, and further $\mathds{1}_n\otimes I_3$ and {\color{black}$(I_n\otimes \bar{J}_\sigma)p,\sigma=1,2,3$} correspond to the rigid-body translation and rotation of an entire framework in $\mathbb{R}^{3}$, respectively. Thus, the condition `$\rank(R^v_D(p))=3n-6$' means that all infinitesimal motions satisfying \eqref{eq:matrix_infini_dist_3D} are trivial and vice versa.
\end{proof}

\subsection{Infinitesimal rigidity with angle- and signed volume-constraint in that in $\mathbb{R}^{3}$}\label{Subsec:rigid_ang_3D}
This sections follows the same process as Subsection \ref{Subsec:rigid_dist_3D}.
We define the \myemph{angle-volume rigidity function} $F^v_A: \chi \rightarrow \mathbb{R}^{(m_a+m_v)}$  for some properly defined $\chi\subset\mathbb{R}^{3n}$ as follows
\begin{align} \label{eq:rigidity_fn_ang_3D}
F^v_{A}(p) \triangleq& \left[ A_1, ... ,A_{m_a}, \bar{V}_1, ... ,\bar{V}_{m_v} \right]^\top.
\end{align}
Let us consider the following derivative of \eqref{eq:rigidity_fn_ang_3D}:
\begin{align}\label{eq:matrix_infini_ang_3D}
\dot{F}^v_A=R^v_{A}\dot{p}=0,
\end{align} 
where $R^v_{A}$ is the \myemph{angle-volume rigidity matrix} defined as
\begin{equation}\label{ag_sign3D_rigidity_matrix}
R^v_{A}(p) \triangleq \frac{\partial F^v_{A}(p)}{\partial p} =  \begin{bmatrix}
     \frac{\partial \mbf{A}}{\partial p}\\ \\
	\frac{\partial \mbf{V}}{\partial p}
  \end{bmatrix} = \begin{bmatrix}
     R_W\\ \\
	R_V
  \end{bmatrix} \in 
\mathbb{R}^{(m_a+m_v) \times 3n}
\end{equation}
where $\mbf{A} = \left[A_1, ... ,A_{m_a}\right]^\top \in \mathbb{R}^{m_a}$.
We then have the following results in the same way as Section \ref{Sec:inf_rigid}.

\begin{lemma}\label{Lem:null_rank_3d_angle}
It holds that $\vspan\set{\mathds{1}_n\otimes I_3,{\color{black}(I_n\otimes \bar{J}_1)p,(I_n\otimes \bar{J}_2)p,(I_n\otimes \bar{J}_3)p},p} \subseteq \vnull(R^v_A(p))$ and $\rank\left(R^v_A(p)\right) \leq 3n-7$ for a framework $(\mathcal{G},p)$ with $\bar{\mathcal{S}}$ in $\mathbb{R}^3$. 
\end{lemma}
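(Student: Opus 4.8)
The plan is to mirror the proof of Lemma~\ref{Lem:null_rank_3d_dist}, the only two differences being that the distance block $R_D$ is replaced by the weak (angle) block $R_W$, and that the scaling direction $p$ must additionally be placed in the nullspace. As before, I would first introduce the induced edge set $\mathcal{E}'=\{(i,j),(i,k),(i,l) \mid (i,j,k) \in \mathcal{A} \lor (i,j,k,l) \in \bar{\mathcal{S}}\}$ with $m_t=\card{\mathcal{E}'}$, together with the ordered relative positions $z'_s=z_{ij}$ for $(i,j)\in\mathcal{E}'$, so that $R_V=\frac{\partial \mbf{V}}{\partial z'}\bar{H'}$ with $\bar{H'}=(H'\otimes I_3)$ and $z'=(H'\otimes I_3)p$.

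For the signed-volume block $R_V$, all three required facts are already contained in the proof of Lemma~\ref{Lem:null_rank_3d_dist}: the incidence structure of $\bar{H'}$ gives $\vspan\set{\mathds{1}_n\otimes I_3}\subseteq\vnull(R_V(p))$; the identity $\bar{H'}(I_n\otimes\bar{J}_\sigma)p=[(\bar{J}_\sigma z'_1)^\top,\ldots,(\bar{J}_\sigma z'_{m_t})^\top]^\top$ together with the skew-symmetric expansion carried out there gives $(I_n\otimes\bar{J}_\sigma)p\in\vnull(R_V(p))$ for $\sigma=1,2,3$; and the scale-invariance of the bearing-based signed volume (equivalently, the identity $R_V(p)p=0$ obtained in that proof, since $\bar{V}_\mu$ is degree-zero homogeneous in the relative positions) gives $p\in\vnull(R_V(p))$. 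Hence $\vspan\set{\mathds{1}_n\otimes I_3,(I_n\otimes\bar{J}_1)p,(I_n\otimes\bar{J}_2)p,(I_n\otimes\bar{J}_3)p,p}\subseteq\vnull(R_V(p))$.

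For the angle block $R_W$, I would invoke the three-dimensional counterpart of the weak-rigidity nullspace result used in the proof of Lemma~\ref{Lem:null_rank condition_ang}: since each cosine $A_h$ is invariant under translations, rigid rotations, and uniform scalings of the entire framework, the generators $\mathds{1}_n\otimes I_3$, $(I_n\otimes\bar{J}_\sigma)p$ and $p$ all annihilate $R_W$. Equivalently, appealing to \cite[Lemma 3.3]{kwon2018infinitesimal_inp.} in $\mathbb{R}^3$ with $\mathcal{E}=\emptyset$ gives $\vspan\set{\mathds{1}_n\otimes I_3,(I_n\otimes\bar{J}_1)p,(I_n\otimes\bar{J}_2)p,(I_n\otimes\bar{J}_3)p,p}\subseteq\vnull(R_W(p))$. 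Combining the two blocks, and using that a vector lies in $\vnull(R^v_A)$ if and only if it is annihilated by both $R_W$ and $R_V$, i.e. $\vnull(R^v_A(p))=\vnull(R_W(p))\cap\vnull(R_V(p))$, yields the claimed containment, and the linear independence of the seven generators for a non-degenerate realization $p$ gives $\dim\vnull(R^v_A(p))\ge 7$, i.e. $\rank(R^v_A(p))\le 3n-7$.

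I expect the main obstacle to be the angle block rather than the (already-computed) volume block: one must make sure the weak-rigidity result that was stated in $\mathbb{R}^2$ genuinely carries over to $\mathbb{R}^3$ with the full three-parameter family of rotation generators $\bar{J}_1,\bar{J}_2,\bar{J}_3$ in place of the single planar $J$, together with the scaling generator, rather than simply reusing the planar statement. A secondary point, which I would verify rather than assume, is the linear independence of the seven vectors (the three columns of $\mathds{1}_n\otimes I_3$, the three rotation directions $(I_n\otimes\bar{J}_\sigma)p$, and $p$); this is exactly what upgrades the containment into the stated rank bound, and it holds precisely when the realization $p$ is not contained in any proper affine subspace of $\mathbb{R}^3$.
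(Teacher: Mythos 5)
Your proposal is correct and takes essentially the same approach as the paper's proof: the same decomposition into the $R_W$ and $R_V$ blocks with the same induced edge set, translations and rotations for $R_V$ inherited from Lemma~\ref{Lem:null_rank_3d_dist}, the scaling direction for $R_V$ established equivalently (the paper does the explicit chain-rule computation using $P_{z'_i}z'_i=0$, which your degree-zero-homogeneity argument replaces), and the weak-rigidity nullspace fact for $R_W$ — where the paper resolves the carry-over concern you flag by citing the genuinely three-dimensional result \cite[Lemma 2]{kwon2018generalized} rather than the planar one. The only caveat, which you rightly note and the paper leaves implicit, is that the final rank bound needs the seven generators to be linearly independent; this in fact fails only for collinear (or coincident) configurations, a weaker non-degeneracy requirement than the full affine-span condition you state.
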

\begin{proof}
In the same way as the proof of Lemma \ref{Lem:null_rank_3d_dist}, it holds that $\vspan\set{\mathds{1}_n\otimes I_3,{\color{black}(I_n\otimes \bar{J}_1)p,(I_n\otimes \bar{J}_2)p,(I_n\otimes \bar{J}_3)p}} \subseteq \vnull(R_V(p))$.

Next, we define a new edge set $\mathcal{E}'$ as 
\begin{align}
\mathcal{E}'=\{(i,j),(i,k),(i,l) \mid (i,j,k) \in \mathcal{A} \lor (i,j,k,l) \in \bar{\mathcal{S}}\}.
\end{align}
with $m_t=\card{\mathcal{E}'}$.
Given the subscripts $i$,$j$ and $k$ such that $\bar{V}_\mu=\frac{{z_i}^\top}{\norm{z_i}} \left(\frac{z_j}{\norm{z_j}} \times \frac{z_k}{\norm{z_k}}\right), \mu\in \{ 1,..., m_v\}$ and $i,j,k \in \{ 1,..., m_t\}$, we have the following result with notations in the proof of Lemma \ref{Lem:null_rank_3d_dist} and reference to Lemma \ref{Lem:Jaco_cross}.
\begin{align}
\frac{\partial \bar{V}_\mu}{\partial p} p 
&=\frac{\partial \bar{V}_\mu}{\partial z'} \frac{\partial z'}{\partial p} p  =\frac{\partial \bar{V}_\mu}{\partial z'} \bar{H'} p \nonumber \\
&=\left(\frac{\partial}{\partial z'} \frac{{z'_i}^\top}{\norm{z'_i}} \left(\frac{z'_j}{\norm{z'_j}} \times \frac{z'_k}{\norm{z'_k}}\right)\right)			
	\begin{bmatrix}
	z'_1 \\
	\vdots\\
	z'_{m_t}
	\end{bmatrix} \nonumber \\
&= \left(\frac{z'_j}{\norm{z'_j}} \times \frac{z'_k}{\norm{z'_k}}\right)^\top  \left(\frac{\partial}{\partial z'_i}\frac{z'_i}{\norm{z'_i}}\right) z'_i \nonumber \\
&\quad- \frac{{z'_i}^\top}{\norm{z'_i}} \xi_{\frac{z'_k}{\norm{z'_k}}}\left(\frac{\partial}{\partial z'_j}\frac{z'_j}{\norm{z'_j}}\right) z'_j \nonumber \\
&\quad+ \frac{{z'_i}^\top}{\norm{z'_i}} \xi_{\frac{z'_j}{\norm{z'_j}}} \left(\frac{\partial}{\partial z'_k}\frac{z'_k}{\norm{z'_k}}\right) z'_k  \nonumber \\
&= 0-0+0=0
\end{align} 
where it holds that $P_{z'_i} z'_i=P_{z'_j} z'_j=P_{z'_k} z'_k=0$. Thus, for all $v\in \{ 1,..., m_v\}$,  it holds that $R_V(p) p =0$, or equivalently, that $\vspan\set{p}  \subseteq \vnull(R_V(p))$. Moreover,  it also holds that $\vspan\set{\mathds{1}_n\otimes I_3,{\color{black}(I_n\otimes \bar{J}_1)p,(I_n\otimes \bar{J}_2)p,(I_n\otimes \bar{J}_3)p}} \subseteq \vnull(R_V(p))$ as proved in Lemma~\ref{Lem:null_rank_3d_dist}.

From  \cite[Lemma 2]{kwon2018generalized}, one concludes that $\vspan\set{\mathds{1}_n\otimes I_3,{\color{black}(I_n\otimes \bar{J}_1)p,(I_n\otimes \bar{J}_2)p,(I_n\otimes \bar{J}_3)p},p} \subseteq \vnull(R_W(p))$ when $\mathcal{E} = \emptyset$. Therefore, it also holds that  $\vspan\set{\mathds{1}_n\otimes I_3,{\color{black}(I_n\otimes \bar{J}_1)p,(I_n\otimes \bar{J}_2)p,(I_n\otimes \bar{J}_3)p},p} \subseteq \vnull(R^v_A(p))$, which implies that $\rank\left(R^v_A(p)\right) \leq 3n-7$.
\end{proof}

\begin{theorem}
\label{Thm:Inf_Rank_ang_3D} 
A framework $(\mathcal{G},p)$ with $\bar{\mathcal{S}}$ is IRAS in $\mathbb{R}^{3}$ if and only if  $\rank(R^v_A(p))=3n-7$. 
\end{theorem}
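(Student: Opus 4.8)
The plan is to mirror the proof of Theorem~\ref{Thm:Inf_Rank_dist_3D}, now accounting for the extra scaling direction that distinguishes IRAS from IRDS in $\mathbb{R}^3$. First I would invoke Lemma~\ref{Lem:null_rank_3d_angle}, which already supplies exactly the two ingredients required: the containment $\vspan\set{\mathds{1}_n\otimes I_3,(I_n\otimes \bar{J}_1)p,(I_n\otimes \bar{J}_2)p,(I_n\otimes \bar{J}_3)p,p} \subseteq \vnull(R^v_A(p))$ together with the companion bound $\rank(R^v_A(p)) \leq 3n-7$. These reduce the theorem to a dimension-counting statement about when the rank bound is tight.

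Next I would identify each generator with the physical motion it represents: $\mathds{1}_n\otimes I_3$ spans the three rigid-body translations, the matrices $(I_n\otimes \bar{J}_\sigma)p$ for $\sigma=1,2,3$ span the three rigid-body rotations about the coordinate axes, and $p$ generates the uniform scaling of the entire framework. These seven vectors are precisely the trivial motions in the IRAS sense. The iff conclusion then follows by the same argument as in the earlier theorems: because $\vnull(R^v_A(p))$ always contains this seven-dimensional trivial-motion subspace, the equality $\rank(R^v_A(p))=3n-7$ holds exactly when $\dim\vnull(R^v_A(p))=7$, i.e.\ when $\vnull(R^v_A(p))$ coincides with the span of the seven generators. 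Equivalently, every infinitesimal motion satisfying \eqref{eq:matrix_infini_ang_3D} is trivial, which is the definition of IRAS. Hence $(\mathcal{G},p)$ is IRAS in $\mathbb{R}^3$ if and only if $\rank(R^v_A(p))=3n-7$, and vice versa.

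The bulk of the work is already discharged by Lemma~\ref{Lem:null_rank_3d_angle}, so the remaining argument is short. The only genuine subtlety worth flagging is that the seven generators must be linearly independent for the trivial-motion subspace to attain the full dimension $7$ rather than collapsing to something smaller; this is what makes $3n-7$ the correct threshold. For a non-degenerate realization (in particular, agents not all collinear, so that the three rotational generators, the three translational generators, and the scaling generator $p$ do not coincide), this independence holds and the dimension count goes through cleanly. Thus the main obstacle is not computational but conceptual: ensuring that the scaling direction $\vspan\set{p}$ is genuinely distinct from the six rigid-body directions, so that the trivial motions account for exactly seven degrees of freedom and the rank condition faithfully certifies IRAS.
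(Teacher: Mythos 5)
Your proposal is correct and takes essentially the same approach as the paper: both invoke Lemma~\ref{Lem:null_rank_3d_angle} for the null-space containment and the bound $\rank(R^v_A(p))\leq 3n-7$, identify the seven generators $\mathds{1}_n\otimes I_3$, $(I_n\otimes \bar{J}_\sigma)p$ ($\sigma=1,2,3$), $p$ with the trivial IRAS motions, and conclude by dimension counting that the rank equality holds exactly when every solution of \eqref{eq:matrix_infini_ang_3D} is trivial. Your closing remark that the seven generators must be linearly independent (ruling out degenerate, e.g.\ collinear, realizations) makes explicit a point the paper's proof and lemma leave implicit, but it does not alter the structure of the argument.
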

\begin{proof}
With the fact that $\mathds{1}_n\otimes I_3$, {\color{black}$(I_n\otimes \bar{J}_\sigma)p,\sigma=1,2,3$} and $p$ respectively correspond to a rigid-body translation, a rotation and a scaling transformation of an entire framework, we can reach the result in a similar way to the proof of Theorem \ref{Thm:Inf_Rank_ang}.
Thus, 
we have that the equality $\rank(R^v_A(p))=3n-7$ holds if and only if all infinitesimal motions satisfying \eqref{eq:matrix_infini_ang_3D} are trivial. 
\end{proof}

{\color{black}
\subsection{Discussion on minimal rigidity with hybrid and signed constraints in $\mathbb{R}^{3}$} \label{Subsec:number_const_3D}
The minimal rigidity in $\mathbb{R}^{3}$ can be defined in the same manner as  in $\mathbb{R}^{2}$. As a result, if a framework $(\mathcal{G},p)$ satisfies $\rank(R^v_D)=3n-6=m_d+m_v$ in $\mathbb{R}^3$, then $(\mathcal{G},p)$ is minimally IRDS.
If a framework $(\mathcal{G},p)$ satisfies $\rank(R^v_A)=3n-7=m_a+m_v$ in $\mathbb{R}^3$, then $(\mathcal{G},p)$ is minimally IRAS. In addition, when a minimal rigid formation or a formation without well chosen constraints is considered, the sine ambiguity is also observed in $\mathbb{R}^{3}$, which is due to the fact that the cross product in the definition of the signed volume constraint includes information of the sine function. Thus, we again suggest combining the concept of the hybrid distance-angle rigidity with the concepts of the (distance) rigidity and the weak rigidity to completely eliminate the sine ambiguity, which cannot also make a formation minimally rigid.
\begin{proposition}\label{Proposition_3}
An infinitesimally (distance) rigid framework $(\mathcal{G},p)$ with $n \ge 3$ and $\bar{\mathcal{S}}$ is IRDS in $\mathbb{R}^{3}$ if and only if  $\rank(R^v_D(p))=3n-6$.  
\end{proposition}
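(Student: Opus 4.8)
The plan is to show that the extra hypothesis of infinitesimal distance rigidity already forces $\rank(R^v_D(p)) = 3n-6$, so that Proposition~\ref{Proposition_3} becomes an immediate consequence of Theorem~\ref{Thm:Inf_Rank_dist_3D}. The key device is a rank-sandwiching argument between the distance block $R_D$ and the full distance-volume matrix $R^v_D$.

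First I would invoke the classical rank characterization of infinitesimal distance rigidity \cite{asimow1979rigidity,C:Hendrickson:SIAM1992}, specialized to $d=3$: a framework with $n \ge 3$ is infinitesimally distance rigid in $\mathbb{R}^3$ precisely when $\rank(R_D(p)) = 3n-6$. Thus the standing hypothesis of the proposition is exactly the statement $\rank(R_D(p)) = 3n-6$. Next, recalling from \eqref{dist_sign3D_rigidity_matrix} that $R^v_D(p) = \begin{bmatrix} R_D \\ R_V \end{bmatrix}$ is formed by stacking the signed-volume block $R_V$ beneath $R_D$, the rows of $R_D$ are a subset of the rows of $R^v_D$, so the rank is non-decreasing under this augmentation and $\rank(R^v_D(p)) \ge \rank(R_D(p)) = 3n-6$. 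Lemma~\ref{Lem:null_rank_3d_dist} supplies the matching universal upper bound $\rank(R^v_D(p)) \le 3n-6$, valid for any framework with $\bar{\mathcal{S}}$. Squeezing these two inequalities pins the rank to $\rank(R^v_D(p)) = 3n-6$.

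Finally I would close by applying Theorem~\ref{Thm:Inf_Rank_dist_3D}: since $\rank(R^v_D(p)) = 3n-6$, the framework is IRDS. Under the stated hypothesis both the IRDS property and the rank equality therefore hold unconditionally, so the claimed equivalence is valid with both of its sides always true; the forward implication (IRDS $\Rightarrow$ the rank condition) is in any case already contained in Theorem~\ref{Thm:Inf_Rank_dist_3D} and needs no appeal to distance rigidity.

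I do not anticipate a genuine obstacle here: the only points demanding care are the monotonicity of rank under appending rows and the correct specialization of the classical theorem to $d=3$. The real import of the proposition is conceptual rather than computational --- it certifies that overlaying signed-volume constraints on an infinitesimally distance-rigid framework preserves IRDS while eliminating the sine ambiguity discussed in Subsection~\ref{Subsec:number_const_3D}, at the cost of exceeding the minimal number of constraints.
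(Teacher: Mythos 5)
Your proof is correct. The paper in fact states Proposition \ref{Proposition_3} without any proof --- it appears in the discussion of minimal rigidity in $\mathbb{R}^3$ as an unproved assertion --- so your argument supplies reasoning the authors left implicit, and it is surely the intended one: the classical characterization gives $\rank(R_D(p)) = 3n-6$ under the hypothesis (with $n \ge 3$ matching the classical requirement $n \ge d$ for $d=3$), appending the rows of $R_V$ cannot decrease rank, Lemma \ref{Lem:null_rank_3d_dist} caps the rank of $R^v_D(p)$ at $3n-6$, and Theorem \ref{Thm:Inf_Rank_dist_3D} converts the resulting equality into IRDS. Your closing observation is also worth emphasizing: under the proposition's hypotheses both sides of the biconditional hold unconditionally, so the ``if and only if'' is true in a degenerate way, and the substantive content is the one-directional fact that an infinitesimally distance-rigid framework endowed with any set $\bar{\mathcal{S}}$ of signed-volume constraints is automatically IRDS --- which is precisely the role the proposition plays in the paper's claim that such overlays eliminate the sine ambiguity at the cost of exceeding the minimal number of constraints.
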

\begin{proposition}\label{Proposition_4}
An infinitesimally weakly rigid framework $(\mathcal{G},p)$ with $\bar{\mathcal{S}}$ is IRAS in $\mathbb{R}^{3}$ if and only if  $\rank(R^v_A(p))=3n-7$.  
\end{proposition}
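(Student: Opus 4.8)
The plan is to obtain Proposition~\ref{Proposition_4} as a direct corollary of Theorem~\ref{Thm:Inf_Rank_ang_3D} combined with a rank squeeze that is powered by the infinitesimal weak rigidity hypothesis. The crucial structural fact is that, by \eqref{ag_sign3D_rigidity_matrix}, $R^v_A(p) = \begin{bmatrix} R_W \\ R_V \end{bmatrix}$ carries $R_W$ as its top row block. Hence the row space of $R_W$ is contained in that of $R^v_A$, which immediately forces $\rank(R^v_A(p)) \geq \rank(R_W(p))$. The entire argument therefore reduces to pinning down $\rank(R^v_A(p))$ exactly, after which the claimed equivalence is read off from Theorem~\ref{Thm:Inf_Rank_ang_3D}.

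First I would invoke the infinitesimal weak rigidity rank characterization reviewed in Section~\ref{Inf_weak_rigid_pre}, specialized to the angle-only setting $\mathcal{E} = \emptyset$ with $d = 3$: infinitesimal weak rigidity forces $\rank(R_W(p)) = 3n - (3^2+3+2)/2 = 3n-7$. Combined with the row-block inclusion above, this gives the lower bound $\rank(R^v_A(p)) \geq 3n-7$. Next I would apply Lemma~\ref{Lem:null_rank_3d_angle} to obtain the matching upper bound $\rank(R^v_A(p)) \leq 3n-7$. Squeezing the two bounds yields $\rank(R^v_A(p)) = 3n-7$, and Theorem~\ref{Thm:Inf_Rank_ang_3D} then certifies that the framework is IRAS. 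The reverse implication, namely IRAS $\Rightarrow \rank(R^v_A(p)) = 3n-7$, is inherited verbatim from Theorem~\ref{Thm:Inf_Rank_ang_3D} (which needs no weak rigidity hypothesis), so the stated equivalence holds.

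Equivalently, and perhaps more transparently, I would carry out the argument at the level of null spaces. Since $R^v_A$ stacks all rows of $R_W$, we always have $\vnull(R^v_A(p)) \subseteq \vnull(R_W(p))$. Under weak rigidity with $\mathcal{E} = \emptyset$, the latter is exactly the seven-dimensional trivial space $\vspan\set{\mathds{1}_n\otimes I_3,(I_n\otimes \bar{J}_1)p,(I_n\otimes \bar{J}_2)p,(I_n\otimes \bar{J}_3)p,p}$ spanned by translations, rotations and scaling. Lemma~\ref{Lem:null_rank_3d_angle} supplies the reverse inclusion of this trivial space into $\vnull(R^v_A(p))$, so the two null spaces coincide and both equal the trivial space; this is precisely the IRAS condition, and it fixes $\rank(R^v_A(p)) = 3n-7$.

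The step demanding the most care is reconciling the two frames of reference: I must verify that the weak rigidity invoked is genuinely the angle-only ($\mathcal{E} = \emptyset$) case, so that its null space is the full translation–rotation–scaling space of dimension seven and its rank is exactly $3n-7$, in exact correspondence with the scaling freedom $\vspan\set{p}$ that IRAS also permits. Were distance constraints present ($\mathcal{E} \neq \emptyset$), the weak rigidity rank would instead be $3n-6$, the scaling degree of freedom would be absent, and the squeeze would fail; thus the only genuine obstacle is confirming that the motions annihilated by $R_W$ and the residual trivial motions of $R^v_A$ match exactly. Everything else is the routine stacking and rank bookkeeping already carried out in Lemma~\ref{Lem:null_rank_3d_angle} and Theorem~\ref{Thm:Inf_Rank_ang_3D}.
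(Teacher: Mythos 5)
Your proposal is correct, and it is actually more informative than what the paper records: the paper states Proposition~\ref{Proposition_4} with no proof at all, implicitly treating it as a specialization of Theorem~\ref{Thm:Inf_Rank_ang_3D}, whose equivalence ``IRAS $\Leftrightarrow \rank(R^v_A(p))=3n-7$'' already holds for every framework with $\bar{\mathcal{S}}$, so that restricting to infinitesimally weakly rigid frameworks costs nothing. Your squeeze argument proves the sharper fact that the hypothesis makes \emph{both} sides of the equivalence automatic: since $R^v_A$ stacks $R_W$ by \eqref{ag_sign3D_rigidity_matrix}, angle-only infinitesimal weak rigidity gives $\rank(R^v_A(p))\geq\rank(R_W(p))=3n-(3^2+3+2)/2=3n-7$, Lemma~\ref{Lem:null_rank_3d_angle} caps the rank at $3n-7$, and Theorem~\ref{Thm:Inf_Rank_ang_3D} then yields IRAS unconditionally. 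In particular the signed-volume rows never change the rank; their only role is to fix the sign and eliminate the sine ambiguity, which is exactly the purpose the surrounding text assigns to this proposition but never demonstrates. Two remarks, neither a gap relative to the paper's own level of rigor: first, you are right to insist that ``infinitesimally weakly rigid'' must be read in the $\mathcal{E}=\emptyset$ sense (rank $3n-7$, scaling permitted); this is the only reading consistent with $R_W$ in \eqref{ag_sign3D_rigidity_matrix} being the Jacobian of the cosine constraints alone, and with the paper's own invocation of the $\mathcal{E}=\emptyset$ case of \cite{kwon2018generalized} inside the proof of Lemma~\ref{Lem:null_rank_3d_angle}. Second, your null-space variant uses that the seven motions $\set{\mathds{1}_n\otimes I_3,(I_n\otimes \bar{J}_1)p,(I_n\otimes \bar{J}_2)p,(I_n\otimes \bar{J}_3)p,p}$ are linearly independent at $p$, an assumption the paper also makes implicitly in Theorem~\ref{Thm:Inf_Rank_ang_3D} and throughout Section~\ref{Sec:inf_rigid_3D}.
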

}
{\color{black}
\section{Signed Henneberg construction with hybrid distance-angle and signed constraints} \label{Sec:solution_ambiguity}
As discussed in Sections \ref{Sec:inf_rigid} and \ref{Sec:inf_rigid_3D}, the hybrid distance-angle rigidity with signed constraints can only guarantee locally unique formation shape. Moreover, the ambiguity issues cannot fully be resolved. In this sense, we would like to suggest a specific technique by modifying Henneberg construction \cite{tay1985generating,eren2004merging}, where the modified Henneberg construction is termed \myemph{signed Henneberg construction}. The traditional Henneberg construction is used to generate minimally rigid formations \cite{anderson2008rigid,eren2004merging}. However, in this paper, we use the modified Henneberg construction in a different way such that the signed Henneberg construction is a sequential technique to make a formation globally unique\footnote{{\color{black}A globally unique formation shape means that the shape cannot be deformed by any motion of the formation except a translation and a rotation of entire formation by a set of distance and signed constraints (or a translation, a rotation and a scaling of entire formation by a set of unsigned angle and signed constraints).}}. The main idea is to use triangular formations in 2D  and tetrahedral formations in 3D as shown in \myfig\ref{Fig:3_distance} and \myfig\ref{Fig:3_angle}. 
}
\begin{figure}[]
\centering
\subfigure[Triangular formation with 3-distance and 1-signed angle constraints in 2D.]{ \label{Fig:3_a}
\quad\,\,\begin{tikzpicture}[scale=1]
\node[place] (node2) at (0,2) [label=above:$2$] {};
\node[place] (node1) at (-1,0) [label=left:$1$] {};
\node[place] (node3) at (1,0) [label=right:$3$] {};

\draw[lineUD] (node1)  -- node [above left] {$d_{12}$} (node2);
\draw[lineUD] (node2)  -- node [above right] {$d_{23}$} (node3);
\draw[lineUD] (node1)  -- node [below] {$d_{13}$} (node3);

\pic [draw, -, "$(\theta^s)^2_{13}$", angle eccentricity=1.8] {angle = node1--node2--node3};
\end{tikzpicture}\quad\,\,%
}~
\subfigure[Tetrahedral formation with 6-distance and 1-signed volume constraints in 3D, where the volume constraint is $\bar{V}_{4123}$.]{ \label{Fig:3_b}
\quad\,\,\begin{tikzpicture}[scale=0.7]
\node[place] (node1) at (0.5,0,3.5) [label=below:$1$] {};
\node[place] (node2) at (3.5,0,4.5) [label=below:$2$] {};
\node[place] (node3) at (3,0,1) [label=right:$3$] {};
\node[place] (node4) at (0,1.8,0) [label=above right:$4$] {};

\draw (node1) [lineUD] -- node [left] {} (node2);
\draw (node1) [dashed] -- node [right] {} (node3);
\draw (node1) [lineUD] -- node [right] {} (node4);
\draw (node2) [lineUD] -- node [below] {} (node3);
\draw (node2) [lineUD] -- node [below] {} (node4);
\draw (node3) [lineUD] -- node [below] {} (node4);
\end{tikzpicture}\quad\,\,%
}
\caption{{\color{black}Formations with distance and signed angle/volume constraints}} \label{Fig:3_distance}
\subfigure[Triangular formation with 2-angle and 1-signed angle constraints in 2D.]{ \label{Fig:3_c}
\quad\begin{tikzpicture}[scale=1]
\node[place] (node2) at (0,2) [label=above:$2$] {};
\node[place] (node1) at (-1,0) [label=left:$1$] {};
\node[place] (node3) at (1,0) [label=right:$3$] {};

\draw[dashed] (node1)  --  (node2);
\draw[dashed] (node2)  --  (node3);
\draw[dashed] (node1)  --  (node3);

\pic [draw, -, "$(\theta^s)^2_{13}$", angle eccentricity=1.8] {angle = node1--node2--node3};
\pic [draw, -, "$\theta^1_{23}$", angle eccentricity=1.6] {angle = node3--node1--node2};
\pic [draw, -, "$\theta^3_{12}$", angle eccentricity=1.6] {angle = node2--node3--node1};
\end{tikzpicture}\quad
}~
\subfigure[Tetrahedral formation with 5-angle and 1-signed volume constraints in 3D, where the volume constraint is $\bar{V}_{4123}$.]{ \label{Fig:3_d}
\quad\,\,\begin{tikzpicture}[scale=0.7]
\node[place] (node1) at (0.5,0,3.5) [label=below:$1$] {};
\node[place] (node2) at (3.5,0,4.5) [label=below:$2$] {};
\node[place] (node3) at (3,0,1) [label=right:$3$] {};
\node[place] (node4) at (0,1.8,0) [label=above right:$4$] {};

\draw (node1) [dashed] -- node [left] {} (node2);
\draw (node1) [dashed] -- node [right] {} (node3);
\draw (node1) [dashed] -- node [right] {} (node4);
\draw (node2) [dashed] -- node [below] {} (node3);
\draw (node2) [dashed] -- node [below] {} (node4);
\draw (node3) [dashed] -- node [below] {} (node4);

\pic [draw, -,angle radius=0.6cm] {angle = node2--node1--node3};
\pic [draw, -,angle radius=0.4cm] {angle = node1--node3--node2};
\pic [draw, -,angle radius=0.4cm] {angle = node4--node2--node1};
\pic [draw, -,angle radius=0.4cm] {angle = node1--node4--node2};
\pic [draw, -,angle radius=0.6cm] {angle = node2--node4--node3};
\end{tikzpicture}\quad\,\,%
}
\caption{{\color{black}Formations with unsigned angle and signed angle/volume constraints.}} \label{Fig:3_angle}
\end{figure}

{\color{black}
 The operation of the signed Henneberg construction is as follows. For a given globally unique formation  $\mathcal{G} = (\mathcal{V},\mathcal{E}, \mathcal{A})$ in 2D, a vertex $i$, one signed angle constraint and two distance constraints (or two unsigned angle constraints in place of the two distance constraints) are added to the formation in order that the combined formation is composed of only triangular formations with hybrid distance-angle and signed constraints; for example, see \myfig\ref{Fig:4_Henneberg}. In the case of 3D, a vertex $i$, one signed volume constraint and three distance constraints (or three unsigned angle constraints in place of the three distance constraints) are added to a given globally unique formation  $\mathcal{G} = (\mathcal{V},\mathcal{E}, \mathcal{A})$ in order that the combined formation is composed of only tetrahedral formations with hybrid distance-angle and signed constraints.  
 We can see that a formation generated by sequences of the signed Henneberg construction is globally unique since any motion of the formation is not allowed except  a translation and a rotation of entire formation by a set of distance and signed constraints (or a translation, a rotation and a scaling of entire formation by a set of unsigned angle and signed constraints).
 We then have the following result with Propositions \ref{Proposition_1}, \ref{Proposition_2}, \ref{Proposition_3} and \ref{Proposition_4}.
 
\begin{corollary}
If an IRDS or IRAS framework $(\mathcal{G},p)$  is generated by sequences of the signed Henneberg construction, then the framework is globally unique up to a translation and a rotation by a set of distance and signed constraints or a translation, a rotation and a scaling by a set of unsigned angle and signed constraints.
\end{corollary}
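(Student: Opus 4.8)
The plan is to prove the corollary by induction on the number of signed Henneberg operations used to generate the framework, exploiting the fact that each operation appends exactly one new vertex whose position, relative to the already-constructed subformation, is pinned down uniquely by the attached triangular (in $\mathbb{R}^2$) or tetrahedral (in $\mathbb{R}^3$) block of constraints. For the base case I would take the initial globally unique formation, which exists by assumption in the construction. For the inductive step, I would suppose that the formation $\mathcal{F}_t$ obtained after $t$ operations is globally unique up to the appropriate transformation group (translation and rotation in the IRDS case, and additionally scaling in the IRAS case), and then analyse the single operation producing $\mathcal{F}_{t+1}$ by adding a vertex $i$.

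For the IRDS case in $\mathbb{R}^2$, the new vertex $i$ is joined to two vertices $j,k$ of $\mathcal{F}_t$ by two distance constraints together with one signed angle constraint, so that $(i,j,k)$ forms a triangle whose three side lengths are all determined (the edge $(j,k)$ being fixed because $\mathcal{F}_t$ is already rigid). I would first observe that the two distance constraints place $i$ at one of exactly two candidate points, namely the intersection points of the circles centred at $p_j$ and $p_k$, and that these candidates are reflections of one another across the line through $p_j$ and $p_k$. I would then show that the signed angle constraint $\bar{S}_{ijk}=\sin((\theta^s)^i_{jk})$ takes opposite signs at the two candidates, since a reflection reverses orientation, so exactly one candidate is consistent with the prescribed value. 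Crucially, the sine ambiguity $\sin\alpha=\sin(\pi-\alpha)$ introduces no spurious solution here, because the unsigned angle $\theta^i_{jk}$ is already fixed by the three side lengths of the distance-rigid triangle; the signed constraint is needed only to select the orientation, which it does. Hence $p_i$ is uniquely determined relative to $\mathcal{F}_t$, and $\mathcal{F}_{t+1}$ remains globally unique. The angle-based IRAS variant is handled identically, the two attached constraints now being unsigned angles (fixing the triangle up to similarity, consistent with the scaling freedom) while the signed angle again resolves the reflection.

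The $\mathbb{R}^3$ cases are completely analogous: the new vertex $i$ is attached to three vertices $j,k,l$ of $\mathcal{F}_t$ by three distance (resp.\ angle) constraints and one signed volume constraint, forming a tetrahedron. The three distance constraints leave $i$ at one of two points symmetric with respect to the plane spanned by $p_j,p_k,p_l$, and the signed volume $\bar{V}_{ijkl}$ changes sign under that reflection, so the prescribed signed value isolates the unique admissible position; the orientation-only role of the signed volume again neutralises the sine ambiguity because the tetrahedron shape is already fixed by the three distances. Throughout, I would invoke Propositions \ref{Proposition_1}--\ref{Proposition_4} to certify that each elementary block, being distance-rigid (resp.\ weakly rigid) together with its signed constraint, is genuinely IRDS/IRAS, so that the reflection-related pair really is the complete solution set of the distance/angle subsystem.

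The step I expect to be the main obstacle is the rigorous verification in the inductive step that the distance (or angle) subsystem of the appended block admits no admissible realization other than the reflection-related pair — that is, ruling out additional global solutions and confirming that the prescribed constraints are consistent and that the intersecting circles/spheres meet transversally. This is precisely where the global rigidity of the elementary triangle/tetrahedron must be combined with the sign-selection argument, and it is the point at which the gap between the local (infinitesimal) uniqueness guaranteed by the rank conditions of Theorems \ref{Thm:Inf_Rank_dist}--\ref{Thm:Inf_Rank_ang_3D} and the desired global uniqueness has to be bridged.
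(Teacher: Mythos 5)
Your proof is correct and follows essentially the same route as the paper: the paper's justification of this corollary is precisely the informal observation that each signed Henneberg step appends one vertex whose position is pinned down by the attached triangular (2D) or tetrahedral (3D) block, the signed constraint selecting between the two reflection-related candidates, combined with Propositions~\ref{Proposition_1}--\ref{Proposition_4}. Your inductive circle/sphere-intersection and sign-selection argument is a rigorous elaboration of exactly this reasoning; the paper itself offers no more detailed proof.
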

}
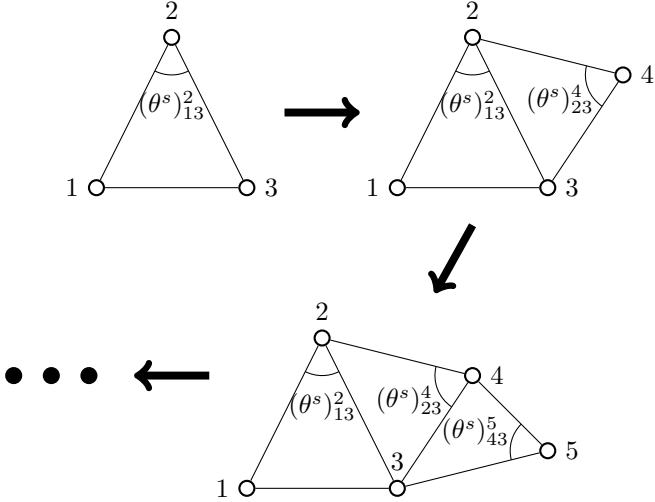
\begin{figure}[]
\centering
\begin{tikzpicture}[scale=1]
\node[place] (node2) at (0,2) [label=above:$2$] {};
\node[place] (node1) at (-1,0) [label=left:$1$] {};
\node[place] (node3) at (1,0) [label=right:$3$] {};

\draw[lineUD] (node1)  -- node{} (node2);
\draw[lineUD] (node2)  -- node{} (node3);
\draw[lineUD] (node1)  -- node{} (node3);

\pic [draw, -, "$(\theta^s)^2_{13}$", angle eccentricity=1.8] {angle = node1--node2--node3};
\draw[draw=black, line width=3.0pt, ->] (1.5,1) -- (2.5,1){};
\node[place] (node2') at (4,2) [label=above:$2$] {};
\node[place] (node1') at (3,0) [label=left:$1$] {};
\node[place] (node3') at (5,0) [label=right:$3$] {};
\node[place] (node4') at (6,1.5) [label=right:$4$] {};

\draw[lineUD] (node1')  -- node{} (node2');
\draw[lineUD] (node2')  -- node{} (node3');
\draw[lineUD] (node1')  -- node{} (node3');
\draw[lineUD] (node2')  -- node{} (node4');
\draw[lineUD] (node3')  -- node{} (node4');

\pic [draw, -, "$(\theta^s)^2_{13}$", angle eccentricity=1.8] {angle = node1'--node2'--node3'};
\pic [draw, -, "$(\theta^s)^4_{23}$", angle eccentricity=1.8] {angle = node2'--node4'--node3'};
\draw[draw=black, line width=3.0pt, ->] (4,-0.5) -- (3.5,-1.4){};
\node[place] (node2'') at (2,-2) [label=above:$2$] {};
\node[place] (node1'') at (1,-4) [label=left:$1$] {};
\node[place] (node3'') at (3,-4) [label=above:$3$] {};
\node[place] (node4'') at (4,-2.5) [label=right:$4$] {};
\node[place] (node5'') at (5,-3.5) [label=right:$5$] {};

\draw[lineUD] (node1'')  -- node{} (node2'');
\draw[lineUD] (node2'')  -- node{} (node3'');
\draw[lineUD] (node1'')  -- node{} (node3'');
\draw[lineUD] (node2'')  -- node{} (node4'');
\draw[lineUD] (node3'')  -- node{} (node4'');
\draw[lineUD] (node4'')  -- node{} (node5'');
\draw[lineUD] (node3'')  -- node{} (node5'');

\pic [draw, -, "$(\theta^s)^2_{13}$", angle eccentricity=1.8] {angle = node1''--node2''--node3''};
\pic [draw, -, "$(\theta^s)^4_{23}$", angle eccentricity=1.8] {angle = node2''--node4''--node3''};
\pic [draw, -, "$(\theta^s)^5_{43}$", angle eccentricity=2.0] {angle = node4''--node5''--node3''};
\draw[draw=black, line width=3.0pt, ->] (0.5,-2.5) -- (-0.5,-2.5){};
\node[vertex] (node3''') at (-1.1,-2.5) {};
\node[vertex] (node3''') at (-1.6,-2.5) {};
\node[vertex] (node3''') at (-2.1,-2.5) {};
\end{tikzpicture}
\caption{{\color{black}Example of a formation with the signed Henneberg construction, where the solid lines denote the distance constraints.}} \label{Fig:4_Henneberg}
\end{figure}
\section{Formation shape control with hybrid distance-angle rigidity theory} \label{Sec:controller}
In this section, as an application of the hybrid rigidity theories developed in the previous sections, we solve the following formation shape stabilization problem with a set of distance (or angle) and signed constraints in $\mathbb{R}^{d}$.
\begin{problem}
{\color{black}For a given set of formations congruent to a target IRDS (or similar to a target IRAS) formation}, design a distributed control law for $n$ agents with displacement measurements of each agent's neighbors such that $e \to 0$ as  $t\to \infty$, where the symbol $e$ denotes an error which will be defined in the next subsections.
\end{problem}

We utilize the gradient control law studied in \cite{sakurama2015distributed,krick2009stabilisation,bishop2015distributed,park2014stability}. We firstly show that an IRDS formation converges to a desired formation under the proposed gradient control law. Then, we show similar results for IRAS formations. 

\subsection{Formation control with distance and signed constraints}
This subsection proposes a distributed control law with distance and signed constraints. We first define several notations.
Let us consider the following distance and signed angle constraints. 
\begin{align}
d_c(p) &= \left[\ldots, \frac{1}{2}\norm{z_{g}}^2, \ldots \right]^\top, \\
s_c(p) &= \left[\ldots, k\bar{S}_w, \ldots \right]^\top,
\end{align}
and we define vectors composed of desired distance and signed angle constraints as, respectively,
\begin{align} 
d_c^* &= \left[\ldots, \frac{1}{2}\norm{z^*_g}^2, \ldots \right]^\top, \\
s_c^* &= \left[\ldots, k\bar{S}_w^*, \ldots \right]^\top,
\end{align}
where $k$ denotes a constant.
If  signed volume constraints  are considered, then the vectors for the constraints and desired values are respectively given by
\begin{align}
s_c(p) = \left[\ldots, k\bar{V}_\mu, \ldots \right]^\top, \\
s_c^* = \left[\ldots, k\bar{V}_\mu^*, \ldots \right]^\top,
\end{align}
Then, we define the formation errors as
\begin{equation}
e(p)=\left[d_c(p)^\top, s_c(p)^\top \right]^\top - \left[d_c^{*\top}, s_c^{*\top} \right]^\top.
\end{equation}
We can now represent the following gradient-based formation controller:
\begin{align}\label{control_law_dist01}
\dot{p}=u \triangleq-\left(\nabla\phi\right)^\top&=
\begin{cases}
-{\overline{R}^s_D}^\top(p) e(p) & \text{in 2D,} \\
-{\overline{R}^v_D}^\top(p) e(p) & \text{in 3D}
\end{cases}\nonumber \\
&=-\overline{R}_D^\top(p) e(p)
\end{align}
where $\phi=\frac{1}{2}e^\top(p)e(p)$,  $\overline{R}^s_D=\begin{bmatrix}R_D^\top & k R_S^\top\end{bmatrix}^\top$ and $\overline{R}^v_D=\begin{bmatrix}R_D^\top & k R_V^\top\end{bmatrix}^\top$. The control gain $k$ is for adjusting the convergence speed, i.e., a larger $k$ can lead to a faster convergence rate. Note that $\vnull(\overline{R}^s_D)=\vnull(R^s_D)$ and $\vnull(\overline{R}^v_D)=\vnull(R^v_D)$.
This paper only assumes that each agent can measure relative positions of its neighbors, that is, no communication among agents is considered. However, angle and signed constraints require to modify the definitions of neighbors in terms of $\mathcal{G} = (\mathcal{V},\mathcal{E}, \mathcal{A})$ with $\mathcal{S}$ or $\bar{\mathcal{S}}$. 
We thus define a sensing topology based on the controllers \eqref{control_law_dist01} and \eqref{control_law_ang01} as follows.
\begin{definition}[Sensing topology]
The sensing topology follows a new undirected graph $\mathcal{G}^* = (\mathcal{V},\mathcal{E}^*)$ where $\mathcal{E}^*=\set{(i,j),(i,k),(j,k) \given (i,j) \in \mathcal{E} \lor (i,j,k) \in \mathcal{A} \lor (i,j,k) \in \mathcal{S} }$ in 2D or $\mathcal{E}^*=\set{(i,j),(i,k),(i,l),(j,k),(j,l),(k,l) \given (i,j) \in \mathcal{E} \lor (i,j,k) \in \mathcal{A} \lor (i,j,k,l) \in \bar{\mathcal{S}} }$ in 3D.
\end{definition}
\begin{theorem}\label{Thm:stability_dist}
Let $\psi$ denote a set {\color{black}of $p$  corresponding to formations} congruent to a desired IRDS formation. Then, there exists a neighborhood $\mathcal{B}_{p^*}$ of $p^*$ for any $p^* \in \psi$ such that the initial formation {\color{black} with} $p(0)$ converges to a fixed formation {\color{black} with $p^\dagger \in \psi$} exponentially fast if $p(0) \in \mathcal{B}_{p^*}$. 
\end{theorem}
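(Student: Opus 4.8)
The plan is to read the closed loop as a negative gradient flow and analyze it by linearization around a point of the equilibrium manifold $\psi$, using the rank characterization of IRDS (Theorem~\ref{Thm:Inf_Rank_dist} in $\mathbb{R}^2$, Theorem~\ref{Thm:Inf_Rank_dist_3D} in $\mathbb{R}^3$) to pin down the structure of the Hessian. First I would note that \eqref{control_law_dist01} is precisely the negative gradient flow $\dot p=-(\nabla\phi)^\top$ of the nonnegative potential $\phi=\frac12 e^\top(p)e(p)$, so that $\dot\phi=-\norm{\overline{R}_D^\top(p)e(p)}^2\le 0$ along every trajectory, and each $p^*\in\psi$ satisfies $e(p^*)=0$, hence is a global minimizer of $\phi$ and an equilibrium. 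Locally around $p^*$ the zero set of $\phi$ coincides with $\psi$.

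Next I would linearize the right-hand side of \eqref{control_law_dist01} at an arbitrary $p^*\in\psi$. Because $e(p^*)=0$, the term differentiating $\overline{R}_D^\top(p)$ drops out and the Jacobian equals $-Q$ with $Q\defeq\overline{R}_D^\top(p^*)\,\overline{R}_D(p^*)=\nabla^2\phi(p^*)$, which is symmetric and positive semidefinite. Since $k\neq 0$ gives $\vnull(\overline{R}_D)=\vnull(R^s_D)$ (resp. $\vnull(R^v_D)$), the IRDS hypothesis and Theorem~\ref{Thm:Inf_Rank_dist} (resp. Theorem~\ref{Thm:Inf_Rank_dist_3D}) yield $\rank(Q)=\rank(\overline{R}_D)=2n-3$ in $\mathbb{R}^2$ (resp. $3n-6$ in $\mathbb{R}^3$). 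Thus $Q$ has exactly $3$ (resp. $6$) zero eigenvalues, all other eigenvalues are strictly positive, and by Lemma~\ref{Lem:null_rank condition_dist} (resp. Lemma~\ref{Lem:null_rank_3d_dist}) its kernel is exactly the space of trivial motions, e.g. $\vnull(Q)=\vspan\set{\mathds{1}_n\otimes I_2,(I_n\otimes J)p^*}$ in $\mathbb{R}^2$.

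Then I would invoke center-manifold theory. The linearization has no unstable part; its center subspace is $\vnull(Q)=T_{p^*}\psi$, and its stable subspace is the range of $Q$, along which the flow contracts at a rate no slower than the smallest positive eigenvalue of $Q$ (which grows with $k$, consistent with the remark that larger $k$ speeds convergence). Since $\psi$ is locally a smooth equilibrium manifold, namely the orbit of the target shape under the rigid-motion group, of dimension $3$ in $\mathbb{R}^2$ and $6$ in $\mathbb{R}^3$ for a locally rigid (IRDS) configuration, its dimension matches $\dim\vnull(Q)$, so $\psi$ itself serves as the center manifold and the reduced dynamics on it vanishes. The reduction principle (the equilibrium manifold being normally hyperbolic with exponentially attracting stable fibers) then shows that every trajectory starting in a sufficiently small neighborhood $\mathcal{B}_{p^*}$ approaches $\psi$ exponentially; since the transverse velocity decays exponentially the path has finite length, so the trajectory converges to a single point $p^\dagger\in\psi$, and $\phi$ serves as a Lyapunov function transverse to $\psi$ that upgrades this to exponential decay of $e(p(t))\to 0$.

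The main obstacle is that $\psi$ is a continuum of equilibria rather than an isolated point, so $Q$ is only positive semidefinite and the naive estimate $\dot\phi\le-c\,\phi$ fails on all of $\mathbb{R}^{dn}$. The remedy is exactly the center-manifold (equivalently, shape-space quotient) reduction above: after quotienting out the rigid-body group the induced equilibrium is isolated and hyperbolic, and its exponential stability lifts back to the statement for $p$. A secondary point to verify is that $\mathcal{B}_{p^*}$ and the rate can be chosen uniformly over the congruence class, and that the discrete flip/reflection ambiguity is excluded so that $p^\dagger$ lies in the same connected component of $\psi$ as $p^*$; the latter is guaranteed because IRDS certifies local uniqueness of the shape by Theorem~\ref{Thm:Inf_Rank_dist}.
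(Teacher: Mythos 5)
Your proposal is correct and follows essentially the same route as the paper: linearize the gradient flow at $p^*\in\psi$ (the term with $\partial\overline{R}_D^\top/\partial p$ vanishes because $e(p^*)=0$, leaving $-\overline{R}_D^\top(p^*)\overline{R}_D(p^*)$), use the IRDS rank theorems to identify exactly $d(d+1)/2$ zero eigenvalues with kernel equal to the trivial motions, and then apply center-manifold theory with the equilibrium manifold $\psi$ itself serving as the (local) center manifold, so the reduced dynamics vanish and the asymptotic-phase/reduction principle gives exponential convergence to a single point $p^\dagger\in\psi$. The paper's proof carries this out via an explicit similarity transformation into $(\varphi,\rho)$ coordinates and a citation to a center-manifold stability theorem, but the substance is identical to yours.
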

\begin{proof}
This proof is based on the center manifold theory \cite{wiggins2003introduction,summers2011control}.
Let $f(p)=\eqref{control_law_dist01}=-\overline{R}_D^\top(p) e(p)$. 
Then, we have the Jacobian $H_f$ of $f(p)$ at $p^*$ as follows.
\begin{align}
H_f(p^*)&=\left.\frac{\partial f(p)}{\partial p}\right|_{p=p^*} \nonumber \\
&=-\left.\frac{\partial\overline{R}_D^\top(p)}{\partial p}\right|_{p=p^*} e(p^*)-\overline{R}_D^\top(p^*) \left.\frac{\partial e(p)}{\partial p}\right|_{p=p^*} \nonumber \\
&= -\overline{R}_D^\top(p^*) \overline{R}_D(p^*).
\end{align}
Thus, from Theorem \ref{Thm:Inf_Rank_dist} and Theorem \ref{Thm:Inf_Rank_dist_3D}, $H_f$ at $p^*$ has $d(d+1)/2$ eigenvalues with zero real part and $dn-d(d+1)/2$ eigenvalues with negative real part. 
Employing Taylor series for $f(p)$ about $p^*$, we have $f(p)=f(p^*)+\left.\frac{\partial f(p)}{\partial p}\right|_{p=p^*}\delta+g(\delta)$, where $\delta=p-p^*$.
Moreover, there exists a similarity transformation $Q$ such that $Q H_f(p^*) Q^\top=\text{diag}(\bar{H}_0,\bar{H}_n)$ where $\bar{H}_0$ is a diagonal matrix including zero eigenvalues and $\bar{H}_n$ is a square matrix having negative eigenvalues, i.e., $\bar{H}_n$ is Hurwitz. Then, defining $(\varphi^\top,\rho^\top)^\top=Q\delta$, we locally have the following form from \eqref{control_law_dist01}.
\begin{align}\label{Eq:dynamics_trs}
\dot{\varphi}&=\bar{H}_0\varphi + g_1(\varphi,\rho)\nonumber \\
&=g_1(\varphi,\rho), \nonumber \\
\dot{\rho}&=\bar{H}_n \rho + g_2(\varphi,\rho),
\end{align}
where $(g_1^\top,g_2^\top)^\top=Qg(\delta)$, and $g_1(0,0)=0$ and $g_2(0,0)=0$ since $p^*$ is an equilibrium point of \eqref{control_law_dist01}.
It holds that $J_g(p^*)=\left.\frac{\partial g(\delta)}{\partial p}\right|_{\delta=0}=0$ since $g(\delta)=f(p)-H_f(p^*)\delta$, which implies that $J_{g_1}(0,0)=0$ and $J_{g_2}(0,0)=0$ where $J_{g_1}$ and $J_{g_2}$ denote Jacobians of $g_1$ and $g_2$, respectively.
Then, with reference to Theorem 4 in \cite{summers2011control}, there exists a $\mathcal{C}^r$ (i.e., $r$ times differentiable) center manifold $\mathcal{M}$ for \eqref{Eq:dynamics_trs} with a local representation function $h(\varphi)=\rho: \mathbb{R}^{d(d+1)/2} \rightarrow \mathbb{R}^{dn-d(d+1)/2}$, and the dynamics \eqref{Eq:dynamics_trs} on the center manifold is governed by $d(d+1)/2$-dimensional nonlinear system:
\begin{align}
\dot{\xi}&=g_1(\xi,h(\xi)) 
\end{align}
for sufficiently small $\xi \in \mathbb{R}^{d(d+1)/2}$. 

Let us next consider $d(d+1)/2$-dimensional manifold defined as $\mathcal{M}=\set{(\varphi^\top,\rho^\top)^\top\given (\varphi^\top,\rho^\top)^\top=Q\delta, p \in \psi}$. Then, $\mathcal{M}$ is invariant since $\mathcal{M}$ is an equilibrium manifold. At equilibrium points,  it holds that $\dot{\rho}=\bar{H}_n \rho + g_2(\varphi,\rho)=0$, which implies that there exists a neighborhood $B_{(\varphi,\rho)}$ of $(\varphi(0),\rho(0))$ such that $\mathcal{M}\cap B_{(\varphi,\rho)}=\set{(\varphi,\rho)\given \rho=h(\varphi), h(0)=0,J_h(0)=0}$ by the implicit function theorem, where $J_h$ is Jacobian of $h$. This means that $\mathcal{M}$ is a center manifold by the definition in \cite{wiggins2003introduction}. Therefore, $\dot{\xi}=0$ and it follows from \cite[Theorem 4]{summers2011control} that
$\varphi(t)=\xi(0)+O(exp(-\gamma t))$ and $\rho(t)=h(\xi(0))+O(exp(-\gamma t))$, where $\gamma{\color{black}>0}$ denotes a convergence {\color{black}rate}. Hence, we have the statement.
\end{proof}

\subsection{Formation control with angle and signed constraints}
In this subsection, we propose a distributed control law with angle and signed constraints. Let us first define several notations.
We denote angle and signed angle constraints by
\begin{align}
a_c(p) &= \left[\ldots, A_h, \ldots \right]^\top, \\
s_c(p) &= \left[\ldots, k\bar{S}_w, \ldots \right]^\top,
\end{align}
and desired vectors composed of angle and signed angle constraints by
\begin{align} 
a_c^* &= \left[\ldots, A^*_h, \ldots \right]^\top, \\
s_c^* &= \left[\ldots, k\bar{S}_w^*, \ldots \right]^\top,
\end{align}
where $k$ is a constant.
When the signed volume constraints   are considered, the vectors for the signed constraints and desired values are given by
\begin{align}
s_c(p) = \left[\ldots, k\bar{V}_\mu, \ldots \right]^\top, \\
s_c^* = \left[\ldots, k\bar{V}_\mu^*, \ldots \right]^\top,
\end{align}
We then define the error as
\begin{equation}
e(p)=\left[a_c(p)^\top, s_c(p)^\top \right]^\top - \left[a_c^{*\top}, s_c^{*\top} \right]^\top.
\end{equation}
Then, the gradient-based control law is defined by
\begin{align}\label{control_law_ang01}
\dot{p}=u &\triangleq
\begin{cases}
-{\overline{R}^s_A}^\top(p) e(p) & \text{in 2D,}\\ 
-{\overline{R}^v_A}^\top(p) e(p) & \text{in 3D}
\end{cases}\nonumber \\
&=-\overline{R}_A^\top(p) e(p)
\end{align}
where $\overline{R}^s_A=\begin{bmatrix}R_W^\top & k R_S^\top\end{bmatrix}^\top$ and $\overline{R}^v_A=\begin{bmatrix}R_W^\top & k R_V^\top\end{bmatrix}^\top$. Note that $\vnull(\overline{R}^s_A)=\vnull(R^s_A)$ and $\vnull(\overline{R}^v_A)=\vnull(R^v_A)$ since $k$ is a constant.

The following theorem is also one of the main results for the proposed formation control in terms of angle and signed constraints.
\begin{theorem}\label{Thm:stability_ang}
Let $\psi$ denote a set {\color{black}of $p$ corresponding to formations similar} to a desired IRAS formation. Then, there exists a neighborhood $\mathcal{B}_{p^*}$ of $p^*$ for any $p^* \in \psi$ such that the initial formation {\color{black}with} $p(0)$ converges to a fixed formation {\color{black} with $p^\dagger \in \psi$} exponentially fast if $p(0) \in \mathcal{B}_{p^*}$. 
\end{theorem}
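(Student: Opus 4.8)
The plan is to follow, essentially verbatim, the center-manifold argument used for the IRDS case in the proof of Theorem~\ref{Thm:stability_dist}, modifying only the eigenvalue bookkeeping to account for the extra scaling freedom that distinguishes IRAS from IRDS formations. First I would set $f(p)=\eqref{control_law_ang01}=-\overline{R}_A^\top(p)e(p)$ and compute its Jacobian at an equilibrium $p^*\in\psi$. Since $e(p^*)=0$, the term carrying $\partial\overline{R}_A^\top(p)/\partial p$ vanishes and we are left with
\begin{align}
H_f(p^*)=-\overline{R}_A^\top(p^*)\overline{R}_A(p^*),
\end{align}
which is symmetric and negative semidefinite, so its spectrum is dictated entirely by $\rank(\overline{R}_A(p^*))$.

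Next I would invoke Theorem~\ref{Thm:Inf_Rank_ang} in $\mathbb{R}^2$ (resp.\ Theorem~\ref{Thm:Inf_Rank_ang_3D} in $\mathbb{R}^3$), together with the identities $\vnull(\overline{R}^s_A)=\vnull(R^s_A)$ and $\vnull(\overline{R}^v_A)=\vnull(R^v_A)$ noted after \eqref{control_law_ang01}. For an IRAS framework the rank equals $2n-4$ in 2D and $3n-7$ in 3D, i.e.\ $dn-d(d+1)/2-1$. Hence $H_f(p^*)$ possesses exactly $dn-d(d+1)/2-1$ strictly negative eigenvalues and $d(d+1)/2+1$ eigenvalues with zero real part, the latter associated with $\vnull(\overline{R}_A(p^*))$, which is spanned by the rigid-body translation generators $\mathds{1}_n\otimes I_d$, the rotation generators, and---crucially, in contrast to the IRDS count---the scaling generator $p$.

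I would then run the center-manifold machinery exactly as before: a Taylor expansion of $f$ about $p^*$, a similarity transformation $Q$ block-diagonalizing $H_f(p^*)$ into a zero block $\bar{H}_0$ and a Hurwitz block $\bar{H}_n$, the change of variables $(\varphi^\top,\rho^\top)^\top=Q\delta$ with $\delta=p-p^*$, and the resulting decoupled system $\dot{\varphi}=g_1(\varphi,\rho)$, $\dot{\rho}=\bar{H}_n\rho+g_2(\varphi,\rho)$ with the Jacobians of $g_1,g_2$ vanishing at the origin. By \cite[Theorem~4]{summers2011control} a $\mathcal{C}^r$ center manifold exists and the reduced flow on it reads $\dot{\xi}=g_1(\xi,h(\xi))$.

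The hard part, and the only step genuinely new relative to the IRDS proof, is to certify that the set $\psi$ of realizations \myemph{similar} to the target is itself a center manifold. I would argue that $\psi$ is a smooth equilibrium manifold of dimension exactly $d(d+1)/2+1$ whose tangent space at $p^*$ coincides with $\vnull(\overline{R}_A(p^*))$, now enlarged by the scaling direction $p$; regularity of $p^*$ is guaranteed precisely by the maximal-rank IRAS condition. Since every point of $\psi$ is an equilibrium, $\bar{H}_n\rho+g_2(\varphi,\rho)=0$ holds along $\psi$, and the implicit function theorem then identifies $\mathcal{M}\cap B_{(\varphi,\rho)}$ locally with $\psi$, giving $\rho=h(\varphi)$ with $h(0)=0$ and $J_h(0)=0$. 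Consequently $\dot{\xi}=0$, and \cite[Theorem~4]{summers2011control} yields $\varphi(t)=\xi(0)+O(\exp(-\gamma t))$ and $\rho(t)=h(\xi(0))+O(\exp(-\gamma t))$ for some $\gamma>0$, so any $p(0)\in\mathcal{B}_{p^*}$ converges exponentially to a fixed $p^\dagger\in\psi$. All the care lies in threading the scaling direction consistently through the eigenvalue count and the manifold dimension; everything else transfers mechanically from Theorem~\ref{Thm:stability_dist}.
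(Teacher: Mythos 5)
Your proposal is correct and takes essentially the same approach as the paper: the paper's own proof consists of the single remark that the theorem ``can be proved in the same way as that of Theorem~\ref{Thm:stability_dist}'', and your argument is exactly that center-manifold proof carried over. The one substantive modification you make---counting $d(d+1)/2+1$ zero eigenvalues and identifying $\psi$ (formations \emph{similar}, not congruent, to the target) as the equilibrium/center manifold, with the scaling direction $p$ adjoined to the null space via Theorems~\ref{Thm:Inf_Rank_ang} and~\ref{Thm:Inf_Rank_ang_3D}---is precisely the bookkeeping the paper leaves implicit, and you have it right.
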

\begin{proof}
This theorem can be proved in the same way as that of Theorem \ref{Thm:stability_dist}, and the details are omitted.
\end{proof}

\section{Simulation examples} \label{Sec:simul}
We provide two examples to validate the proposed controllers studied in Section \ref{Sec:controller}, where the final position of each agent is represented by a symbol $\Box$, and
the desired distance, signed angle and signed volume constraints are denoted by $\norm{z_{ij}^*}^2, (i,j) \in \mathcal{E}$,  $\bar{S}_{ijk}^*, (i,j,k) \in \mathcal{S}$ and $\bar{V}_{ijkl}^*, (i,j,k,l) \in \mathcal{\bar{S}}$, respectively.
The first example is to show that the proposed controller can avoid the flip ambiguity in $\mathbb{R}^{2}$, which is described in \myfig\ref{simulation02}. 
The formations for \myfig\ref{simul02_a} and \myfig\ref{simul02_b} at the initial time are infinitesimally (distance) rigid and IRDS, respectively, and the initial formation shapes for both figures are the same. In particular, three distance constraints, i.e.,$d_{23},d_{34},d_{45}$, in \myfig\ref{simul02_a} are replaced by three signed angle constraints, which depicts the formation in \myfig\ref{simul02_b} for the proposed controller \eqref{control_law_dist01}. Then, as shown in \myfig\ref{simulation02}, if the desired formation is the final formation in \myfig\ref{simul02_b} then one can observe that 
the proposed control system \eqref{control_law_dist01} does not lead to  flip ambiguity whereas the traditional control system \eqref{control_law_origin01} does. Similarly   to the results of  \myfig\ref{simulation02} in $\mathbb{R}^{2}$,
the second example in \myfig\ref{simulation03} with two signed volume constraints also presents formation convergences in $\mathbb{R}^{3}$, where the additional signed volume constraints prevent the overall formation from converging to a flipped formation shape. 
\begin{figure}[]
\centering
\subfigure[Formation control under the controller \eqref{control_law_origin01}. The desired constraints are chosen as $\norm{z_{12}^*}^2=\norm{z_{15}^*}^2=9$, $\norm{z_{23}^*}^2=\norm{z_{45}^*}^2=16$, $\norm{z_{13}^*}^2=\norm{z_{14}^*}^2=25$ and $\norm{z_{34}^*}^2=50$.]{ \label{simul02_a}
\quad \includegraphics[height = 6cm]{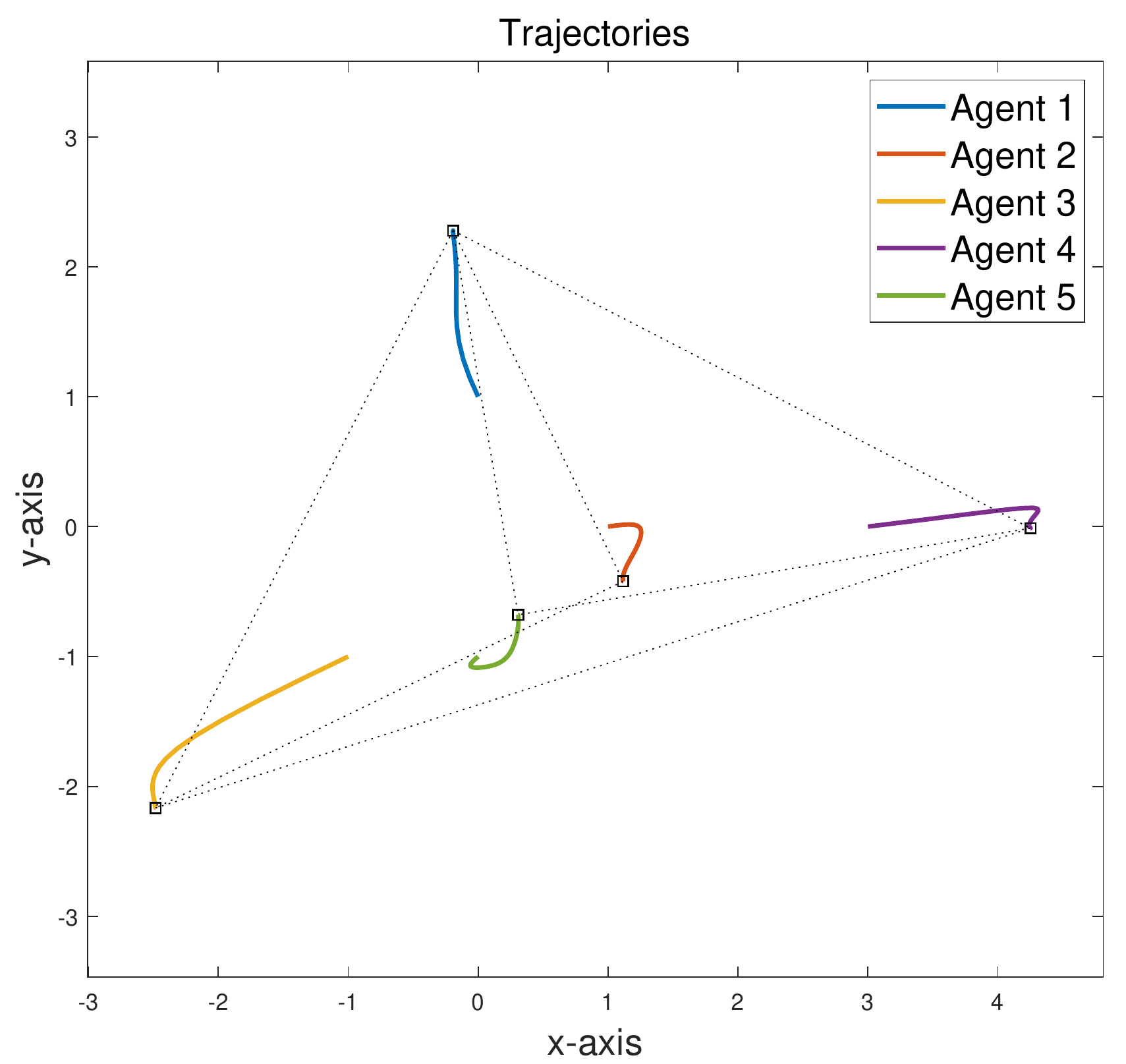} \quad
}\qquad\quad
\subfigure[Formation control under the proposed controller \eqref{control_law_dist01}. The desired constraints are chosen as $\norm{z_{12}^*}^2=\norm{z_{15}^*}^2=9$, $\norm{z_{13}^*}^2=\norm{z_{14}^*}^2=25$ and $\bar{S}_{231}^*=\bar{S}_{134}^*=\bar{S}_{514}^*=\sin\left(\frac{\pi}{2} \right)$, and the coefficient $k$ is chosen by $k=10$.]{ \label{simul02_b}
\quad\quad \includegraphics[height = 6cm]{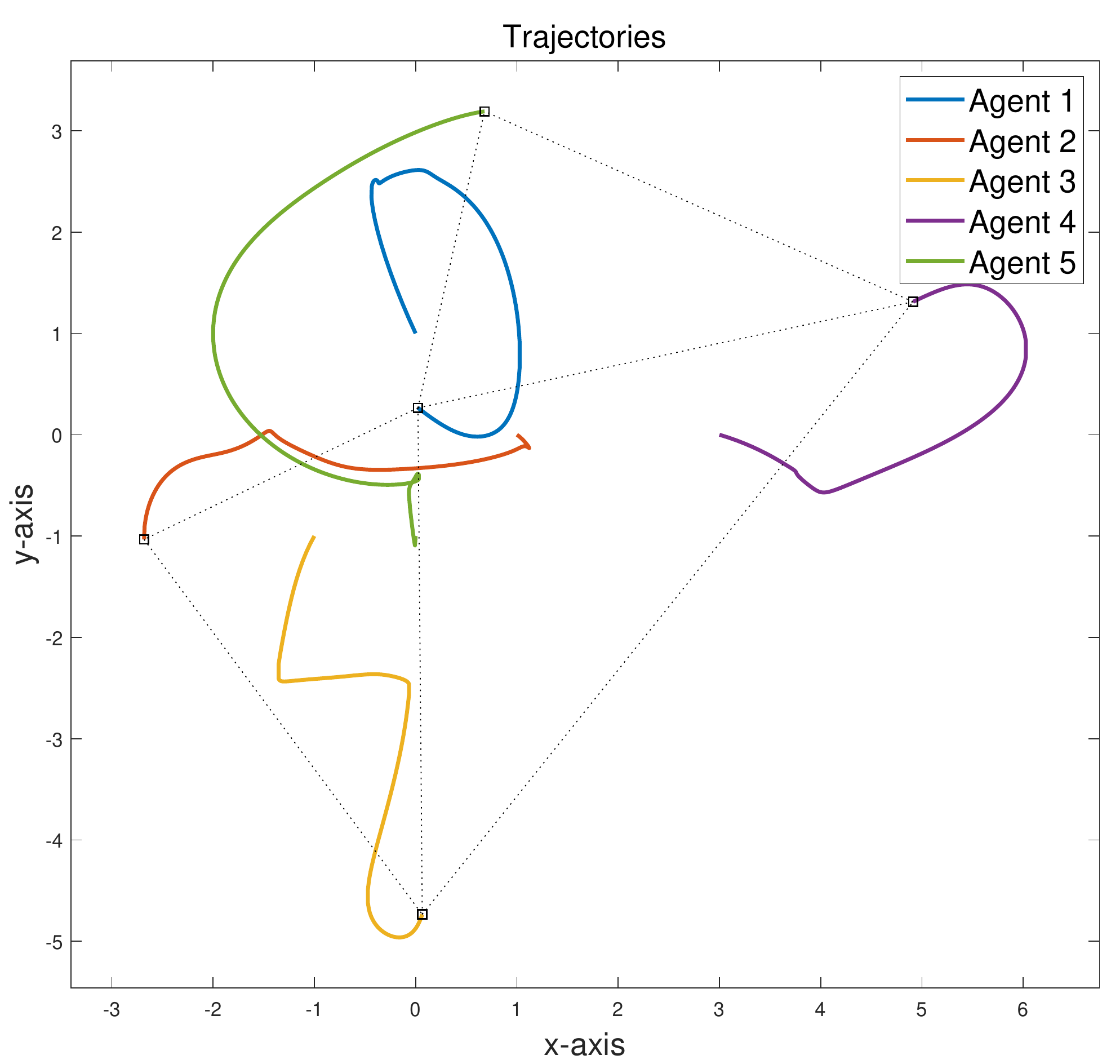} \quad\quad
}
\caption{Numerical simulations in $\mathbb{R}^{2}$ on trajectories of 5 agents from initial formation to final formation.} \label{simulation02}
\end{figure}
\begin{figure}[]
\centering
\subfigure[Formation control under the controller \eqref{control_law_origin01} with $\norm{z_{ij}^*}^2=9,(i,j) \in \mathcal{E}$.]{ \label{simul03_a}
\,\includegraphics[height = 6.5cm]{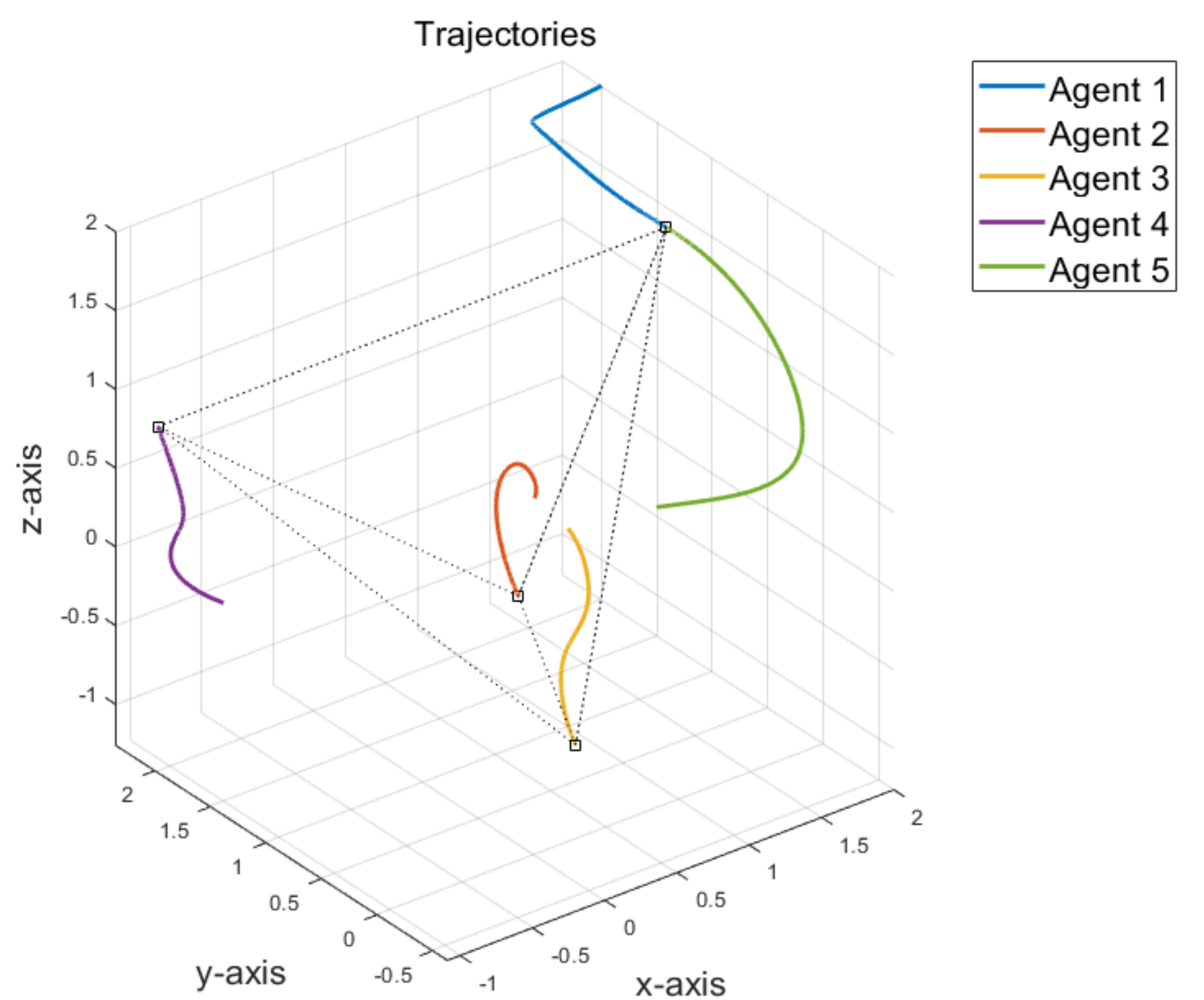}\,
}\qquad
\subfigure[Formation control under the proposed controller \eqref{control_law_dist01} with $\norm{z_{ij}^*}^2=9,(i,j) \in \mathcal{E}$, $\bar{V}_{2134}^*=\sqrt{2}/2$, $\bar{V}_{2534}^*=-\sqrt{2}/2$ and $k=10$.]{ \label{simul03_b}
\,\includegraphics[height = 6.5cm]{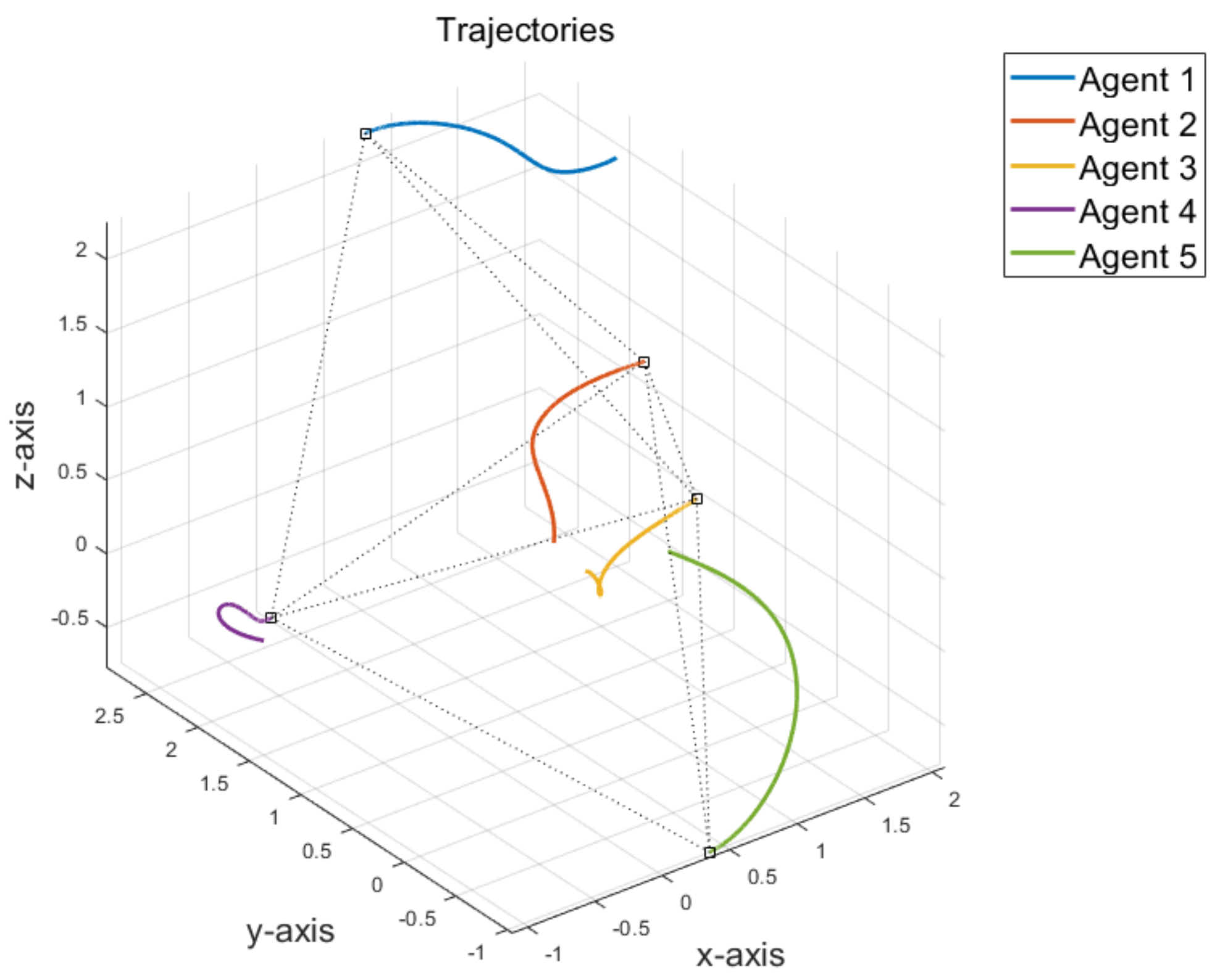} \,
}
\caption{Numerical simulations in $\mathbb{R}^{3}$ on trajectories of 5 agents from initial formation to final formation, where the edges are denoted by the dotted lines and $m_d=9$.} \label{simulation03}
\end{figure}
\section{Conclusions} \label{Sec:conclusions}
This paper develops the concepts of IRDS and IRAS for 2-D and 3-D formations with hybrid distances (or angles) and signed constraints. The signed constraints involve signed normalized area (resp. volume) functions of a given framework in the 2-D (resp. 3-D) space. By characterizing trivial infinitesimal rigid motions in terms of translations, rotations and shape scaling, we provide certain rank conditions for the associated rigidity matrix to determine the IRDS and IRAS property for a given framework. 
{\color{black}
Based on the hybrid rigidity theory, we partially resolve the flip and flex ambiguity with well chosen constraints. In particular, if IRDS or IRAS formations are generated by the signed Henneberg construction, then the ambiguity issues are completely eliminated.
}
We also apply the hybrid rigidity theory   to formation shape control. Distributed gradient-based controllers with inter-agent relative measurements are developed to stabilize   a desired IRDS (or IRAS) formation that achieves a local exponential convergence. Compared to the traditional controllers with only distance constraints {\color{black}or only unsigned angle constraints}, the proposed controllers have the property to eliminate the flip and/or flex ambiguity on $n$-agent formations {\color{black}close to a target formation} in the 2-D/3-D space.
\section{Appendix} \label{Sec:Appendix}
{\color{black}
\begin{lemma}[Jacobian of sine functions]
\label{Lem:Jaco_sine}
Consider the sine function $\left(\frac{{z}_{i}^\top}{\norm{z_{i}}}J\frac{z_{j}}{\norm{z_{j}}}\right)$ with the ordered relative positions $z_{i}$ and $z_{j}$ in $\mathbb{R}^2$. We then have the following results:
\begin{equation}
\frac{\partial}{\partial z_i} \left(\frac{{z}_i^\top}{\norm{z_i}}J\frac{z_j}{\norm{z_j}}\right)= \frac{z^\top_j}{\norm{z_j}}J^\top \frac{P_{z_i}}{\norm{z_i}}.
\end{equation}
and
\begin{equation}
\frac{\partial}{\partial z_j}\left(\frac{{z}_i^\top}{\norm{z_i}}J\frac{z_j}{\norm{z_j}}\right)=\frac{{z}_i^\top}{\norm{z_i}}J  \frac{P_{z_j}}{\norm{z_j}}
\end{equation}
\end{lemma}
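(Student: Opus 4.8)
The plan is to reduce both identities to a single auxiliary fact: the Jacobian of the normalized-vector map $z \mapsto z/\norm{z}$ in $\mathbb{R}^2$ equals $P_z/\norm{z}$, where $P_z = I_2 - \frac{zz^\top}{\norm{z}^2}$ is the orthogonal projection matrix introduced in the preliminaries. First I would establish this fact by direct component-wise differentiation: computing $\frac{\partial}{\partial z_k}\!\left(\frac{z_l}{\norm{z}}\right) = \frac{\delta_{lk}}{\norm{z}} - \frac{z_l z_k}{\norm{z}^3}$ and assembling these entries into the matrix $\frac{1}{\norm{z}}\left(I_2 - \frac{zz^\top}{\norm{z}^2}\right) = \frac{P_z}{\norm{z}}$. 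This single observation is the engine for both derivatives.

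For the derivative with respect to $z_j$, I would treat the row vector $\frac{z_i^\top}{\norm{z_i}}J$ as a constant, since it does not depend on $z_j$. The scalar is then the linear form $w^\top\!\left(\frac{z_j}{\norm{z_j}}\right)$ with $w^\top = \frac{z_i^\top}{\norm{z_i}}J$, and applying the chain rule together with the auxiliary fact gives $w^\top \frac{P_{z_j}}{\norm{z_j}} = \frac{z_i^\top}{\norm{z_i}}J\frac{P_{z_j}}{\norm{z_j}}$, which is exactly the second identity.

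For the derivative with respect to $z_i$, the $z_i$-dependence now sits on the left factor, so I would first transpose the scalar to relocate that factor into the differentiated position. Because $\frac{z_i^\top}{\norm{z_i}}J\frac{z_j}{\norm{z_j}}$ is a scalar, it equals its own transpose, namely $\frac{z_j^\top}{\norm{z_j}}J^\top\frac{z_i}{\norm{z_i}}$. The constant row vector is now $\frac{z_j^\top}{\norm{z_j}}J^\top$, and the same chain-rule argument yields $\frac{z_j^\top}{\norm{z_j}}J^\top\frac{P_{z_i}}{\norm{z_i}}$, matching the first identity and explaining the appearance of $J^\top$.

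The computation is elementary; the only points requiring care are the transposition step that produces $J^\top$ in the first identity, and consistently treating $\frac{\partial}{\partial z}$ of a scalar as a $1\times 2$ row vector so that the dimensions in $\frac{z_j^\top}{\norm{z_j}}J^\top\frac{P_{z_i}}{\norm{z_i}}$ and $\frac{z_i^\top}{\norm{z_i}}J\frac{P_{z_j}}{\norm{z_j}}$ are consistent (a row times two $2\times 2$ matrices, again a row). I do not expect any genuine obstacle.
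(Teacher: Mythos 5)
Your proposal is correct and takes essentially the same route as the paper's own proof: both reduce everything to the single fact that the Jacobian of $z \mapsto z/\norm{z}$ is $P_z/\norm{z}$ and then apply the chain rule, treating the factor not being differentiated as a constant row vector. The only cosmetic difference is that you make the transposition step producing $J^\top$ explicit, whereas the paper performs it implicitly when writing $\frac{\partial}{\partial z_i}\left(\frac{z_i^\top}{\norm{z_i}}J\frac{z_j}{\norm{z_j}}\right)= \frac{z_j^\top}{\norm{z_j}}J^\top \frac{\partial}{\partial z_i}\frac{z_i}{\norm{z_i}}$.
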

\begin{proof} 
The Jacobian of a unit vector $\frac{{z}_i}{\norm{z_i}}$ is calculated as
\begin{equation}
\frac{\partial}{\partial z_i}\frac{{z}_i}{\norm{z_i}}=\frac{1}{\norm{z_i}}\left(I_d - \frac{z_i z_i^\top}{\norm{z_i}\norm{z_i}} \right)=\frac{P_{z_i}}{\norm{z_i}}.
\end{equation}
We also have 
\begin{align}
\frac{\partial}{\partial z_i} \left(\frac{{z}_i^\top}{\norm{z_i}}J\frac{z_j}{\norm{z_j}}\right)&= \frac{z^\top_j}{\norm{z_j}}J^\top \frac{\partial}{\partial z_i} \frac{{z}_i}{\norm{z_i}} \nonumber \\
\frac{\partial}{\partial z_j} \left(\frac{{z}_i^\top}{\norm{z_i}}J\frac{z_j}{\norm{z_j}}\right)&= \frac{z^\top_i}{\norm{z_i}}J \frac{\partial}{\partial z_j} \frac{{z}_j}{\norm{z_j}}.
\end{align}
Therefore, the proof is completed.
\end{proof}
}
\begin{lemma}[Jacobian of cross products]
\label{Lem:Jaco_cross} 
Consider the cross product $\left(\frac{z_j}{\norm{z_j}} \times \frac{z_k}{\norm{z_k}}\right)$ of two non-zero vectors $z_j$ and $z_k$ in $\mathbb{R}^3$. Then the Jacobian formulas of the cross product with respect to $z_j$ and $z_k$ can be calculated, respectively, as
\begin{align}
\frac{\partial}{\partial z_j} \left(\frac{z_j}{\norm{z_j}} \times \frac{z_k}{\norm{z_k}}\right)
=-\xi_{\frac{z_k}{\norm{z_k}}}\frac{\partial}{\partial z_j}\left(\frac{z_j}{\norm{z_j}}\right)
\end{align}
and
\begin{align}
\frac{\partial}{\partial z_k} \left(\frac{z_j}{\norm{z_j}} \times \frac{z_k}{\norm{z_k}}\right)
=\xi_{\frac{z_j}{\norm{z_j}}} \frac{\partial}{\partial z_k}\left(\frac{z_k}{\norm{z_k}}\right).
\end{align}
where $\xi_x$ denotes the cross product matrix given by
$\xi_x= \begin{bmatrix}
     		0 & -x_3 & x_2 \\
     		x_3 & 0 & -x_1 \\
     		-x_2 & x_1 & 0
  \end{bmatrix}$
for a vector $x= \begin{bmatrix}x_1 & x_2&x_3\end{bmatrix}^\top\in\mathbb{R}^{3}$.
\end{lemma}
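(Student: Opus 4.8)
The plan is to reduce the componentwise computation of a cross-product Jacobian to a single application of the chain rule by exploiting the bilinearity of the cross product through its skew-symmetric matrix form. Writing $\hat{z}_j \triangleq \frac{z_j}{\norm{z_j}}$ and $\hat{z}_k \triangleq \frac{z_k}{\norm{z_k}}$, I would first recall the standard identity $a \times b = \xi_a b = -\xi_b a$ for any $a,b \in \mathbb{R}^3$, which follows directly from the definition of the cross-product matrix $\xi_{(\cdot)}$ given in the statement together with its skew-symmetry $\xi_a^\top = -\xi_a$. The key observation is that this identity lets me isolate whichever argument I am differentiating against while treating the other as a constant multiplicative matrix factor.

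For the derivative with respect to $z_j$, I would rewrite $\hat{z}_j \times \hat{z}_k = -\xi_{\hat{z}_k}\hat{z}_j$. Since $\xi_{\hat{z}_k}$ depends only on $z_k$ and is therefore constant in $z_j$, the map $z_j \mapsto -\xi_{\hat{z}_k}\hat{z}_j$ is the composition of the fixed linear map $-\xi_{\hat{z}_k}$ with the vector-valued function $\hat{z}_j(z_j)$. The chain rule then yields $\frac{\partial}{\partial z_j}\left(\hat{z}_j \times \hat{z}_k\right) = -\xi_{\hat{z}_k}\frac{\partial}{\partial z_j}\hat{z}_j$, which is exactly the first claimed formula.

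The derivative with respect to $z_k$ is handled symmetrically: using instead the representation $\hat{z}_j \times \hat{z}_k = \xi_{\hat{z}_j}\hat{z}_k$ and noting that $\xi_{\hat{z}_j}$ is constant in $z_k$, the chain rule gives $\frac{\partial}{\partial z_k}\left(\hat{z}_j \times \hat{z}_k\right) = \xi_{\hat{z}_j}\frac{\partial}{\partial z_k}\hat{z}_k$, which is the second formula. If desired, one may substitute $\frac{\partial}{\partial z_j}\hat{z}_j = \frac{P_{z_j}}{\norm{z_j}}$ from Lemma~\ref{Lem:Jaco_sine}, but the statement is already in the required form.

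I do not expect a genuine obstacle here; the only point requiring care is the algebraic verification of the identity $a \times b = \xi_a b = -\xi_b a$ and, in each case, the justification that the skew matrix of the non-differentiated argument is a constant with respect to the differentiation variable, so that it passes outside the Jacobian unchanged. Everything else is a direct invocation of the chain rule for a linear-times-nonlinear composition.
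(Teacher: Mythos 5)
Your proposal is correct and matches the paper's own proof essentially verbatim: both rewrite the cross product as $-\xi_{\hat{z}_k}\hat{z}_j$ (respectively $\xi_{\hat{z}_j}\hat{z}_k$) so that the skew matrix of the non-differentiated argument is a constant linear factor, and then pull it out of the derivative. No gaps; the extra remark on skew-symmetry and the explicit chain-rule framing are harmless additions.
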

\begin{proof} 
First note that the cross product $\left(\frac{z_j}{\norm{z_j}} \times \frac{z_k}{\norm{z_k}}\right)$ can be expressed with the cross product matrix as follows
\begin{align}
\frac{z_j}{\norm{z_j}} \times \frac{z_k}{\norm{z_k}}
=-\xi_{\frac{z_k}{\norm{z_k}}}\frac{z_j}{\norm{z_j}}
\end{align}
or
\begin{align}\frac{z_j}{\norm{z_j}} \times \frac{z_k}{\norm{z_k}}
=\xi_{\frac{z_j}{\norm{z_j}}} \frac{z_k}{\norm{z_k}}.
\end{align}
Thus, we can have the following results.
\begin{align}
\frac{\partial}{\partial z_j} \left(\frac{z_j}{\norm{z_j}} \times \frac{z_k}{\norm{z_k}}\right)
&=-\frac{\partial}{\partial z_j}\left(\xi_{\frac{z_k}{\norm{z_k}}}\frac{z_j}{\norm{z_j}}\right)\nonumber\\
&=-\xi_{\frac{z_k}{\norm{z_k}}}\frac{\partial}{\partial z_j}\left(\frac{z_j}{\norm{z_j}}\right)
\end{align}
and
\begin{align}
\frac{\partial}{\partial z_k} \left(\frac{z_j}{\norm{z_j}} \times \frac{z_k}{\norm{z_k}}\right)
&=\frac{\partial}{\partial z_k}\left(\xi_{\frac{z_j}{\norm{z_j}}} \frac{z_k}{\norm{z_k}}\right)\nonumber\\
&=\xi_{\frac{z_j}{\norm{z_j}}} \frac{\partial}{\partial z_k}\left(\frac{z_k}{\norm{z_k}}\right).
\end{align}
The proof is completed.
\end{proof}

\addtolength{\textheight}{-12cm}   
\bibliographystyle{IEEEtran}
\bibliography{rigidity2018}
\end{document}